\definecolor{hp}{rgb}{.6,0.7,.5}
\definecolor{earl}{rgb}{.7,0.1,.9}
\definecolor{jrscol}{rgb}{0.5,0.07,0.55}
\newcommand{\hakop}[1]{\textcolor{black}{#1}} 
\newcommand{\br}[1]{{\color{black} #1}}
\newcommand{\jrs}[1]{\textcolor{black}{#1}}
\newcommand{\prx}[1]{{\color{black} #1}}
\let\br\relax
\let\hakop\relax
\newtheorem{theorem}{Theorem}
\newtheorem{defin}{Definition}
\newtheorem{lem}[theorem]{Lemma}
\newtheorem{corollary}{Corollary}
\newtheorem{proposition}[theorem]{Proposition}
\newcommand{\thistheoremname}{}
\newtheorem*{genericthm}{\thistheoremname}
\newenvironment{namedthm}[1]
  {\renewcommand{\thistheoremname}{#1}%
   \begin{genericthm}}
  {\end{genericthm}}
\newcommand{\abs}[1]{\ensuremath{\left\vert#1\right\vert}}
\newcommand{\ket}[1]{\ensuremath{\vert#1\rangle}}
\newcommand{\bra}[1]{\ensuremath{\langle #1\vert}}
\newcommand{\bk}[2]{\ensuremath{\langle #1\vert #2\rangle}}
\newcommand{\kb}[2]{\ensuremath{\vert #1 \rangle\! \langle #2 \vert}}
\newcommand{\kbself}[1]{\kb{#1}{#1}}
\newcommand{\op}[1]{\kbself{#1}}
\newcommand{\chan}{\ensuremath{\mathcal{E}}}
\newcommand{\braket}[2]{\bk{#1}{#2}}
\newcommand{\braketself}[1]{\bk{#1}{#1}}
\newcommand{\bkself}[1]{\bk{#1}{#1}}
\newcommand{\expected}[1]{\mathbb{E}[#1]}
\newcommand{\VarOmega}{\mathrm{Var}[\braketself{\Omega}]}
\let\bar\overline
\renewcommand{\vec}[1]{\ensuremath{\mathbf{#1}}}
\newcommand{\re}{\ensuremath{\mathrm{Re}}}
\newcommand{\onenorm}[1]{\ensuremath{\| #1 \|_1}}
\newcommand{\norm}[1]{\ensuremath{\| #1 \|}}
\newcommand{\algorithmicbreak}{\textbf{break}}
\newcommand{\BREAK}{\State \algorithmicbreak}
\newcommand{\Psim}{P_{\mathrm{sim}}}
\newcommand{\Pex}{P_{\mathrm{ex}}}
\newcommand{\poly}{\mathrm{poly}}
\def\id{\mbox{\small 1} \!\! \mbox{1}}
\def\id{\mbox{\small 1} \!\! \mbox{1}}
\def\id{{\mathchoice {\rm 1\mskip-4mu l} {\rm 1\mskip-4mu l} {\rm 1\mskip-4.5mu l} {\rm 1\mskip-5mu l}}}
\newcommand{\Tr}{\ensuremath{\mathrm{Tr}}}
\def\thickhline{%
  \noalign{\ifnum0=`}\fi\hrule \@height \thickarrayrulewidth \futurelet
   \reserved@a\@xthickhline}
\def\@xthickhline{\ifx\reserved@a\thickhline
               \vskip\doublerulesep
               \vskip-\thickarrayrulewidth
             \fi
      \ifnum0=`{\fi}}
\newlength{\thickarrayrulewidth}
\newcolumntype{P}[1]{>{\centering\arraybackslash}p{#1}}
\begin{document}
\title{Quantifying quantum speedups: improved classical simulation from tighter magic monotones}
\date{\today}

\author{James R. Seddon}
 \email{jseddonquantum@gmail.com}
\affiliation{Department of Physics and Astronomy, University College London, London, United Kingdom}

\author{Bartosz Regula}
 \email{bartosz.regula@gmail.com}
\affiliation{School of Physical and Mathematical Sciences, Nanyang Technological University, 637371, Singapore}

\author{Hakop Pashayan}
\affiliation{Institute for Quantum Computing and Department of Combinatorics and Optimization, University of Waterloo, ON, N2L 3G1 Canada}
\affiliation{Perimeter Institute for Theoretical Physics, Waterloo, ON, N2L 2Y5 Canada}
\affiliation{Centre for Engineered Quantum Systems, School of Physics, The University of Sydney, Sydney, NSW 2006, Australia}
\author{Yingkai Ouyang}
\affiliation{Department of Physics \& Astronomy, University of Sheffield, Sheffield, S3 7RH, United Kingdom}

\author{Earl T.\ Campbell}
 \email{earltcampbell@gmail.com}
\affiliation{Department of Physics \& Astronomy, University of Sheffield, Sheffield, S3 7RH, United Kingdom}
\affiliation{AWS Center for Quantum Computing, Pasadena, California 91125, USA}

\begin{abstract}

\prx{Consumption of magic states promotes the stabilizer model of computation to universal quantum computation. Here, we propose three different classical algorithms for simulating such universal quantum circuits, and characterize them by establishing precise connections with a family of magic monotones. Our first simulator introduces a new class of quasiprobability distributions and connects its runtime to a generalized notion of negativity. We prove that this algorithm has significantly improved exponential scaling compared to all prior quasiprobability simulators for qubits. Our second simulator is a new variant of the stabilizer-rank simulation algorithm, extended to work with mixed states and with significantly improved runtime bounds. Our third simulator trades precision for speed by discarding negative quasiprobabilities. We connect each algorithm's performance to a corresponding magic monotone and, by comprehensively characterizing the monotones, we obtain a precise understanding of the simulation runtime and error bounds. Our analysis reveals a deep connection between all three seemingly unrelated simulation techniques and their associated monotones. For tensor products of single-qubit states, we prove that our monotones are all equal to each other, multiplicative and efficiently computable, allowing us to make clear-cut comparisons of the simulators' performance scaling. Furthermore, our monotones establish several asymptotic and non-asymptotic bounds on state interconversion and distillation rates. Beyond the theory of magic states, our classical simulators can be adapted to other resource theories under certain axioms, which we demonstrate through an explicit application to the theory of quantum coherence.}

\end{abstract}

\maketitle

%%%%%%%%%%%%%%%%%%%%%%%%%%%%%%%%%%%%%%%%%%%%%%%%%%%%%%%%%%%%%%%%%%%%%%%%%%%%%%%%%%%%%%%%%%%%%%%%%%%%%%%%%%
%\tableofcontents

\section{Introduction}

\prx{%
Classical simulation of quantum systems has a long and fruitful history. 
Insurmountable obstructions to the classical simulation of quantum systems gave birth to the field of quantum computation~\cite{feynman1982simulating} and the search for quantum computational resources. Despite the computational limitations of classical simulation, surprisingly powerful classical simulators have since been discovered including simulators of stabilizer circuits~\cite{gottesman1998theory,aaronson04improved}, fermionic
linear optics/matchgates~\cite{valiant2002quantum, terhal2002classical, jozsa2008matchgates, Brod2016efficient} and others~\cite{Shi2006tensor, schwarz2013simulating, oszmaniec2018classical, bremner2017achieving,  nest2009simulating, deraedt2019massively}. 
%cites {nest2008classical, nest2012efficient}
Improvement and characterization of classical simulation algorithms helps benchmark the computational speedups that quantum computers can provide and also provides tools useful in their own right~\cite{google2019supremacy, OakRidge17, temme2017error}.

Stabilizer circuits are initialised in so-called stabilizer states and evolved by stabilizer operations, such that the system stays in a stabilizer state throughout the whole computation.  These circuits are important in fault-tolerant quantum computation and can be efficiently classically simulated by virtue of the Gottesman-Knill theorem~\cite{gottesman1998theory}.  An elegant extension of stabilizer circuits enables them to perform universal computation by allowing the input states to include so-called \emph{magic states}~\cite{bravyi2005universal,campbell2017roads}. Aaronson and Gottesman~\cite{aaronson04improved} showed how to classically simulate such circuits with a runtime that scales exponentially with the number of input magic state qubits, yet still scales efficiently with respect to the number of stabilizer state qubits.  Consequently, we can perform an efficient classical simulation for any class of circuits that is \textit{nearly-stabilizer} in the sense that they use only logarithmically many input magic state qubits.  Subsequent developments showed that the difficulty of simulating a quantum circuit depends not only on the number of magic state inputs, but also on the type of magic that these states possess.

In the pursuit of faster classical simulation of nearly-stabilizer circuits, two leading approaches have emerged: quasiprobability~\cite{Stahlke14, pashayan15, Howard17robustness,OakRidge17,Seddon19} and stabilizer rank--based \cite{Bravyi16stabRank,bravyi2016improved,bravyi2018simulation,qassim2019clifford} simulators.  These simulators all have their runtime determined by a function called a \textit{magic monotone} that quantifies how far the magic states deviate from the set of stabilizer states.  With these modern simulators, even a very large number of magic state inputs is classically tractable, provided the magic states are close enough to stabilizer states, as quantified by the relevant magic monotone. However, different simulators come with their own magic monotone and therefore different runtime scalings. So far, no overarching study has precisely compared the runtimes and monotones for different stabilizer simulators. The difficulty of comparison is exacerbated since some monotones are not easily calculated.  We will next review these simulation methods, before stating our main results that further sharpen the performance of modern simulators and reveal a cohesive picture of a previously fragmented landscape of simulators.

\subsection{Review of prior art}

Quasiprobability simulators work by representing the target quantum state by an operator probabilistically chosen from a discrete set known as a \emph{frame}~\cite{pashayan15,ferrie2008frame}. Examples of relevant frames include the set of density operators corresponding to pure stabilizer states~\cite{Howard17robustness}, the set of Pauli operators~\cite{Rall2019}, and the set of phase point operators~\cite{pashayan15} used in the construction of the discrete Wigner function~\cite{Leonhardt96, gross2006hudson}. Importantly, given a choice of a classically simulable frame, any input state which is a convex combination of frame elements admits an efficient classical simulation algorithm~\cite{veitch_2012-1}. In Ref.~\cite{pashayan15}, Pashayan \textit{et al.}\ showed that when the input state is a non-convex linear combination of frame elements, the only source of inefficiency in the runtime of quasiprobabilistic algorithms is given by the \emph{negativity} of the state --- a frame-dependent quantity which measures the degree of departure from convex mixtures of frame elements.

Quantum systems consisting of odd-dimensional subsystems (qudits)~\cite{campbell2012magic,anwar2014fast,campbell2014enhanced} admit an especially natural choice of frame. Here, the frame can be fixed to a set of phase-point operators for which the convex combinations of frame elements are the states with a positive discrete Wigner function~\cite{Leonhardt96,gross2006hudson}. All qudit stabilizer states have a positive Wigner function which leads to efficient, classical simulation of qudit stabilizer circuits~\cite{veitch_2012-1}. The negativity under this choice of frame was shown in Ref.~\cite{pashayan15} to correspond to the \emph{mana} $\mathcal{M}(\cdot)$ --- a magic monotone introduced in Ref.~\cite{veitch2014resource}. Notably, the mana has the convenient property that it is multiplicative~\footnote{Or equivalently additive ($\mathcal{M}(\rho \otimes \sigma)=\mathcal{M}(\rho)+ \mathcal{M}( \sigma)$) after taking the logarithm.} i.e. $\mathcal{M}(\rho \otimes \sigma)=\mathcal{M}(\rho) \mathcal{M}( \sigma)$. Computations of the mana in large dimensions are generally extremely difficult, but, due to multiplicativity, they are significantly simplified for products of states on smaller systems. Multiplicativity of operationally meaningful monotones allows for an easy evaluation of related quantities, such as a simulator's runtime or bounds on asymptotic rates of state conversion.

Curiously, for the fundamentally important case of qubits, phase-point operator frames do not possess many of the aforementioned desirable properties. A straightforward application of techniques that work for qudits yields a Wigner function that can be negative for some pure stabilizer states. Although alternative ways of defining a well-behaved Wigner function for qubits are possible, they always~\cite{mansfield2018quantum} suffer from drawbacks such as: the free operations and states being restricted to a subclass of the usual free operations~\cite{delfosse2015wigner,Raussendorf17}; or the monotones being super-multiplicative and the set of positively represented states not being closed under tensor product~\cite{raussendorf2019phase}.
Quasiprobability simulators based on qubit phase-point operator frames inherit these limitations, prompting alternative approaches.
% EARL: above paragraph could possibly be made shorter

In Ref.~\cite{Howard17robustness} Howard and Campbell presented a quasiprobability simulator for qubits using a frame composed of projectors onto pure stabilizer states. They showed that this gives rise to a classical simulation algorithm with a runtime linked to a magic monotone called the \emph{robustness of magic}. % The algorithm's runtime is (up to polynomially scaling factors) proportional to the square of the robustness.
It is a qubit-based simulator that permits and utilizes the simulation of noisy inputs and operations, and possesses many desirable traits. However, presently, quasiprobability simulators are slower than stabilizer rank simulators; additionally, the robustness of magic is non-multiplicative and extremely difficult to compute, even in the asymptotic regime for products of relevant single-qubit states~\cite{bravyi2005universal, heinrich2018robustness}.

A seemingly independent line of work on classical simulation was introduced in Ref.~\cite{Bravyi16stabRank} with the stabilizer rank--based simulators~\cite{Bravyi16stabRank,Bravyi16stabRank,bravyi2018simulation,kocia2018stationary,huang2019approximate,kocia2020improved}. These simulators achieve a stronger notion of simulation~\cite{Pashayan2020fromestimation} by approximately sampling from the output distribution of the quantum circuit. However, they can only simulate pure states and operations and have not previously been generalized to noisy quantum circuits.
Stabilizer rank simulators represent the initial quantum state \jrs{vector} as a \jrs{superposition} of stabilizer states, and the only source of inefficiency in runtime is introduced by the exponential number of terms required to represent states in this way --- the minimal number of such terms being precisely the stabilizer rank. The original algorithm had a runtime quadratic in the stabilizer rank, but this was later improved by the development of fast norm estimation~\cite{bravyi2016improved} that provides a runtime linear in stabilizer rank}. This has resulted in a sizable runtime advantage for stabilizer rank simulators, and it is currently unclear if a similar improvement in quasiprobabilistic methods is possible. To circumvent the difficulty in computing the stabilizer rank as well as its non-multiplicative behavior,
Ref.~\cite{bravyi2016improved} also introduced the notion of approximate stabilizer rank, which was later related to a monotone called the \emph{stabilizer extent}~\cite{bravyi2018simulation}. While the extent is in general not multiplicative~\cite{heimendahl2020stabilizer}, it is multiplicative on any tensor product of one-, two-, and three-qubit states~\cite{bravyi2018simulation}.
Unfortunately, these concepts only apply to pure states, and no mixed-state simulation method based on the stabilizer rank has been devised thus far.
%end \prx

\renewcommand{\arraystretch}{1.5}

 \begin{table*}[t]
 \prx{%
%%%% comment out this part for no pictures
  \begin{minipage}[t]{0.28\linewidth} 
\hspace{-3cm}
\includegraphics[width=4cm]{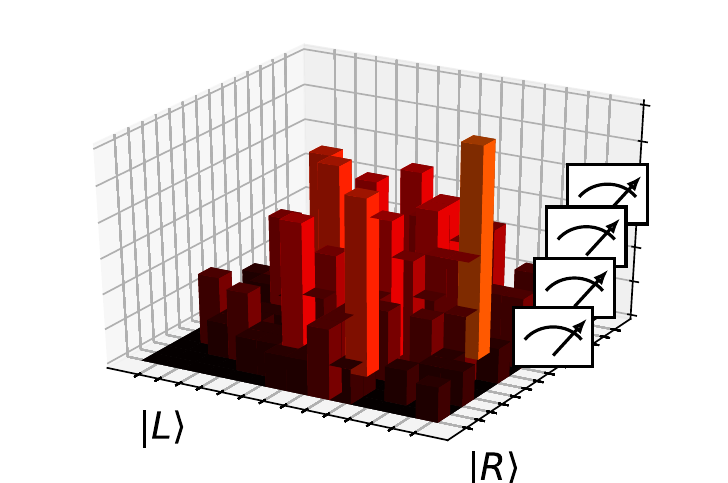} 

\hspace{-3cm}
\centering\noindent{\scriptsize Dyadic frame simulator}
\end{minipage}%
\begin{minipage}[t]{0.28\linewidth} 
\hspace{-1.5cm}
\includegraphics[width=4cm]{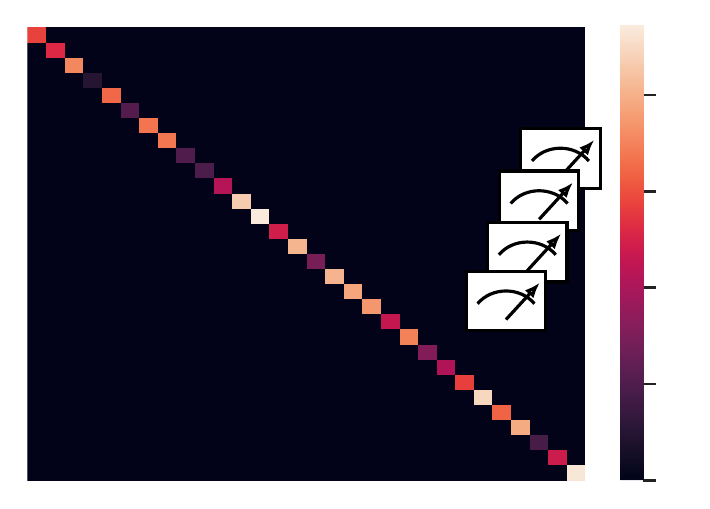} 

\hspace{-1.5cm}
\centering\noindent{\scriptsize Density-operator stabilizer rank simulator}
\end{minipage}%
\begin{minipage}[t]{0.2\linewidth} 
\includegraphics[width=4cm]{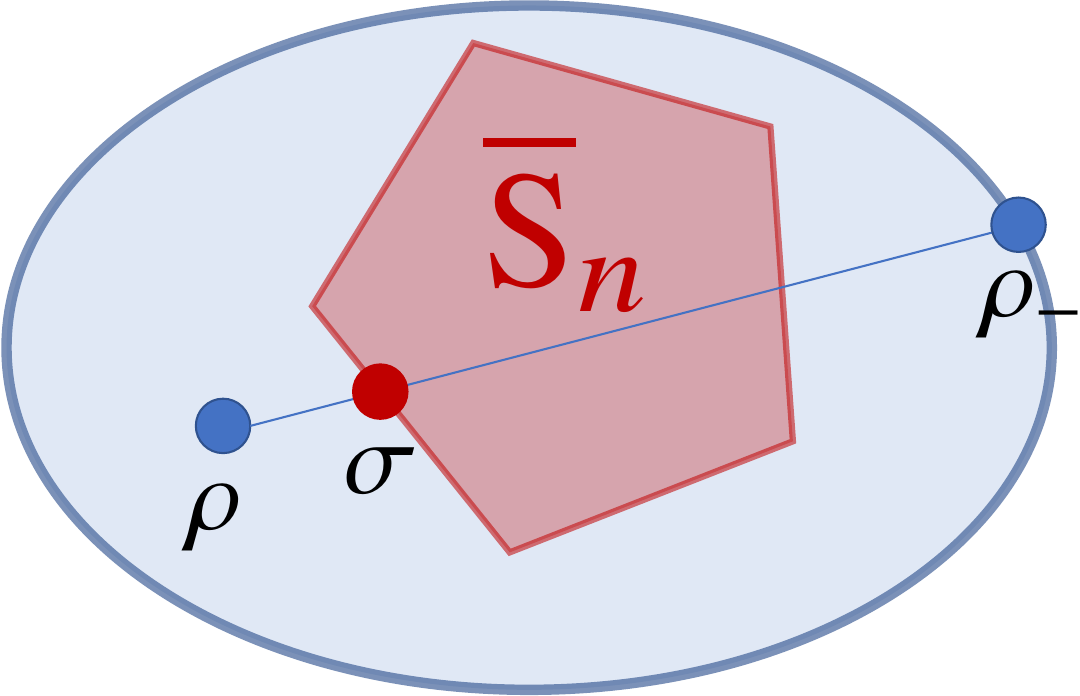} 

\centering\noindent{\scriptsize \quad Constrained path simulator}
\end{minipage}\\[.5\baselineskip]
%%%%

\begin{tabular}{p{3.7cm}|p{3.85cm}|P{2cm}|P{2.55cm}|P{3.8cm}}
\thickhline
\textbf{Classical Simulator} & \centering \textbf{Output} & \textbf{Associated Monotone}  & \textbf{Method} & \textbf{Runtime/(${\rm poly}(n)\log (p_{\mathrm{fail}}^{-1})$)} \\
\thickhline
{\bf Dyadic frame simulator}

(Sec.~\ref{sec:simulate1}) 
&  Estimate of a single Born rule probability (or Pauli observable) within additive error $\epsilon$, for arbitrary $\epsilon>0$. & $\quad\ \Lambda$\newline  Dyadic negativity & Sample dyads from quasi-probability distribution. &  $\mathcal{O}(\Lambda(\rho)^2 \epsilon^{-2})$ \\
\hline
 {\bf Density-operator stabilizer rank simulator }
 
 (Sec.~\ref{sec:simulate2}) \newline    & A bit string sampled from a distribution $\delta$-close in the 1-norm to the quantum distribution, for arbitrary $\delta>0$.
 \newline &  $\quad\ \Xi$ \newline Mixed-state extent
  & Samples pure states from an ensemble then sparsifies their stabilizer rank representation. & $ \mathcal{O}(\Xi(\rho) \delta^{-3} ) \,\,\text{when}\,\,\delta \geq \delta_c, $  \newline
  $\mathcal{O}(\Xi(\rho) \delta^{-4} ) \,\,\text{when}\,\, \delta < \delta_c$\\
\hline 
 {\bf Constrained path simulator}

 (Sec.~\ref{sec:simulate3}) \newline   
  &Estimate of a single Born rule probability (or Pauli observable), with estimation error $\Delta = \mathcal O((\Lambda^+(\rho)-1))$. 
 &$\quad\ \Lambda^+$
  \newline
  Generalized robustness & Approximates a mixed magic state $\rho$ with a single mixed stabilizer state $\sigma$. & $     \mathcal O(1)$\\
\hline
\end{tabular}
\caption{A summary of the properties of our three classical simulation algorithms and their connections with magic monotones.
 Here, $p_{\mathrm{fail}}$ denotes the failure probability of the associated algorithm.}
 \label{table:simulators}
 }%end \prx
\end{table*}

\prx{%
\subsection{Summary of results}

In this paper, we present three new classical simulation algorithms  (overviewed in Table~\ref{table:simulators})} which we call the \emph{dyadic frame}, the \emph{\jrs{density-operator} stabilizer rank} and the \emph{constrained path} simulators. The algorithms allow classical simulation of general noisy stabilizer circuits with mixed magic-state inputs, providing a significant extension of the capabilities of previous approaches, and revealing connections between stabilizer rank and quasiprobability-based simulation. The dyadic frame and constrained path simulators produce additive precision estimates of Born rule probabilities \jrs{and Pauli observables}, while the \jrs{density-operator} stabilizer rank simulator approximately samples from the quantum circuit's measurement outcome distribution. 
%\jrs{Crucially, the performance of our simulators is connected with a class of magic monotones which we characterize in detail.} 
Our first two simulators trade off quantum computational resources for additional runtime of classical simulation. The constrained path simulator, on the other hand, is always efficient in runtime, instead reducing in accuracy as the simulated quantum circuits increase in magic.

Our dyadic frame simulator is a new state-of-the-art quasiprobability simulator for qubits. Instead of sampling from stabilizer states or phase-point operators, we sample from objects we call stabilizer dyads.  We show the corresponding resource monotone is smaller than the robustness of magic, leading to faster simulation runtimes. This can lead to a significantly improved exponent in the exponential scaling of the simulator's performance: for instance, for $n$ copies of a $T$ state the dyadic simulator has a runtime $\mathcal{O}(4^{0.228443 n})$, whereas for simulators based on the robustness of magic~\cite{Howard17robustness,Seddon19} the runtime is lower bounded by $\Omega(4^{0.271553 n})$.

Our stabilizer rank simulator is a new state-of-the-art simulator for sampling from qubit-based quantum circuits with three key technical contributions. First, our work generalizes the stabilizer rank simulator of Refs.~\cite{bravyi2016improved,bravyi2018simulation} from pure states to general mixed states. This allows our classical simulator to operate in and be directly comparable to more experimentally relevant regimes, where the input magic states are noisy. The natural generalization to mixed states produces a simulator with a probabilistic runtime. Second, we show this runtime can be made deterministic for an important subset of magic states. Third, we substantially improve runtimes by exploiting tighter proof techniques available in the density operator picture. Remarkably, this density operator technique is applicable and advantageous even when simulating pure states.

We show that each of our simulators --- the dyadic frame, the \jrs{density-operator} stabilizer rank and the constrained path simulator --- is associated with a particular magic monotone, which we call the \emph{dyadic negativity}, the \emph{mixed-state extent}, and the \emph{generalized robustness}, respectively.  Specifically, we show that the runtime (in the case of the dyadic frame and \jrs{density-operator} stabilizer rank simulators) or the precision (in the case of the constrained path simulator) of the algorithms directly relates to the corresponding magic monotone. This identifies the exponential growth of magic as the only source of inefficiency in these simulators.
Crucially, we completely characterize these monotones for single-qubit states and tensor products thereof, where we prove the unexpectedly strong result that these monotones are all equal and act multiplicatively. The multiplicativity of the monotones is the first result of this type for general qubit magic monotones, and the equality between all three monotones reveals a deep and precisely quantified connection between the runtimes of stabilizer rank and quasiprobability simulators. To the best of our knowledge, no previous work has established a quantitative connection between these, \textit{a priori} very different, classes of simulators. All of the monotones reduce to the stabilizer extent for pure states, and so they can all be considered as generalizations of the extent to mixed states.
% Beyond classical simulation algorithm, this work endows an operational significance to these previously studied~\cite{} magic monotones.  
In addition to serving as an important contribution to magic theory and tightly characterizing the resource consumption of our simulators, we use the monotones to introduce computable bounds on the asymptotic and non-asymptotic rates for magic state distillation. For some example distillation tasks, we compare our bounds to other recent results~\cite{fang_2019-1} and find they are much tighter across a wide parameter regime.

\jrs{Classical simulation of quantum systems has been studied within various contexts other than magic theory  ~\cite{bartlett_2002,jozsa2008matchgates,veitch_2012-1,mari_2012-1,pashayan15}, but to our knowledge none of these approaches have been adapted to the umbrella of quantum resource theories~\cite{chitambar_2019}.} We provide a comprehensive recipe to apply our methodology to general quantum resources. We thus establish connections between a family of resource monotones and simulation tasks, shedding light on classical simulation algorithms in broader settings. 
For instance, in the resource theory of quantum coherence~\cite{baumgratz_2014,streltsov_2017}, the $\ell_1$-norm of coherence is a fundamental quantifier of this resource but lacks an operational meaning. Our results fill this gap by showing that the $\ell_1$-norm of coherence quantifies the runtime of classical simulation within this theory.

This article is structured as follows. In Sec.~\ref{sec:magicmonotones} we introduce the setting of magic theory and our family of monotones. Sec.~\ref{sec:magicmonotones} also discusses how the monotones connect with our simulation algorithms, providing a statement of our main theorems. In Sec.~\ref{sec:single} we present a complete picture of how our monotones compare for single-qubit states by showing that they are all equal. The equality between monotones is then extended in Sec~\ref{sec:Multi} to tensor product states, where we show that the monotones are strongly multiplicative. In Sec.~\ref{SecCompareROM} we compare our new monotones with the robustness of magic and show that they can be exponentially smaller in magnitude.  In Sec.~\ref{sec:distill}, we discuss how the monotones can be used to bound the performance of magic state distillation protocols. Sec.~\ref{sec:simulate} contains a complete discussion of our simulation algorithms --- we focus on providing an intuitive picture through illustrative examples and sketches of the main proof ideas, with the full technical details deferred to the Appendix. We conclude in Sec.~\ref{sec:otherTheories} with a discussion of our underlying assumptions and the extension of our results to resource theories beyond magic.
%end \prx
%%%%%%%%%%%%%%%%%%%%%%%%%%%%%%%%%%%%%%%%%%%%%%%%%%%%%%%%%%%%%%%%%%
%%%%%%%%%%%%%%%%%%%%%%%%%%%%%%%%%%%%%%%%%%%%%%%%%%%%%%%%%%%%%%%%%%
%%%%%%%%%%%%%%%%%%%%%%%%%%%%%%%%%%%%%%%%%%%%%%%%%%%%%%%%%%%%%%%%%%
%%%%%%%%%%%%%%%%%%%%%%%%%%%%%%%%%%%%%%%%%%%%%%%%%%%%%%%%%%%%%%%%%%

\section{Preliminaries}\label{sec:magicmonotones}
\prx{%
\subsection{The stabilizer formalism}

Here, we briefly review the stabilizer formalism.  The single-qubit Pauli group $\mathcal{P}_1$ contains the identity matrix $\id$, the Pauli spin matrices $X$, $Y$, and $Z$, as well as their products with $\pm i \id$.  We say a pure, single-qubit state is a stabilizer state if there exists a Pauli operator $P \in \mathcal{P}_1$, such that $P \ket{\psi} = \ket{\psi}$ \jrs{and $P\neq \id$}. There are six such states:
\begin{align}
    Z \ket{0} & = \ket{0} , \\ \nonumber
    (-Z) \ket{1} & = \ket{1} , \\ \nonumber
    \pm X \ket{\pm} & = \ket{\hakop{\pm}} := (\ket{0}\pm \ket{1}) / \sqrt{2} ,\\ \nonumber
   \pm Y \ket{\pm_i} & = \ket{\hakop{\pm}_i} := (\ket{0}\pm i\ket{1}) / \sqrt{2} .
\end{align}
The $n$-qubit Pauli group $\mathcal{P}_n$ is the group generated by tensor-products of $n$ Pauli operators.  We say a pure, $n$-qubit state is a stabilizer state if there exists an Abelian subgroup of the Pauli group $\mathcal{S} \subset \mathcal{P}_n$ containing $2^n$ elements such that $S \ket{\psi} = \ket{\psi}$ for all $S \in \mathcal{S}$.  The group $\mathcal{S}$ is called the stabilizer group of the state $\ket{\psi}$, and that $\mathcal{S}$ can be described using $\mathcal{O}(n^2)$ bits underpins the efficient classical simulation results of the Gottesman-Knill theorem.  We use $\mathrm{S}_n$ to denote the set of pure $n$-qubit stabilizer states.  \br{The set of mixed stabilizer states $\bar{\mathrm{S}}_{n}$ is then formed by all states which can be decomposed as a mixture of pure stabilizers, that is, $\bar{\mathrm{S}}_{n} = \mathrm{conv} \{ \kb{\phi}{\phi} : \ket{\phi} \in \mathrm{S}_n \}$}.

An $n$-qubit unitary $C$ is Clifford if for every Pauli $P \in \mathcal{P}_n$, it follows that $C P C^\dagger \in \mathcal{P}_n$.  %By time evolving in the Heisenberg picture, 
We see that stabilizer states are mapped to stabilizer states under Clifford unitaries, and furthermore this update can be tracked efficiently.  In addition, measurements of Pauli operators on stabilizer states can also be efficiently simulated by appropriately updating the stabilizer group.

  We will refer to the stabilizer operations as any sequence of the following: preparation of stabilizer states, Clifford unitaries, Pauli measurements and adaptive feedforward depending on previous measurement outcomes or random coin tosses.  From the perspective of complexity theory, a small caveat is required that adaptive feedforward decisions are computed using only a small (constant size) classical computer.  

A quantum channel $\mathcal{E}$ is said to be stabilizer preserving if it maps every mixed stabilizer state $\rho \in \bar{\mathrm{S}}_{n}$ to another mixed stabilizer state, so $\mathcal{E}(\rho) \in \bar{\mathrm{S}}_{n}$. \br{Although meaningful when acting on the whole system in consideration, such maps can exhibit undesirable properties when acting on a part of a larger system~\cite{Seddon19}. We will thus consider a relevant class of free operations defined as follows.}
\begin{defin} \label{def:CPTPSP}
We define the set of free operations $\mathcal{O}_n$ as the set of channels $\mathcal{E}$ that are: (\textit{i}) completely positive; (\textit{ii}) trace preserving, so that $\mathrm{Tr}[\mathcal{E}(\rho)] = \mathrm{Tr}[\rho]$; (\textit{iii})  completely stabilizer preserving, in the sense that 
\begin{equation}
    [\chan \otimes \id] (\rho) \in \bar{\mathrm{S}}_{2n}  \quad \forall \rho \in \bar{\mathrm{S}}_{2n}.
\end{equation}
This set can equivalently be defined via the Choi-Jamio{\l}kowski isomorphism as was shown in~\cite[Thm. 3.1]{Seddon19}.
\end{defin}
While it is clear that the stabilizer operations are contained in $\mathcal{O}_n$, it is not known whether all elements of $\mathcal{O}_n$ can be realized by the standard stabilizer operations without post-selection.  The Gottesman-Knill theorem has long been known to show that stabilizer operations can be efficiently classically simulated, but only recently was it shown that the more general class $\mathcal{O}_n$ also admits efficient simulation algorithms~\cite{Seddon19}. 
Furthermore, it is known that certain stabilizer-preserving but non-trace-preserving maps, such as post-selection on the outcome of a Pauli measurement, can also be efficiently simulated. For technical reasons we do not consider these as elements of the convex set of free operations $\mathcal{O}_n$ in our resource theory, but we exploit their simulability in Section \ref{sec:simulate}.

While the stabilizer operations \br{(or the free operations $\mathcal{O}_n$)} are not universal for quantum computation, they can be promoted to universality given an unlimited supply of a suitable non-stabilizer operation.  For instance, adding the $T$ gate (also called the $\pi/8$ phase gate)
\begin{equation}
    T = \left( \begin{array}{cc}
    e^{i \pi /8} & 0 \\
    0 & e^{- i \pi /8}
    \end{array} \right) ,
\end{equation}
promotes the stabilizer operations to full quantum universality~\cite{boykin1999universal}.  Alternatively, one can add a supply of non-stabilizer states such as the so-called magic states:
\begin{align}
    \kb{H}{H} & =(1/2)\left( \id + (X+Z)/\sqrt{2} \right) \\ 
     \kb{T}{T} & =(1/2)\left( \id + (X+Y)/\sqrt{2} \right) \\ 
     \kb{F}{F} & =(1/2)\left( \id + (X+Y+Z)/\sqrt{3} \right) ,
\end{align}
which we use throughout.  Given a single copy of the Hadamard eigenstate $\ket{H}$ or the Clifford equivalent $T$-state $\ket{T}$, we can perform a deterministic $T$ gate using state injection \cite{bravyi2005universal}.  Therefore, full university can be achieved given stabilizer operations and a supply of magic states.  This is an important paradigm as it is the route most commonly used in the design of fault-tolerant quantum computers. 

However, stabilizer operations \jrs{with} access to a restricted number of magic states \jrs{do} not lead to universal quantum computation.  Rather, the computational power depends on the type and quantity of magic states provided. It is precisely this question of computational power that we quantify by studying the complexity of simulating computations with a limited resource of magic states. 
}%end \prx

%%%%%%%%
\subsection{Magic monotone definitions}
We now introduce several magic monotones of interest, borrowing some results from the general resource theory literature. 
Although in our discussion we specialize to the theory of magic states, the basic considerations below can also be applied to more general resources in which the set of free states is defined by convex combinations of free pure states, which includes important examples such as coherence and entanglement. We will elaborate on this in Sec.~\ref{sec:otherTheories}. 

For pure states, we define the following.
\begin{defin}[\cite{bravyi2018simulation}]
The \textbf{pure-state extent} $\xi$ is the quantity	
\begin{equation}
	\xi( \Psi ) := \mathrm{min} \{  \| c \|_1^2 :  \ket{\Psi} = \sum_j c_{j} \ket{\phi_j} ;    \ket{\phi_j} \in \mathrm{S}_n \} . \label{eq:stabextentDef}
\end{equation} 
\end{defin}
%This is a faithful resource monotone.
In magic theory, $\xi$ is the stabilizer extent~\cite{bravyi2018simulation}.  A related quantity appears in other resource theories such as entanglement, where it admits an analytical formula as the squared sum of the Schmidt coefficients of a state~\cite{rudolph_2001}, or in coherence theory, where it is the square of the $\ell_1$-norm of coherence~\cite{baumgratz_2014}.  It is well known~\cite{regula2017convex,bravyi2018simulation} that this can be recast as a dual optimization problem
\begin{equation}
	\xi( \Psi ) := \mathrm{max} \left\{  | \bk{\omega}{\Psi}|^2 :  |\bk{\omega}{\phi}| \leq 1 \; \forall \ket{\phi} \in \mathrm{S}_n \right\} .\label{opt:extent-dual}
\end{equation}
Here, we define an $\omega$-witness to be any feasible solution to the optimization problem in Eq.~\eqref{opt:extent-dual}.

We now consider four monotones, of which three can be regarded as mixed-state extensions of $\xi$.  First, one can extend the extent to mixed states using a convex roof extension~\cite{bennett_1996}:
\begin{defin}
\label{RoofExtension}
The \textbf{mixed-state extent} $\Xi$ is the quantity
\begin{equation} \nonumber
	\Xi( \rho ) := \mathrm{min} \left\{   \sum_j p_j 	\xi( \Psi_j ) : \rho=\sum_j p_j \kb{\Psi_j }{ \Psi_j }   \right\} , \label{eq:XiDef}
\end{equation}
where every $\ket{\Psi_j}$ is a pure state and $p_j$ are non-negative coefficients such that $\sum_j p_j = 1$. Furthermore, if the minimum can be achieved with a decomposition where all $\xi( \Psi_j )$ are equal, then we say the state admits an \textbf{equimagical} decomposition.
\end{defin}
We also consider quasiprobability distributions over free states as follows.
\begin{defin}[\cite{Howard17robustness}] \label{DefRoM}
The \textbf{robustness} $\mathcal{R}$ is the quantity
	\begin{equation} \nonumber
		\mathcal{R}( \rho ) := \mathrm{min} \left\{ \|q\|_1 : \rho = \sum_j q_j \kb{\phi_j}{\phi_j} ;\;   \ket{\phi_j}  \in \mathrm{S}_n \right\},
	\end{equation}
	where $q_j$ are real coefficients.
\end{defin}
 In magic theory, $\mathcal{R}$ is called the robustness of magic~\cite{Howard17robustness,heinrich2018robustness}, inspired by the (standard) robustness of entanglement~\cite{vidal_1999}. \br{This quantity is precisely the negativity with respect to the frame defined by the set of pure-state stabilizer projectors. In particular,} the robustness uses decompositions where the rank-one ket--bra terms are Hermitian.  Relaxing this, we can define
\begin{defin}\label{def:dyadic}
The \textbf{dyadic negativity} $\Lambda$ is the quantity	
	\begin{equation} \nonumber
\Lambda(\rho) := \mathrm{min} \left\{ \|\alpha\|_1 : \rho \!=\! \sum_j \alpha_j \kb{L_j}{R_j} ;\;   \ket{L_j} , \ket{R_j} \in \mathrm{S}_n \right\},
	\end{equation}
	where the coefficients $\alpha_j$ are complex numbers. 
\end{defin}
The name reflects the fact that each $\kb{L_j}{R_j}$ comprises of a pair of vectors, and so is a dyad. Within the resource theory of entanglement, a related quantity called the projective tensor norm was considered~\cite{rudolph_2001,rudolph_2005}, and in the resource theory of coherence the dyadic negativity corresponds to the $\ell_1$-norm of coherence~\cite{baumgratz_2014}. Viewing this quantity as the primal solution of a convex optimization problem, it is useful to state the equivalent dual formulation~\cite{regula2017convex} in terms of witness operators.  We define the set of $W$-witnesses, denoted $\mathcal{W}$, to be the Hermitian operators such that 
\begin{equation}
	\label{WitnessClass}
	 \mathcal{W} := \{  W :   | \bra{L} W \ket{R}| \leq 1 \, \forall\, \ket{L} , \ket{R} \in \mathrm{S}_n \},
\end{equation}	
which by strong duality leads to
\begin{equation}
\label{DualFormulation2}
	\Lambda(\rho) = \mathrm{max} \{ \mathrm{Tr}[W \rho ] :   W \in \mathcal{W} \} .
\end{equation}
 This brings us to our last monotone of interest.
\begin{defin}
The \textbf{generalized robustness} $\Lambda^+$ is the quantity	
\begin{equation}\label{eq:Lambda_plus}
	\Lambda^+(\rho) = \mathrm{max} \{ \mathrm{Tr}[W \rho ] :  W \in \mathcal{W} ; W \geq 0 \} ,
\end{equation}
where $\mathcal{W}$ is the set of $W$-witnesses. 
\end{defin}
\noindent 
A corresponding quantity to $\Lambda^+$ was first defined in entanglement theory~\cite{vidal_1999,steiner_2003} and appears in many resource theories.

Notice that this is similar to the dual formulation given in Eq.~\eqref{DualFormulation2} except we further restrict to witnesses that are also positive semidefinite operators.  
We define a $W^+$-witness to be any feasible solution to the optimization problem in Eq.~\eqref{eq:Lambda_plus}.
Since $W^+$-witnesses are positive semidefinite, the condition $| \bra{L} W \ket{R}|\leq 1 $ simplifies to $ \bra{\psi} W \ket{\psi} \leq 1$ for all $\ket{\psi} \in \mathrm{S}_n$.

Furthermore, the primal form of this monotone is
\begin{align}
	\Lambda^+(\rho) &= \mathrm{min} \{ \lambda :  \rho \leq \lambda \sigma,\; \sigma \in \bar{\mathrm{S}}_{n} \}
	\label{eq:LambdaPlusDefn2}\\
	&= \mathrm{min} \left\{ \lambda \geq 1 :  \frac{\rho + (\lambda-1) \rho' }{\lambda} \in \bar{\mathrm{S}}_{n} \right\}
	\label{eq:LambdaPlusDefn3},
\end{align}
where the optimization in the second line is over all density matrices $\rho'$.
This form motivates the name of generalized robustness: rearranging Def.~\ref{DefRoM}, the robustness $\mathcal{R}$ can be similarly expressed as
\begin{equation}
\frac{\mathcal{R}(\rho)+1}{2} = \mathrm{min} \left\{ \lambda \geq 1 :  \frac{\rho + (\lambda-1) \sigma}{\lambda} \in \bar{\mathrm{S}}_{n},\; \sigma \in \bar{\mathrm{S}}_{n} \right\}
\label{eq:RobDef2}
\end{equation} 
where now the states in the optimization are restricted to free states in $\bar{\mathrm{S}}_{n}$.

We stress that both $\Lambda$ and $\Lambda^+$ are computable, in the sense that their evaluation corresponds to convex optimization problems --- a second-order cone program for $\Lambda$, and a semidefinite program for $\Lambda^+$ --- which can be evaluated using numerical solvers~\cite{boyd2004convex}. In practice, we were able to compute $\Lambda^+$ up to $n=4$ and $\Lambda$ up to $n=3$, but one can certainly hope to make further progress in computing the quantities for states obeying some symmetry, just as in the case of $\mathcal{R}$~\cite{heinrich2018robustness}. The evaluation of convex-roof--based quantities such as $\Xi$ is notoriously hard in general~\cite{uhlmann_2010}, although one could again use symmetry to facilitate it in special cases~\cite{vollbrecht_2001}. Our results in Sec.~\ref{sec:single}-\ref{sec:Multi} further simplify the computation of all of the monotones for the practically important case of tensor products of single-qubit states.

The monotones have been considered from the perspective of general resource theories~\cite{regula2017convex}, and in particular they have been shown to satisfy a number of useful properties:
\begin{enumerate}
	\item \textit{faithfulness}: $\mathcal{M}(\rho)=1$ if and only if $\rho \in \bar{\mathrm{S}}_{n}$;
	\item \textit{monotonicity}: $\mathcal{M}(\rho) \geq \mathcal{M}(O(\rho))$ for any free operation $O \in \mathcal{O}_n$;
	\item \textit{strong monotonicity} (monotonicity on average under selective free measurements):
	\begin{equation}\mathcal{M}(\rho) \geq \sum_i p_i \mathcal{M} \left(\frac{K_i \rho K^\dagger_i}{p_i}\right) ,\end{equation}
	where $\{K_i\}_i$ are the Kraus operators of a quantum channel such that each $K_i$ is stabilizer preserving, i.e. $K_i \ket{\phi} \propto \ket{\phi'} \in \mathrm{S}_n \; \forall\, \ket{\phi} \in \mathrm{S}_n$, and $p_i = \Tr(K_i \rho K^\dagger_i)$;
	\item \textit{convexity}: $\mathcal{M}( \sum_j p_j \rho_j) \leq \sum_j p_j \mathcal{M}(  \rho_j) $;
	\item \textit{submultiplicativity}: $\mathcal{M}( \otimes_j \rho_j) \leq \prod_j \mathcal{M}(  \rho_j)$.
\end{enumerate}
We remark that, although $\mathcal{R}$ and $\Lambda^+$ are monotones in any convex resource theory, the fact that $\Lambda$ and $\Xi$ obey monotonicity under all completely stabilizer-preserving operations $\mathcal{O}_n$ is a consequence of two properties: the strong monotonicity of the measures~\cite{regula2017convex} coupled with the fact that any operation $O \in \mathcal{O}_n$ can be expressed in terms of Kraus operators $\{K_i\}_i$ which preserve the set of stabilizer states~\cite{Seddon19}.
If we instead work with logarithmic monotones, $\mathcal{M}_{\mathrm{log}}(\rho) = \mathrm{log} [ \mathcal{M}(\rho) ]$ then multiplicativity becomes additivity, faithfulness instead has a $\mathcal{M}_{\mathrm{log}}(\rho)=0$ condition, and due to concavity of the logarithm $\mathcal{M}_{\mathrm{log}}$ is no longer a convex function but still obeys strong monotonicity~\cite{plenio_2005}.  Here we find it convenient to work without the logarithm in most cases.

Next, we present some general relations between these monotones that are reminiscent of known results in general resource theories~\cite{regula2017convex} . 
\begin{lem}[\cite{regula2017convex}]
	\label{RegulaLemma1}
	For any pure state 
	\begin{equation}
\Lambda^+(\kb{\Psi}{\Psi}) = \Lambda(\kb{\Psi}{\Psi}) = \Xi[ \kb{\Psi}{\Psi} ]	= \xi ( \Psi ).
	\end{equation}
\end{lem}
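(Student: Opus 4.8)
The plan is to prove the four quantities coincide by handling $\Xi$ separately and then closing a cyclic chain of inequalities $\xi(\Psi) \le \Lambda^+(\kb{\Psi}{\Psi}) \le \Lambda(\kb{\Psi}{\Psi}) \le \xi(\Psi)$ among the remaining three. The identity $\Xi(\kb{\Psi}{\Psi}) = \xi(\Psi)$ is essentially free: a rank-one density operator has an essentially unique pure-state decomposition, so any ensemble $\kb{\Psi}{\Psi} = \sum_j p_j \kb{\Psi_j}{\Psi_j}$ with $p_j > 0$ forces every $\ket{\Psi_j}$ to equal $\ket{\Psi}$ up to a global phase, since the support of each term must lie in the one-dimensional range of $\kb{\Psi}{\Psi}$. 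As $\xi$ is phase-invariant, $\sum_j p_j \xi(\Psi_j) = \xi(\Psi)$ for every admissible decomposition, and the convex-roof minimum in Def.~\ref{RoofExtension} collapses to $\xi(\Psi)$.

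For the remaining chain I would first dispatch the two easy inequalities. Since $\Lambda^+$ is defined by the same maximization $\mathrm{max}\{\Tr[W\rho] : W\in\mathcal{W}\}$ as the dual form of $\Lambda$ in Eq.~\eqref{DualFormulation2}, but with the additional constraint $W \ge 0$, its feasible set is a subset of that for $\Lambda$; maximizing over a smaller set gives $\Lambda^+(\rho) \le \Lambda(\rho)$ for every $\rho$. For $\Lambda(\kb{\Psi}{\Psi}) \le \xi(\Psi)$, I would take an optimal extent decomposition $\ket{\Psi} = \sum_j c_j \ket{\phi_j}$ with $\ket{\phi_j} \in \mathrm{S}_n$ and $\|c\|_1^2 = \xi(\Psi)$, and expand $\kb{\Psi}{\Psi} = \sum_{j,k} c_j \bar{c}_k \kb{\phi_j}{\phi_k}$. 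This is a valid dyadic decomposition whose coefficient vector $\alpha_{jk} = c_j \bar{c}_k$ has $\|\alpha\|_1 = (\sum_j |c_j|)^2 = \|c\|_1^2 = \xi(\Psi)$, whence $\Lambda(\kb{\Psi}{\Psi})\le \xi(\Psi)$.

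The crucial step, and the one I expect to be the main obstacle, is the reverse bound $\Lambda^+(\kb{\Psi}{\Psi}) \ge \xi(\Psi)$, which I would obtain by promoting the dual optimizer of the extent to a positive witness. Let $\ket{\omega}$ attain the maximum in the dual form Eq.~\eqref{opt:extent-dual}, so that $|\bk{\omega}{\Psi}|^2 = \xi(\Psi)$ and $|\bk{\omega}{\phi}| \le 1$ for all $\ket{\phi}\in\mathrm{S}_n$. Set $W := \kb{\omega}{\omega}$. Then $W \ge 0$, and for any stabilizer states $\ket{L},\ket{R}$ one has $|\bra{L}W\ket{R}| = |\bk{L}{\omega}|\,|\bk{\omega}{R}| \le 1$, so $W$ is a feasible $W^+$-witness in Eq.~\eqref{eq:Lambda_plus}. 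Evaluating, $\Lambda^+(\kb{\Psi}{\Psi}) \ge \Tr[W\kb{\Psi}{\Psi}] = |\bk{\omega}{\Psi}|^2 = \xi(\Psi)$. Combining the three inequalities yields $\xi(\Psi) \le \Lambda^+(\kb{\Psi}{\Psi}) \le \Lambda(\kb{\Psi}{\Psi}) \le \xi(\Psi)$, forcing equality throughout and, with the $\Xi$ identity, completing the proof. The only subtlety to verify is that strong duality genuinely holds for both $\xi$ and $\Lambda$ so that the dual optima are attained; this is guaranteed by the convex (second-order cone) structure already noted in the text, and feasibility of the primal/dual pairs rules out a duality gap.
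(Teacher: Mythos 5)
Your proposal is correct and follows essentially the same route as the paper's proof in Appendix~\ref{RegulaProof}: the upper bound $\Lambda(\kb{\Psi}{\Psi})\le\xi(\Psi)$ via the dyadic expansion $\sum_{i,j}c_i c_j^*\kb{\phi_i}{\phi_j}$ of an optimal extent decomposition, and the lower bound via the rank-one positive witness $\kb{\omega^\star}{\omega^\star}$ built from the extent dual optimizer, with $\Xi$ handled by the uniqueness of the convex decomposition of a pure state. The only (immaterial) difference is organizational: the paper proves $\Lambda=\xi$ directly and then remarks that the witness is positive so the same argument gives $\Lambda^+=\xi$, whereas you close the cyclic chain $\xi\le\Lambda^+\le\Lambda\le\xi$.
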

\noindent Therefore, our monotones can be interpreted as mixed-state extensions of $\xi$. We also observe the following.
\begin{theorem}
	\label{Sandwich1}
	For any state $\rho$ we have
	\begin{equation}
		\Lambda^+( \rho )	\leq \Lambda( \rho ) \leq \Xi ( \rho ).
	\end{equation}
\end{theorem}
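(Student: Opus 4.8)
The plan is to treat the two inequalities separately, exploiting the dual (witness) formulation of $\Lambda$ for the first inequality and its primal (dyadic decomposition) formulation for the second. This mirrors the standard strategy for such sandwich relations in general convex resource theories.

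For $\Lambda^+(\rho) \leq \Lambda(\rho)$, I would simply compare the two dual programs. By Eq.~\eqref{DualFormulation2} and Eq.~\eqref{eq:Lambda_plus}, both quantities maximise the \emph{same} linear functional $\mathrm{Tr}[W\rho]$; the only difference is that $\Lambda^+$ optimises over the feasible set $\{W \in \mathcal{W} : W \geq 0\}$, which is a subset of the feasible set $\mathcal{W}$ used for $\Lambda$. Maximising a fixed objective over a smaller set can only decrease the optimal value, so $\Lambda^+(\rho) \leq \Lambda(\rho)$ follows immediately.

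For $\Lambda(\rho) \leq \Xi(\rho)$, the idea is that every decomposition feasible for $\Xi$ induces a dyadic decomposition feasible for $\Lambda$ of no greater $\ell_1$-weight. Fix an arbitrary pure-state decomposition $\rho = \sum_j p_j \kb{\Psi_j}{\Psi_j}$, and for each $j$ choose an optimal stabilizer decomposition $\ket{\Psi_j} = \sum_k c_{jk}\ket{\phi_{jk}}$ achieving $\xi(\Psi_j) = \|c_j\|_1^2$ with $\ket{\phi_{jk}} \in \mathrm{S}_n$. Expanding the outer products gives
\[
\rho = \sum_{j,k,l} p_j\, c_{jk}\bar{c}_{jl}\, \kb{\phi_{jk}}{\phi_{jl}},
\]
which is a valid dyadic decomposition since each $\ket{\phi_{jk}}$ and $\ket{\phi_{jl}}$ lies in $\mathrm{S}_n$. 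Its weight factorises exactly as
\[
\sum_{j,k,l} |p_j c_{jk}\bar{c}_{jl}| = \sum_j p_j \Big(\sum_k |c_{jk}|\Big)^2 = \sum_j p_j \|c_j\|_1^2 = \sum_j p_j \xi(\Psi_j).
\]
Since $\Lambda(\rho)$ is the minimum $\ell_1$-weight over all admissible dyadic decompositions, it is bounded above by this particular weight, giving $\Lambda(\rho) \leq \sum_j p_j \xi(\Psi_j)$. Minimising the right-hand side over all pure-state decompositions of $\rho$ then yields $\Lambda(\rho) \leq \Xi(\rho)$.

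Neither step presents a serious obstacle; the essential work is choosing the correct representation of each monotone (dual for the first inequality, primal decomposition for the second) and verifying that the $\ell_1$-weight factorises cleanly. The one point deserving care is this factorisation: it relies on the cross terms reassembling as $\sum_{k,l}|c_{jk}||c_{jl}| = (\sum_k |c_{jk}|)^2$, and on confirming that the constructed dyads are genuinely of the form $\kb{L}{R}$ with both kets stabilizer states, so that the decomposition is admissible in Definition~\ref{def:dyadic}. Both hold automatically, so the argument is essentially a direct verification.
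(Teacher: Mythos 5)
Your proposal is correct and follows essentially the same route as the paper: the first inequality is the identical feasible-set comparison of the two dual programs, and your explicit dyad expansion $\kb{\Psi_j}{\Psi_j} = \sum_{k,l} c_{jk}\bar{c}_{jl}\kb{\phi_{jk}}{\phi_{jl}}$ with the factorised $\ell_1$-weight is exactly the content of the paper's Lemma~\ref{RegulaLemma1} (the $\Lambda \leq \xi$ direction) combined with convexity of $\Lambda$, which is how the paper proves $\Lambda(\rho) \leq \Xi(\rho)$. The only difference is presentational: the paper modularises the pure-state step into a separate lemma, whereas you inline the construction.
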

For completeness, we provide alternative proofs of these results in  Appendix~\ref{RegulaProof}.  Since $\Lambda^+$ is often easier to evaluate than $\Lambda$ and $\Lambda^+ \le \Lambda$, in practical settings, one can approximate $\Lambda$ by evaluating $\Lambda^+$.

\subsection{Connecting monotones with simulation}

To further motivate our investigation of the magic monotones that follows in the subsequent sections, we summarize our main results and show how the properties of the monotones will be vital to the understanding of several classes of classical simulation algorithms. Our first simulation algorithm is a quasiprobability-based approach, which introduces several novel modifications to standard Monte Carlo techniques, notably the use of dyadic frames.

\setlength{\tabcolsep}{4pt}

\begin{namedthm}{Theorem~\ref{thm:simulationJS} (informal)}
Consider an $n$-qubit initial state with known decomposition into dyads $\rho = \sum_j \alpha_j \ket{L_j}\bra{R_j}$ where $\|\alpha\|_1=\Lambda(\rho)$. Let $\chan$ be a sequence of $T$ stabilizer-preserving operations, each acting on a few qubits. Then, given a stabilizer projector $\Pi$, we can estimate the Born rule probability $\mu = \mathrm{Tr}(  \Pi \chan [\rho ] )$ with probability $1-p_{\mathrm{fail}}$ and additive error $\epsilon$ within a runtime
	\begin{equation}
  \frac{\Lambda(\rho)^2}{\epsilon^2}  \log(p_{\mathrm{fail}}^{-1})
T{\rm poly}(n).
	\end{equation}	
\end{namedthm}

Hence, the dyadic negativity $\Lambda$ exactly characterizes our algorithm's runtime. 
To understand how the performance scales when more copies of the input state $\rho$ are provided, it is then necessary to understand the multiplicativity of $\Lambda$. We solve this question completely with the following.

  \begin{namedthm}{Theorem~\ref{Multiplicative2}}
	Let $\sigma_j$ be single-qubit states. Then
	\begin{align}
		\Lambda(\otimes_j \sigma_j )  =  \Xi(\otimes_j \sigma_j ) = \Lambda^+(\otimes_j \sigma_j ) =   \prod_j \Lambda^+( \sigma_j).
	\end{align}	
\end{namedthm}

This not only reveals a connection between three monotones introduced previously --- allowing, for instance, for the evaluation of the generally hard-to-compute quantifier $\Xi$ --- but also shows them to be strictly multiplicative for qubit states.  Consequently, when we plot these quantities on a log scale, we get a straight line, as shown with the example in Fig.~\ref{fig:Scaling}. Although a common occurrence in the structurally simpler theory of qudit magic states~\cite{veitch2014resource,wang2018efficiently}, multiplicativity has not been shown before for any mixed-state monotone in qubit magic theory.

Thm. \ref{Multiplicative2} lets us avoid the main problem which hinders an understanding of the performance of previous quasiprobability simulation algorithms such as the Howard-Campbell simulator based on the robustness $\mathcal{R}$, namely the inability to efficiently compute $\mathcal{R}(\rho^{\otimes n})$ for large $n$~\cite{Howard17robustness,heinrich2018robustness}. In addition, we can use the multiplicativity result to show an exponential separation between our monotones and the robustness of magic.

\begin{namedthm}{Theorem~\ref{ExpGapProp}}
Given any single-qubit non-stabilizer state $\rho$, there exists positive real constants $\alpha$ and $\beta$ where  $\alpha > \beta$ and so that
\begin{align} 
    2^{\alpha n} & \leq \mathcal{R}(\rho^{\otimes n}) \\ 
    2^{\beta n} & = \Lambda(\rho^{\otimes n}) = \Lambda^+(\rho^{\otimes n})= \Xi(\rho^{\otimes n}).
\end{align}
\end{namedthm}
This establishes the simulation algorithm of Thm.~\ref{thm:simulationJS} as polynomially faster than previous quasiprobability simulators, as illustrated by the example in Fig.~\ref{fig:Scaling}.

Our second simulation algorithm is based on the stabilizer rank, which allows it to be used for both Born rule probability estimation and for approximately sampling from the output distribution of a quantum circuit. Importantly, existing stabilizer rank simulation algorithms only applied to pure states~\cite{Bravyi16stabRank,bravyi2018simulation}. We extend this to mixed states through the monotone $\Xi$ as follows.

\begin{namedthm}{Theorem~\ref{thm:bit-string-sampling} (informal)}
    Let $\rho$ be a state with known mixed-state extent decomposition. Then there is a classical algorithm that approximately samples from the probability distribution associated with a sequence of Pauli measurements on $\rho$.  Our samples come from a distribution that is $\delta$-close in $\ell_1$-norm to the actual distribution, and each sample has an expected runtime
\begin{align}\mathbb{E}(T) = \mathcal{O}(\Xi(\rho) / \delta^3 ) \end{align}
as long as $\delta$ is not too small. Furthermore, if $\rho$ is a product of single-qubit states, there is no variance in the runtime.
\end{namedthm}

There are two notable technical advances here: one is a factor $1/\delta$ improvement in runtime over previous simulators of this type~\cite{bravyi2018simulation}, even when applied to pure states; the other improvement is the rather surprising result that sampling can often be performed without any variance in the runtime.

\br{The last of our simulation algorithms is the constrained path simulator, which enjoys an efficient runtime, but instead sacrifices the accuracy of the simulation depending on how resourceful the input state is. The precision has an inverse polynomial dependence on the generalized robustness $\Lambda^+$, again directly connecting a magic monotone with classical simulation.}

We thus see that the tightness of our simulators' runtimes and our ability to sharply characterize them is inherited from the properties and characterization of the monotones introduced earlier. \br{We overview the connections between the monotones and our simulation algorithms in Table~\ref{table:simulators}.} The detailed derivation of the Theorems, as well as additional results --- including connecting the monotones with magic distillation rates --- all follow in the remainder of the paper. %\red{, and which we illustrate in Figure \ref{fig:cartoon}}.

%%%%%%%%%%%%%%%%%%%%%%%%%%%%%%%%%%%%%%%%%%%%%%%%%%%%%%%%%%%%%%%%%%%%%%%%%%%%%%%%

\begin{figure}[t!]
\prx{%
    \centering
    \includegraphics[width=0.8\linewidth]{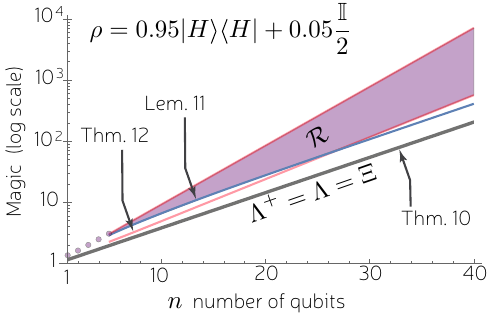}
    \caption{The scaling of magic monotones for many copies of a noisy single-qubit magic state $\rho$, highlighting several of our results.  Our new monotones, $\Lambda^+(\rho^{\otimes n})$, $\Lambda(\rho^{\otimes n})$ and $\Xi(\rho^{\otimes n} )$ are proved to be equal with multiplicative scaling leading to a straight line (grey) on this logarithmic scale (due to Thm.~\ref{Multiplicative2}).  We contrast this with a previously studied monotone, the robustness of magic $\mathcal{R}$, for which we can numerically compute the value up to $n=5$ (shown as purple data points).  For $n>5$, the shaded purple region shows the possible values of $\mathcal{R}$ as enforced by upper bounds (due to sub-multiplicativity) and two lower bounds (Lemma~\ref{RegulaLemma2} and Thm.~\ref{ExpGapProp}).  The robustness of magic has a wide range of possible values, but even the lower bound grows exponentially faster than the value of our new monotones, entailing that classical simulation algorithms based on the new monotones offer an improvement in the exponential scaling of their runtime.}
    \label{fig:Scaling}
}%end \prx
\end{figure}

\section{Single-qubit magic states}
\label{sec:single}

In this section, we present a complete description of our magic monotones for single-qubit states. Recalling from Thm.~\ref{Sandwich1} that the monotones in general obey the relation $\Xi(\rho) \geq \Lambda(\rho) \geq \Lambda^+(\rho)$, the key question is then whether the inequalities can be tight, thus unifying the different approaches to the quantification of magic. We answer this in the affirmative.
 \begin{theorem}
 	\label{Sandwich2}
For any single-qubit state  $\rho$,  we have
     	\begin{equation}
    	\Lambda^+( \rho ) = \Lambda( \rho ) = \Xi ( \rho ),
    \end{equation}
and furthermore $\rho$ admits an equimagical decomposition (recall Def.~\ref{RoofExtension}).       
 \end{theorem}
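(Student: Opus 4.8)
The chain $\Lambda^+(\rho)\le\Lambda(\rho)\le\Xi(\rho)$ is already supplied by Theorem~\ref{Sandwich1}, so the entire statement reduces to proving the single reverse inequality $\Xi(\rho)\le\Lambda^+(\rho)$ for an arbitrary single-qubit $\rho$. If I can do this by exhibiting an explicit pure-state decomposition of $\rho$ whose average extent equals $\Lambda^+(\rho)$, then that decomposition is automatically optimal for $\Xi$, and if it is built so that all its terms share a common extent it simultaneously certifies the equimagical property. Thus the goal is to construct a matching pair: a pure-state decomposition giving an upper bound on $\Xi$ and a positive witness giving a lower bound on $\Lambda^+$, and to show the two values coincide.

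First I would reduce by symmetry. All four quantities are invariant under Clifford unitaries, which act on the Bloch ball as the rotational octahedral group, and under transposition $\rho\mapsto\rho^{T}$, which reflects $r_y\mapsto -r_y$ while preserving the stabilizer set; together these generate the full octahedral group of order $48$. Using it I move the Bloch vector $\vec r$ of $\rho$ into a single fundamental domain, e.g.\ $r_x\ge r_y\ge r_z\ge 0$. States with $r_x+r_y+r_z\le 1$ lie in the octahedron $\bar{\mathrm S}_{1}$ and are free, so all monotones equal $1$; the work is confined to the magic region of this domain, where the nearest facet of $\bar{\mathrm S}_{1}$ is the stabilizer triangle spanned by $\ket{0},\ket{+},\ket{+_i}$.

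The core construction starts from the primal form $\Lambda^+(\rho)=\min\{\lambda:\rho\le\lambda\sigma,\ \sigma\in\bar{\mathrm S}_{1}\}$ of Eq.~\eqref{eq:LambdaPlusDefn2}. At the optimum the $2\times 2$ operator $\lambda\sigma-\rho\ge 0$ becomes rank-deficient, so $\lambda\sigma-\rho=(\lambda-1)\kb{\chi}{\chi}$ for a pure $\ket\chi$, which pins the optimal $\sigma$ to the relevant facet and fixes $\ket\chi$. From this data I read off two pure states $\ket{\psi_1},\ket{\psi_2}$ on the Bloch sphere, placed symmetrically with respect to the plane through $\vec r$ and the facet normal, with $\vec r$ a convex combination of their Bloch vectors and with $\xi(\psi_1)=\xi(\psi_2)$; by Lemma~\ref{RegulaLemma1} and the dual Eq.~\eqref{opt:extent-dual} the common value $\xi(\psi_1)$ can be written explicitly. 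Simultaneously I write down a positive witness $W^{+}\in\mathcal W$ adapted to the same facet and evaluate $\Tr[W^{+}\rho]$. Matching $\sum_j p_j\xi(\psi_j)=\xi(\psi_1)=\Tr[W^{+}\rho]$ squeezes $\Xi(\rho)$ between this decomposition and the witness, forcing equality throughout and delivering the equimagical decomposition.

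I expect the main obstacle to be the equimagical construction itself: one must show that for every $\vec r$ in the domain there is a pair of Bloch directions lying on a common level set of the Clifford-invariant extent function whose chord passes through $\vec r$, and that a positive witness with exactly the matching value exists. The equal-extent requirement couples the two directions to the geometry of $\vec r$, and the nearest facet or the support of the optimal $\sigma$ can change as $\vec r$ sweeps the domain, so some region-by-region bookkeeping is unavoidable. I anticipate that using the symmetry plane through $\vec r$ and the facet normal reduces the equal-extent condition to a single one-parameter equation that the symmetric choice solves automatically, making the construction uniform across the domain and the witness straightforward to identify by complementary slackness.
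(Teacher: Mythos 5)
Your strategy is essentially the paper's: reduce by octahedral symmetry to a fundamental domain, decompose $\rho$ into two pure states related by a reflection, show they share a common rank-one positive witness $\kb{\omega}{\omega}$, and squeeze $\langle\omega|\rho|\omega\rangle \leq \Lambda^+(\rho) \leq \Lambda(\rho) \leq \Xi(\rho) \leq \langle\omega|\rho|\omega\rangle$ (this is exactly Lemma~\ref{EqualityLem} combined with Lemma~\ref{1qubitLemma}). One detail would fail as literally stated, however: the reflection plane. For the two pure states to share the \emph{same} optimal witness, the reflection must fix that witness, and the plane ``through $\vec r$ and the facet normal'' generically does not — the optimal witness for a state in $P_Y$ has Bloch vector $q(X+Z)/\sqrt{2} + \sqrt{1-q^2}\,Y$, which does not lie in the plane spanned by an arbitrary $\vec r$ and $(X+Y+Z)/\sqrt3$. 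The paper instead reflects through the octahedral mirror plane $\{\langle X\rangle = \langle Z\rangle\}$ (equivalently, the plane orthogonal to $\sigma_B = (X-Z)/\sqrt2$): Lemma~\ref{WitnessLemma} shows $|\bk{\omega}{0}| = |\bk{\omega}{+}| = 1$ forces $q_x = q_z$, so \emph{every} optimal witness for $P_Y$ lies in that plane, and the chord through $\vec r$ perpendicular to it yields the pair $\Phi_\rho^\pm$ of Eq.~\eqref{PhiDefined} with $|\bk{\Phi_\rho^+}{\omega(q)}|^2 = |\bk{\Phi_\rho^-}{\omega(q)}|^2$ for all $q$, which simultaneously delivers the shared witness and the equimagical property. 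With that substitution your region-by-region bookkeeping collapses to the paper's treatment of the slices $\mathbb{S}_f$ and the boundary states $\Psi_f^{X,Y,Z}$, and your complementary-slackness derivation of the witness is a legitimate alternative to the paper's direct characterization via the extent dual.
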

We will see in the following section that this equivalence persists for tensor products of single-qubit states. However, equality does not extend to general $n$-qubit states for $n\geq 2$, as numerically we find that $\Lambda^+(\rho) < \Lambda(\rho)$ for most random two-qubit density matrices $\rho$. The proof of Thm.~\ref{Sandwich2} rests on a trio of lemmata.  First, we have:
\begin{lem}[\textbf{The monotone equality lemma}]
    \label{EqualityLem}
    For any $\omega$-witness $\ket{\omega}$, we define the set $B_\omega$ to be the convex hull of all pure states $\Psi$ for which $|\langle \omega \vert \Psi \rangle |^2=\xi(\Psi)$.  It follows that for all $\rho \in B_{\omega}$ we have 
    	\begin{equation}
    	\Lambda^+( \rho ) = \Lambda( \rho ) = \Xi ( \rho ) = \langle \omega \vert \rho \vert \omega \rangle.
    \end{equation}
\end{lem}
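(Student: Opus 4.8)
The plan is to establish the four-fold equality by squeezing $\Lambda^+(\rho)$, $\Lambda(\rho)$ and $\Xi(\rho)$ between two copies of $\langle \omega \vert \rho \vert \omega \rangle$, invoking the general ordering $\Lambda^+ \leq \Lambda \leq \Xi$ of Thm.~\ref{Sandwich1}. The single observation that drives everything is that the rank-one projector $\kbself{\omega}$ is simultaneously a feasible $W$-witness and a feasible $W^+$-witness.

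First I would verify this feasibility. By the defining property of an $\omega$-witness in Eq.~\eqref{opt:extent-dual}, $|\bk{\omega}{\phi}| \leq 1$ for every $\ket{\phi} \in \mathrm{S}_n$. Hence for any $\ket{L}, \ket{R} \in \mathrm{S}_n$,
\begin{equation}
|\bra{L} \kbself{\omega} \ket{R}| = |\bk{\omega}{L}| \, |\bk{\omega}{R}| \leq 1,
\end{equation}
so $\kbself{\omega} \in \mathcal{W}$, and since $\kbself{\omega} \geq 0$ it is also admissible in the optimization for $\Lambda^+$. Substituting this witness into the dual form Eq.~\eqref{eq:Lambda_plus} gives, for \emph{every} state $\rho$,
\begin{equation}
\Lambda^+(\rho) \geq \Tr[\kbself{\omega} \rho] = \langle \omega \vert \rho \vert \omega \rangle.
\end{equation}

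Next I would obtain the matching upper bound, and this is the only step that uses the hypothesis $\rho \in B_\omega$. By the definition of $B_\omega$ as a convex hull, $\rho$ admits a decomposition $\rho = \sum_k p_k \kbself{\Psi_k}$ into pure states each saturating $|\bk{\omega}{\Psi_k}|^2 = \xi(\Psi_k)$. Feeding precisely this decomposition into the convex-roof definition of $\Xi$ (Def.~\ref{RoofExtension}) yields
\begin{equation}
\Xi(\rho) \leq \sum_k p_k \, \xi(\Psi_k) = \sum_k p_k \, |\bk{\omega}{\Psi_k}|^2 = \langle \omega \vert \rho \vert \omega \rangle,
\end{equation}
the last equality being linearity of $\bra{\omega}(\cdot)\ket{\omega}$ applied to the decomposition of $\rho$.

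Finally, chaining these two bounds through Thm.~\ref{Sandwich1} gives
\begin{equation}
\langle \omega \vert \rho \vert \omega \rangle \leq \Lambda^+(\rho) \leq \Lambda(\rho) \leq \Xi(\rho) \leq \langle \omega \vert \rho \vert \omega \rangle,
\end{equation}
forcing every inequality to collapse to an equality. I do not anticipate a genuine obstacle: the argument is essentially a duality sandwich, and the only real content is the dual-feasibility of $\kbself{\omega}$ together with the realization that $B_\omega$ is \emph{defined} exactly so that the convex-roof upper bound is saturated by this witness. The one point I would check carefully is that a generic element of the convex hull $B_\omega$ can be written as a finite convex combination of the saturating pure states, which is immediate from the definition of convex hull (or Carathéodory's theorem if an explicit bound on the number of terms is desired).
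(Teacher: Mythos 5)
Your proposal is correct and follows essentially the same route as the paper's proof: upper-bound $\Xi(\rho)$ via the saturating convex decomposition guaranteed by $\rho \in B_\omega$, lower-bound $\Lambda^+(\rho)$ via the feasible witness $\kbself{\omega}$, and collapse the chain of inequalities from Thm.~\ref{Sandwich1}. The only difference is cosmetic — you spell out the dual-feasibility check $|\bra{L}\kbself{\omega}\ket{R}| \leq 1$ explicitly, which the paper leaves implicit.
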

\begin{proof}[Proof of Lem.~\ref{EqualityLem}]  If $\rho \in B_{\omega}$, we can find a convex decomposition
\begin{equation}
    \rho = \sum_j p_j \vert \Psi_j \rangle \langle \Psi_j \vert ,
\end{equation}
where $|\langle \omega \vert \Psi_j \rangle |^2=\xi(\Psi_j)$ for all $j$.  We can use this decomposition to obtain an upper bound on the mixed-state extent as follows
\begin{align} \label{XiOmega}
    \Xi[\rho] & \leq \sum_j p_j \xi(\Psi_j) \\ \nonumber 
    & = \sum_j p_j \langle \omega \vert \Psi_j \rangle \langle \Psi_j \vert \omega \rangle \\ \nonumber
    & = \langle \omega \vert \rho \vert \omega \rangle .
\end{align}
On the other hand, $W=\vert \omega \rangle \langle \omega \vert \in \mathcal{W}^+$ and so can be used to lower bound the generalized robustness to show
\begin{equation} \label{LambdaPlusOmega}
    \langle \omega \vert \rho \vert \omega \rangle \leq \Lambda^{+}(\rho).
\end{equation}
Combining Eq.~\eqref{XiOmega} and Eq.~\eqref{LambdaPlusOmega} with Thm.~\ref{Sandwich1}, we have 
\begin{equation}
     \langle \omega \vert \rho \vert \omega \rangle \leq \Lambda^+( \rho )	\leq \Lambda( \rho ) \leq \Xi ( \rho ) \leq \langle \omega \vert \rho \vert \omega \rangle.
\end{equation}
Therefore, these inequalities all collapse to equalities.\end{proof}

Making use of Lem.~\ref{EqualityLem} requires us to first understand the structure of optimal $\omega$-witnesses, which we shall discuss soon.  However, first it is useful to define some different subsets of the Bloch sphere.
\begin{defin} \label{OctantDefs}
The positive octant is the set
\begin{equation}
    P := \{ \rho :  \langle X \rangle, \langle Y \rangle, \langle Z \rangle \geq 0 \}.
\end{equation}
We further subdivide the positive octant as follows:
\begin{align}
     P_X & := \{ \rho : \rho \in P, \langle X \rangle \leq \langle Y \rangle,  \langle X \rangle \leq \langle Z \rangle \}  , \\ \nonumber
     P_Y & := \{ \rho : \rho \in P, \langle Y \rangle \leq \langle X \rangle,  \langle Y \rangle \leq \langle Z \rangle   \}  , \\ \nonumber
    P_Z & := \{ \rho : \rho \in P, \langle Z \rangle \leq \langle X \rangle,  \langle Z \rangle \leq \langle Y \rangle \}  . 
\end{align}
where we use the shorthand $\langle M \rangle := \mathrm{Tr}[\rho M]$.  See Fig.~\ref{CanonialRegion} for an illustration of $P_Y$.
\end{defin}
 The sets $P_X$, $P_Y$ and $P_Z$ further divide the positive octant into thirds and it is easy to verify that $P = P_X \cup P_Y \cup P_Z$.  These sets are not quite disjoint because of the following proposition.
\begin{proposition} \label{intersections}
From Def.~\ref{OctantDefs}, we have the following
\begin{align} \nonumber
   P_Z \cap P_Y & = \{ \rho : \rho \in P, \langle Z \rangle = \langle Y \rangle \leq  \langle X \rangle  \}   \\  \nonumber
   P_X \cap P_Z & = \{ \rho : \rho \in P, \langle X \rangle = \langle Z \rangle \leq  \langle Y \rangle  \}   \\ \nonumber
  P_X \cap P_Y & = \{ \rho : \rho \in P, \langle X \rangle = \langle Y \rangle \leq  \langle Z \rangle  \}     .
\end{align}
\end{proposition}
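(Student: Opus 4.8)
The plan is to prove each of the three identities by establishing both set inclusions directly from the defining inequalities, treating $P_Z \cap P_Y$ as the representative case and then obtaining the remaining two by the symmetry of Def.~\ref{OctantDefs} under permutations of the labels $X$, $Y$, $Z$.

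For the representative case, I would first unfold the two membership conditions: $\rho \in P_Z$ requires $\langle Z \rangle \leq \langle X \rangle$ and $\langle Z \rangle \leq \langle Y \rangle$, while $\rho \in P_Y$ requires $\langle Y \rangle \leq \langle X \rangle$ and $\langle Y \rangle \leq \langle Z \rangle$. The crucial step is to combine the two cross-inequalities $\langle Z \rangle \leq \langle Y \rangle$ and $\langle Y \rangle \leq \langle Z \rangle$, which together force the equality $\langle Y \rangle = \langle Z \rangle$. The surviving inequality $\langle Z \rangle \leq \langle X \rangle$ (equivalently $\langle Y \rangle \leq \langle X \rangle$) then gives $\langle Y \rangle = \langle Z \rangle \leq \langle X \rangle$, establishing the inclusion $P_Z \cap P_Y \subseteq \{ \rho \in P : \langle Z \rangle = \langle Y \rangle \leq \langle X \rangle \}$. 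For the reverse inclusion, any $\rho \in P$ satisfying $\langle Z \rangle = \langle Y \rangle \leq \langle X \rangle$ manifestly obeys all four defining inequalities of both $P_Z$ and $P_Y$, so the two sets coincide.

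The identities for $P_X \cap P_Z$ and $P_X \cap P_Y$ follow verbatim after cyclically relabelling $X$, $Y$, $Z$, since the definitions treat the three Bloch coordinates on an equal footing. I expect no genuine obstacle here: the entire content is the elementary order-theoretic observation that if two quantities each claim to be the minimum of the same triple, they must be equal. The only point requiring any care is to check both inclusions rather than a single direction, but the reverse inclusion is immediate from the definitions, so the verification is essentially a one-line deduction repeated three times.
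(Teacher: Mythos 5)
Your proof is correct and follows essentially the same route as the paper's, which likewise observes that in the intersection the two coordinates each claiming to be the minimum must be equal; you merely spell out both inclusions where the paper sketches only the forward direction. No issues.
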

This is straightforward to prove.  For example, in $P_Z$ the smallest expectation value is for $Z$ and for $P_Y$ the smallest expectation value is for $Y$.  Therefore, in the intersection these two expectation values must be equal. We note that any state is Clifford equivalent to a state in the positive octant $P$. Furthermore, the Clifford
\begin{equation}
F:=\frac{1}{\sqrt{2}} \left(
\begin{array}{cc}
1 & -i \\
1 & i \\
\end{array}
\right) ,
\end{equation}
satisfies $FXF^\dagger = Y$, $FYF^\dagger  = Z$ and  $FZF^\dagger  = X $. Therefore, the sets $P_X$, $P_Y$ and $P_Z$ are Clifford equivalent and therefore every state is Clifford equivalent to some $\rho \in P_Y$.

Now we are ready to characterize optimal $\omega$-witnesses.
\begin{lem}
\label{WitnessLemma}
    Let $\ket{\Psi}$ be any pure, single-qubit non-stabilizer state in the set $P_Y$. Then the $\omega$-witness $\ket{\omega}$ that achieves $|\bk{\Psi}{\omega}|^2=\xi(\Psi)$ has an operator representation of the form
    \begin{equation}
    \label{OptWitness}
        \kb{\omega}{\omega} = \frac{\id + q
        H + \sqrt{1-q^2}Y }{1+q/\sqrt{2}  },
    \end{equation}
    where $ \sqrt{2/3} \leq q \leq 1$ and $H=(X+Z)/\sqrt{2}$. Furthermore, if $\ket{\Psi}$ is in the set $P_Y \cap P_X$ or $P_Y \cap P_Z$ then $q=\sqrt{2/3}$ and the $\omega$-witness takes the form
    \begin{equation}
    \label{FaceWitness}
        \kb{\omega}{\omega} = \frac{\id + (X+Y+Z)/\sqrt{3} }{1+1/\sqrt{3}  }.
    \end{equation}
\end{lem}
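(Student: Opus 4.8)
The plan is to translate the dual optimization Eq.~\eqref{opt:extent-dual} into a concrete geometric problem on the Bloch sphere and then exploit the ordering that defines $P_Y$. Writing the rank-one positive witness operator as $\kb{\omega}{\omega}=\tfrac{\lambda}{2}(\id+\vec{n}\cdot\vec{\sigma})$ with $\lambda\geq0$, $\vec{\sigma}=(X,Y,Z)$, and $\vec{n}$ a unit vector, and letting $\vec{r}=(\langle X\rangle,\langle Y\rangle,\langle Z\rangle)$ be the Bloch vector of $\ket{\Psi}$, the six stabilizer constraints $|\bk{\omega}{\phi}|\le1$ become $\tfrac{\lambda}{2}(1\pm n_i)\le1$, while the objective is $|\bk{\Psi}{\omega}|^2=\tfrac{\lambda}{2}(1+\vec{n}\cdot\vec{r})$. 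First I would make two reductions. Since every component of $\vec{r}$ is non-negative on $P_Y$, flipping the sign of any negative $n_i$ weakly increases $\vec{n}\cdot\vec{r}$ without changing $\max_i|n_i|$, so an optimal $\vec{n}$ may be taken in the positive octant; and since the objective increases in $\lambda$, the optimal $\lambda$ saturates the tightest constraint, $\lambda=2/(1+m)$ with $m:=\max_i n_i$. This reduces the problem to maximizing
\begin{equation}
  h(\vec{n})=\frac{1+\vec{n}\cdot\vec{r}}{1+m},\qquad m=\max_i n_i,
\end{equation}
over unit vectors $\vec{n}$ in the positive octant.

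For fixed $m$ the inner problem is the maximization of the linear functional $\vec{n}\cdot\vec{r}$ over the intersection of the unit sphere with the box $[0,m]^3$. As $m$ decreases from $\max_i r_i$, the components clamp to the ceiling $m$ in order of decreasing $r_i$; because $\ket{\Psi}\in P_Y$ forces $\langle Y\rangle$ to be the smallest Bloch component, the $y$-component is the last to clamp. The relevant candidate regime is therefore the one in which both the $x$- and $z$-components sit at the ceiling, $n_x=n_z=m\ge n_y$, with $n_y=\sqrt{1-2m^2}$. Setting $m=q/\sqrt2$ this is exactly $\vec{n}=(q/\sqrt2,\sqrt{1-q^2},q/\sqrt2)$, and feeding $\lambda=2/(1+q/\sqrt2)$ back into $\kb{\omega}{\omega}$ reproduces the claimed form Eq.~\eqref{OptWitness}; the requirement $n_y\le m$ is precisely $q\ge\sqrt{2/3}$, and at $q=\sqrt{2/3}$ all three components coincide, giving the $F$-witness Eq.~\eqref{FaceWitness}.

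The crux is to show that the global maximizer really lands in this both-clamped regime, i.e.\ that nothing is gained by leaving at most one of $n_x,n_z$ at the ceiling. Here I would use a trigonometric substitution: in any regime where a single coordinate is clamped to $m$ and the free coordinates are aligned with the corresponding components of $\vec{r}$, the objective collapses to $h=\tfrac{1+\cos(\theta-\varphi)}{1+\cos\theta}$ with $m=\cos\theta$ and $\tan\varphi$ the ratio of the free to the clamped weight, so that $\tfrac{d}{d\theta}\ln h=\tan\tfrac{\theta}{2}-\tan\tfrac{\theta-\varphi}{2}>0$ for $\varphi\in(0,\pi/2]$. Thus $h$ strictly increases as $m$ decreases across the regimes where $\vec{n}=\vec{r}$ or a single coordinate is clamped, pushing the maximizer down into the both-clamped regime. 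What remains is a one-variable optimization of $h(m)=\bigl(1+(\langle X\rangle+\langle Z\rangle)m+\langle Y\rangle\sqrt{1-2m^2}\bigr)/(1+m)$ over $m\in[1/\sqrt3,m_3]$; checking that the regime ceiling satisfies $m_3\le1/\sqrt2$ (e.g.\ $m_3^2=\langle X\rangle^2/(2\langle X\rangle^2+\langle Y\rangle^2)$ when $\langle Z\rangle\ge\langle X\rangle$) confines the optimum to $m\in[1/\sqrt3,1/\sqrt2]$, i.e.\ $q\in[\sqrt{2/3},1]$. Evaluating $h'$ at the left endpoint $m=1/\sqrt3$ shows that the sign of $\langle X\rangle+\langle Z\rangle-1-(2+\sqrt3)\langle Y\rangle$ decides whether the optimum is interior (an edge witness, tight on $\ket{0}$ and $\ket{+}$) or sits at $q=\sqrt{2/3}$ (the vertex $F$-witness); this quantity is non-positive whenever $\langle X\rangle=\langle Y\rangle$ or $\langle Z\rangle=\langle Y\rangle$, which covers $\ket{\Psi}\in P_Y\cap P_X$ and $\ket{\Psi}\in P_Y\cap P_Z$ and yields $q=\sqrt{2/3}$ as claimed.

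I expect the main obstacle to be the non-smoothness introduced by the $\max_i n_i$ in the denominator: the optimum occurs on a ridge where two coordinates tie, so naive Lagrange conditions (valid only where the maximal coordinate is unique) do not apply and would instead produce spurious ``one clamped'' critical points. The trigonometric monotonicity identity above is the device that circumvents this, ruling out those competing critical points outright rather than comparing their values case by case. As an independent cross-check of optimality I would exhibit the matching primal decomposition --- the two stabilizer states $\ket{0},\ket{+}$ (three, namely $\ket{0},\ket{+},\ket{+_i}$, at the vertex) with coefficients phase-aligned to $\ket{\omega}$ so that $\|c\|_1^2=|\bk{\Psi}{\omega}|^2$ --- which certifies the constructed witness as optimal by weak duality between Eqs.~\eqref{eq:stabextentDef} and \eqref{opt:extent-dual}.
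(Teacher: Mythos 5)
Your proof is correct, and it reaches the lemma by a genuinely different route from the paper's. The paper argues structurally: since $\xi(\Psi)=|\bk{\omega}{\Psi}|^2=|\sum_j c_j\bk{\omega}{\phi_j}|^2$ with $(\sum_j|c_j|)^2=\xi(\Psi)$, every stabilizer state in the support of an optimal primal decomposition must satisfy $|\bk{\omega}{\phi_j}|=1$; a non-stabilizer state needs at least two such states, and the $P_Y$ ordering identifies them as $\ket{0}$ and $\ket{+}$, which forces $q_x=q_z$ and $\lambda=(1+q/\sqrt{2})^{-1}$ at once, with $q\ge\sqrt{2/3}$ coming from the $\ket{+_i}$ constraint and the intersection case from Clifford ($F$-rotation) covariance. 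You instead solve the dual program head-on as a Bloch-sphere optimization, absorbing the six stabilizer constraints into $\lambda=2/(1+\max_i n_i)$ and running a clamping analysis, with the identity $\frac{d}{d\theta}\ln h=\tan\frac{\theta}{2}-\tan\frac{\theta-\varphi}{2}$ disposing of the unclamped and one-clamped regimes. Your route is more self-contained (no counting argument, no Clifford covariance), it reduces the optimal $q$ to an explicit one-variable problem, and your endpoint criterion $\langle X\rangle+\langle Z\rangle-1-(2+\sqrt{3})\langle Y\rangle\le 0$ analytically delineates the $q=\sqrt{2/3}$ region that the paper only exhibits numerically in Fig.~\ref{CanonialRegion}; the paper's route is shorter and dispatches the intersection case in one line. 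Two points worth tightening: the endpoint test presupposes unimodality of $h$ on $[1/\sqrt{3},m_3]$, which does hold because the numerator of $h'$ reduces to $a-1-b(1+2m)/\sqrt{1-2m^2}$ and is decreasing in $m$, so $h'$ changes sign at most once; and at the intersections you do not even need the derivative, since there $m_3^2=\langle X\rangle^2/(2\langle X\rangle^2+\langle Y\rangle^2)=1/3$ and the two-clamped interval collapses to the single point $m=1/\sqrt{3}$. The closing primal cross-check is unnecessary (and would itself require verifying the phase alignment of the $\ket{0},\ket{+}$ coefficients): the strong duality already quoted in Eq.~\eqref{opt:extent-dual} guarantees that your dual optimum equals $\xi(\Psi)$.
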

The actual value of the variable $q$ is easy to numerically compute, but is analytically complicated and not instructive to present.  Rather, in  Fig.~\ref{CanonialRegion}, we illustrate the region $P_Y$ and highlight where $q=\sqrt{2/3}$ and $q>\sqrt{2/3}$. 

\begin{figure}
	\includegraphics[width=200pt]{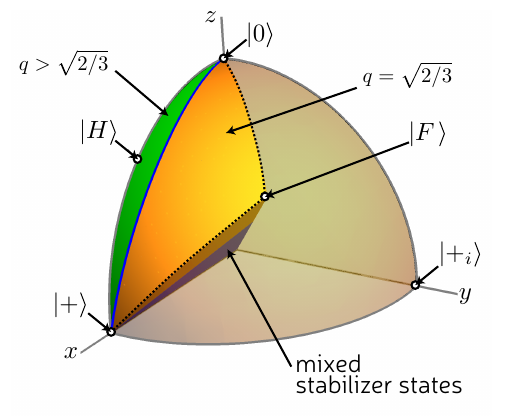}
	\caption{The region $P_Y$, which is a third of the positive octant.  The dotted lines show the pure states at boundaries $P_Y \cap P_X$ and $P_Y \cap P_Z$. For these boundary states, we know (by Lem.~\ref{WitnessLemma}) that the optimal $\omega$-witness is given by Eq.~\eqref{OptWitness} with the parameter set to $q=\sqrt{2/3}$. For other pure states in $P_Y$, the $\omega$-witness still has the form given by Eq.~\eqref{OptWitness} but the parameter $q$ may be greater than $\sqrt{2/3}$. However, interestingly, the majority of pure states in $P_Y$ have an optimal $\omega$-witness with $q=\sqrt{2/3}$ and these are shown in yellow in this plot.   On the geodesic through $\ket{0}$, $\ket{H}$ and $\ket{+}$, we have that $q=1$. Between this geodesic and the yellow region, $q$ varies continuously from $1$ to $\sqrt{2/3}$ and this intermediate region is shown in green.}
	\label{CanonialRegion}
\end{figure}

\begin{proof}[Proof of Lem.~\ref{WitnessLemma}] We begin by observing that for any $\ket{\Psi}$ there exists a decomposition into stabilizer states such that $\ket{\Psi}=\sum_j c_j \ket{\phi_j}$ and $\xi(\Psi)=(\sum_j |c_j|)^2$.  Given an optimal $\omega$-witness we have 
\begin{equation}
    \xi(\Psi)= |\langle \omega \vert \Psi \rangle|^2= \Big| \sum_j c_j \langle \omega \vert \phi_j \rangle   \Big|^2 .   
\end{equation}
Therefore,
\begin{equation}
     \Big(\sum_j |c_j|\Big)^2 = \Big| \sum_j c_j \langle \omega \vert \phi_j \rangle   \Big|^2 .   
\end{equation}
Given that $| \langle \omega \vert \phi_j \rangle | \leq 1$, the above equality can only hold if $| \langle \omega \vert \phi_j \rangle |= 1$ for every $j$ with $|c_j| > 0$.  In particular, if $\Psi$ is a non-stabilizer state it must have at least two non-zero $c_j$ terms, and there must exist at least two stabilizer states such that $| \langle \omega \vert \phi_j \rangle| = 1$.  We return to use this fact shortly.

Using the set of Pauli matrices as a basis
\begin{equation}
    \kb{\omega}{\omega} = \lambda(\id + q_x X + q_y Y + q_z Z),
\end{equation}
where the coefficients $q_x,q_y$ and $q_z$ are real.
Since $\kb{\omega}{\omega}$ is a rank-1 operator we know
\begin{equation}
\label{radius}
    q_x^2+q_y^2+q_z^2=1,
\end{equation}
and since $\kb{\omega}{\omega}$ is a positive operator we have $\lambda > 0$. Given a valid $\omega$-witness, we can always obtain another valid $\omega$-witness by permuting any of $\{ q_x, q_y , q_z \}$ or changing the signs.  Therefore, the optimal $\omega$-witness for a state in the set $P_Y$ has $q_x \geq q_y$ and $q_z \geq q_y$, since this ordering maximizes $|\bk{\Psi}{\omega}|^2$.  This means that the two stabilizer states with the largest overlap with $\vert \Psi \rangle$ are  $\vert + \rangle$ and $\vert 0 \rangle$.  We showed earlier there must be at least two stabilizer states for which $ |\langle \omega \vert \phi_j \rangle| = 1$, so we conclude $ |\langle \omega \vert + \rangle |= 1$ and $| \langle \omega \vert 0 \rangle |= 1$.  It follows that $q_x=q_z$ and we define $q:=\sqrt{2} q_x =\sqrt{2} q_z$.  Condition Eq.~\eqref{radius} implies that $q_y=\sqrt{1-q^2}$ so we have shown that the optimal $\omega$-witness has the form 
\begin{equation}
    \kb{\omega}{\omega} = \lambda \left( \id + q \frac{X + Z}{\sqrt{2}} + \sqrt{1-q^2} Y \right).
\end{equation}
Furthermore, $| \langle \omega \vert + \rangle| = 1$ implies that
\begin{equation}
 \lambda  = (1+q/\sqrt{2})^{-1} .
\end{equation}
Lastly, we note that the condition $ |\langle \omega \vert +_i \rangle| \leq 1$ entails that $q \geq \sqrt{2/3}$.  Therefore, we know the form of the $\omega$-witness in the set $P_Y$ and proved that $ \sqrt{2/3} \leq q \leq 1$.  Next, consider the special case when the state is at an intersection, such as $P_Y \cap P_Z$.  Then, the optimal $\omega$-witness has the above form determined for the region $P_Y$.  However, the region $P_Z$ only differs by an $F$ rotation, so the optimal $\omega$-witness must have a similar form but with the Pauli operators permuted, so that 
\begin{equation}
\label{RotWitness}
    \kb{\omega}{\omega} = \frac{ \left( \id + p \frac{Y + X}{\sqrt{2}} + \sqrt{1-p^2} Z \right) }{2(1+p/\sqrt{2})}.
\end{equation}
The only way Eq.~\eqref{OptWitness} and Eq.~\eqref{RotWitness} can both be true, is if $q=p=\sqrt{2/3}$. A similar argument holds for $P_Y \cap P_X$ and this proves Lem.~\ref{WitnessLemma}.\end{proof}

Our third lemma shows that every mixed state is contained in an appropriate convex set.
\begin{lem}
\label{1qubitLemma}
    For any single-qubit non-stabilizer state $\rho$, there exists a $\omega$-witness $\vert \omega \rangle$ such that $\rho \in B_{\omega}$ (as defined in Lem.~\ref{EqualityLem}). 
\end{lem}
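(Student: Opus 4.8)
The plan is to prove the statement as a covering result: as $\ket{\omega}$ ranges over all $\omega$-witnesses, the convex regions $B_\omega$ cover the entire single-qubit state space. First I would reduce to the canonical region $P_Y$. The family $\{B_\omega\}$ is Clifford-covariant, in the sense that $C B_\omega C^\dagger = B_{C\ket{\omega}}$ for any Clifford $C$ (because both $\xi$ and the overlaps $|\bk{\omega}{\Psi}|$ are Clifford-invariant), and we already noted that every state is Clifford-equivalent to some $\rho\in P_Y$. Hence it suffices to show that every non-stabilizer $\rho\in P_Y$ lies in $B_{\omega_q}$ for some $q$, where $\{\omega_q\}_{q\in[\sqrt{2/3},1]}$ is the one-parameter family of candidate optimal witnesses supplied by Lem.~\ref{WitnessLemma}.

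For a fixed $\rho\in P_Y$ I would select the witness by solving the generalized-robustness program: let $W^\ast$ achieve $\Lambda^+(\rho)=\mathrm{Tr}[W^\ast\rho]$. The next step is to argue that for a non-stabilizer single-qubit state this optimizer is rank one and, after invoking the residual permutation/sign symmetry that fixes $P_Y$, equals $\kbself{\omega_{q^\ast}}$ for some $q^\ast\in[\sqrt{2/3},1]$; this parallels the symmetry reduction already carried out in the proof of Lem.~\ref{WitnessLemma}. Writing $S_q:=\{\Psi \text{ pure}: |\bk{\omega_q}{\Psi}|^2=\xi(\Psi)\}$, the goal becomes to produce a convex decomposition $\rho=\sum_j p_j\kbself{\Psi_j}$ with every $\ket{\Psi_j}\in S_{q^\ast}$, i.e.\ $\rho\in\mathrm{conv}(S_{q^\ast})=B_{\omega_{q^\ast}}$. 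Geometrically this is a statement about the Bloch ball: $S_1$ contains the geodesic arc through $\ket{0},\ket{H},\ket{+}$ (its hull is the flat $\langle Y\rangle=0$ disk), $S_{\sqrt{2/3}}$ is the entire yellow patch of Fig.~\ref{CanonialRegion} (whose hull is a solid three-dimensional cap), and for intermediate $q$ the sets $S_q$ are the level curves interpolating between them. I would then show that the hulls $\mathrm{conv}(S_q)$ sweep out all of $P_Y$ as $q$ runs from $1$ to $\sqrt{2/3}$.

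The crux, and the step I expect to be hardest, is verifying that this sweep is gap-free — that every mixed $\rho\in P_Y$ really does admit a decomposition into pure states sharing a single witness $\omega_{q^\ast}$. The delicate point is a dimension mismatch: $\mathrm{conv}(S_{\sqrt{2/3}})$ is a solid three-dimensional cap, whereas the intermediate $\mathrm{conv}(S_q)$ are two-dimensional sheets, so one must check that these sheets interpolate continuously between the floor disk $\mathrm{conv}(S_1)$ and the lower boundary of the yellow cap without leaving a three-dimensional void. I would handle this by an explicit Bloch-sphere parametrization together with a short case analysis: states that already decompose into yellow pure states are captured by the $F$-witness of Eq.~\eqref{FaceWitness}, while the remaining low-$\langle Y\rangle$ states are matched to the appropriate sheet by a continuity (intermediate-value) argument in $q^\ast$. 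Once $\rho\in B_{\omega_{q^\ast}}$ is secured, Lem.~\ref{EqualityLem} immediately yields $\Lambda^+(\rho)=\Lambda(\rho)=\Xi(\rho)$, completing the proof of Thm.~\ref{Sandwich2}.
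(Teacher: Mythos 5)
Your overall strategy — reduce to $P_Y$ by Clifford covariance, restrict attention to the one-parameter witness family $\omega_q$ from Lem.~\ref{WitnessLemma}, and then feed the result into Lem.~\ref{EqualityLem} — matches the paper's, but the proposal has a genuine gap at exactly the step you flag as hardest: you never actually produce, for a given mixed $\rho \in P_Y$, a convex decomposition into pure states sharing a single optimal witness. The ``sweep''/intermediate-value argument is not a proof as stated. First, the geometric characterization it relies on is not established: $S_q$ for intermediate $q$ contains $\ket{0}$, $\ket{+}$, and a level set of the map $\Psi \mapsto q^\ast(\Psi)$, and whether $\mathrm{conv}(S_q)$ is a two-dimensional sheet or a three-dimensional body (and whether the family foliates the region between the $\langle Y\rangle = 0$ face and the $q=\sqrt{2/3}$ cap without gaps) is precisely what needs proving; the paper itself notes that $q^\ast(\Psi)$ is analytically complicated, so ``an explicit Bloch-sphere parametrization'' is not a routine fallback. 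Second, your witness-selection step assumes that the optimal $W^+$-witness of a \emph{mixed} non-stabilizer state is rank one of the form $\kbself{\omega_{q^\ast}}$; Lem.~\ref{WitnessLemma} only establishes this structure for pure states (its argument uses the equality case of a pure-state extent decomposition), and for mixed states this fact is a consequence of the lemma you are trying to prove, so invoking it here is circular unless you supply an independent argument.

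The paper closes this gap with a short explicit construction that your plan misses: working in a slice $\mathbb{S}_f$, states inside the triangle $\mathrm{conv}\{\Psi_f^X,\Psi_f^Y,\Psi_f^Z\}$ are handled by the shared $F$-witness, and any other $\rho \in P_Y$ is written as a mixture of exactly two pure states $\Phi^\pm_\rho$ (Eq.~\eqref{PhiDefined}) obtained by reflecting through the plane $\langle \sigma_B\rangle = 0$. Since every candidate witness satisfies $\bra{\omega(q)}\sigma_B\ket{\omega(q)} = 0$, one has $|\bk{\Phi^+_\rho}{\omega(q)}|^2 = |\bk{\Phi^-_\rho}{\omega(q)}|^2$ for \emph{every} $q$, so the two pure states share the same optimal witness by construction — no covering, continuity, or characterization of the level sets of $q^\ast$ is needed (only the check $\Phi^\pm_\rho \in P_Y$, done in App.~\ref{AppGeometry}). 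If you want to salvage your approach, the reflection symmetry of the witness family is the missing ingredient: it is what guarantees your ``sheets'' contain, for each mixed state, the specific two-point decomposition that does the job.
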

This implies that for single-qubit states we can leverage Lem.~\ref{EqualityLem} and Lem.~\ref{WitnessLemma} to prove Thm.~\ref{Sandwich2}. 

\begin{proof}[Proof of Lem.~\ref{1qubitLemma}]
We consider individual slices of the Bloch sphere such that $ \mathrm{Tr}[\rho \sigma_F]=f$ where $\sigma_F=(X+Y+Z)/\sqrt{3}$ and $f$ is equal to the inner product between the Bloch vectors representing $\rho$ and $\sigma_F$. A particular $f$ value specifies a slice through the Bloch sphere.  Let us denote $\mathbb{S}_f$ as the set of all states inside this slice. For every non-stabilizer state in the positive octant we have $1/\sqrt{3} \leq f$, and for all normalized states we have $f \leq 1$. Within this slice there are three special, pure states, which are
\begin{align} 
\label{SpecialStatesExp}
   \vert \Psi_{f}^X \rangle \langle \Psi_{f}^X \vert & := \frac{1}{2}(\id + a Y + a Z + \sqrt{1-2a^2}X),     \\ \nonumber
   \vert \Psi_{f}^Y \rangle \langle \Psi_{f}^Y \vert & := \frac{1}{2}(\id + a X + a Z + \sqrt{1-2a^2}Y),     \\ \nonumber
   \vert \Psi_{f}^Z \rangle \langle \Psi_{f}^Z \vert & := \frac{1}{2}(\id + a Y + a X + \sqrt{1-2a^2}Z),
   \end{align}
where $a$ obeys
\begin{equation}
   \sqrt{3}f=2a+\sqrt{1-2a^2}.
\end{equation}
For $1/\sqrt{3} \leq  f \leq 1$, there is a unique $a$ such that $a \in [0,1/\sqrt{3}]$ and $\sqrt{1-2a^2} \in [1/\sqrt{3},1]$. Crucially, these states are the unique pure states
of the following set intersections.
\begin{align} 
\label{SpecialStates}
\vert \Psi_{f}^X \rangle \langle \Psi_{f}^X \vert & \in \mathbb{S}_f \cap P_Y \cap P_Z , \\ \nonumber
  \vert \Psi_{f}^Y \rangle \langle \Psi_{f}^Y \vert & \in  \mathbb{S}_f \cap P_X \cap P_Z , \\ \nonumber
     \vert \Psi_{f}^Z \rangle \langle \Psi_{f}^Z \vert & \in \mathbb{S}_f \cap P_X \cap P_Y. 
\end{align}    
Referring back to Prop.~\ref{intersections}, it is clear that these states must have the form given in Eq.~\eqref{SpecialStatesExp}. 

Notice that these special states are Clifford rotations of each other.  By Lem.~\ref{WitnessLemma} these three special states all have the same optimal $\omega$-witness given by Eq.~\eqref{FaceWitness}.  Since they share their optimal $\omega$-witness, Lem.~\ref{EqualityLem} applies to all convex combinations of states $\{\Psi_{f}^X, \Psi_{f}^Y, \Psi_{f}^Z \}$ as illustrated in Fig.~\ref{SetsSlice}.  Note that $\{\Psi_{f}^X, \Psi_{f}^Y, \Psi_{f}^Z \}$ all have the same value for the extent, since
\begin{equation}
    \xi(\Psi_{f}^X)=\xi(\Psi_{f}^Y)=\xi(\Psi_f^Z)=\frac{1+f}{1+1/\sqrt{3}}. \label{eq:equimagical}
\end{equation}
Therefore, for these states, a mixture of states with the same amount of magic achieves the optimal convex roof extension. That is, each of these states admit an equimagical decompositions.

Next, we consider mixed states outside the convex hull of $\{\Psi_{f}^X, \Psi_{f}^Y, \Psi_{f}^Z \}$ and inside $P_Y \cap \mathbb{S}_f$ as illustrated in Fig.~\ref{SetsSlice}.  We define a set of linearly independent, Hermitian operators
\begin{align}
\label{WonkyBasis}
    \sigma_A  = \frac{X+Z-2Y}{\sqrt{6}} & ,  & 
    \sigma_B  = \frac{X-Z}{\sqrt{2}} ,  
\end{align}
and $\sigma_F$ as defined earlier. The set $\{ \sigma_A, \sigma_B, \sigma_F \}$ is unitarily equivalent to $\{X,Y, Z \}$, so every state can be decomposed as
\begin{align}
\label{WonkyDecomp}
    \rho = \left( \id + r_A \sigma_A + r_B \sigma_B + r_F \sigma_F \right) / 2,
\end{align}
where inside the slice $\mathbb{S}_f$ we have $r_F=f$. The variables $r_A$ and $r_B$ are used for the co-ordinate system in Fig.~\ref{SetsSlice}.

Given a mixed state $\rho$, we can define a pair of pure states $\Phi^+_\rho$ and $ \Phi^-_\rho$, such that
\begin{align}
     \bra{  \Phi^\pm_\rho  }   \sigma_A  \ket{ \Phi^\pm_\rho } & = \mathrm{Tr}[\sigma_A \rho] = r_A \\ \nonumber
       \bra{  \Phi^\pm_\rho  }   \sigma_F  \ket{ \Phi^\pm_\rho } & = \mathrm{Tr}[\sigma_F \rho] = f ,
\end{align}
and the states are pure, so that
\begin{equation}  \label{PhiPMstates}
      \bra{  \Phi^\pm_\rho  }   \sigma_A  \ket{ \Phi^\pm_\rho }^2 +  \bra{  \Phi^\pm_\rho  }   \sigma_B  \ket{ \Phi^\pm_\rho }^2    +  \bra{  \Phi^\pm_\rho  }   \sigma_F  \ket{ \Phi^\pm_\rho }^2 = 1 .
\end{equation}
There are two possible solutions for $\bra{  \Phi^\pm_\rho  }   \sigma_B  \ket{ \Phi^\pm_\rho }$, which leads to
\begin{align}  \label{PhiDefined}
    \kb{\Phi_{\rho}^\pm}{\Phi_{\rho}^\pm} & = \frac{1}{2}\left( \id + r_A \sigma_A \pm \sqrt{1-r_A^2-f^2} \sigma_B + f \sigma_F \right). 
\end{align}
By construction, $\rho$ is a convex combination of $\Phi^+_\rho$ and $ \Phi^-_\rho$.  The geometry is illustrated in Fig.~\ref{SetsSlice}, where the pair of purified states are shown as green dots with $\rho$ located on the line between them.  To deploy Lem.~\ref{EqualityLem}, it remains to prove that $\Phi_{\rho}^\pm$ share an optimal $\omega$-witness.  

The states $\Phi_{\rho}^\pm$ are both in the region $P_Y$, which can be seen from the geometry on Fig.~\ref{SetsSlice} though we also give an algebraic proof in App.~\ref{AppGeometry}.  Due to $ \Phi_{\rho}^\pm \in P_Y$, we can use Lem.~\ref{WitnessLemma} to determine the form of their optimal $\omega$-witnesses.  In Lem.~\ref{WitnessLemma}, the witness $\omega(q)$ had a free parameter $q$ that we had to maximize over. Since $\bra{\omega(q)}\sigma_{B} \ket{\omega(q)}=0$ for any $q$ value, we have
\begin{equation}
\label{equalityofwitness}
    \bra{\omega(q)}\rho \ket{\omega(q)}= |\bk{\Phi_{\rho}^+}{\omega(q)}|^2=|\bk{\Phi_{\rho}^-}{\omega(q)}|^2.
\end{equation}
Performing the maximization over $q$, the optimal $q$ value is the same for $\Phi_{\rho}^+$ and $\Phi_{\rho}^-$ due to Eq.~\eqref{equalityofwitness}.  Therefore, $\Phi_{\rho}^\pm$ share exactly the same optimal $\omega$-witness. This completes the proof of Lem.~\ref{1qubitLemma} and thus also of Thm.~\ref{Sandwich2}.\end{proof}

\begin{figure}
	\includegraphics[width=200pt]{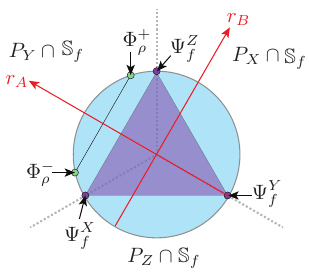}
	\caption{A slice $\mathbb{S}_f$ through the positive octant. States are parameterized by the coordinates $\{ r_A, r_B, r_F \}$ as defined in Eq.~\eqref{WonkyDecomp}.  For the slice $\mathbb{S}_f$ we have $r_F=f$ for some constant $f$.  The axes for the $\{ r_A, r_B \}$ coordinates are shown in red. The slice is divided into thirds corresponding the sets defined in Def.~\ref{OctantDefs} with the intersections of these sets shown with dashed lines.  Where these intersections meet the pure states we label the pure states $\{ \Psi_f^X, \Psi_f^Y, \Psi_f^Z  \}$ defined in Eq.~\eqref{SpecialStates}, and the purple triangle denotes the convex hull of the set $\{ \Psi_f^X, \Psi_f^Y, \Psi_f^Z  \}$.  States outside this convex set are considered as a mixture of two pure states $\ket{\Phi_\rho^{\pm}}$, defined in Eq.~\eqref{PhiDefined} and shown with green dots.}
	\label{SetsSlice}
\end{figure}

%%%%%%%%%%%%%%%%%%%%%%%%%%%%%%%%%%%%%%%%%%%%%%%%%%%%%%%%%%%%%%%%%%

\section{Multiplicativity}
\label{sec:Multi}

We now study the behavior of the monotones $\Xi$, $\Lambda$, and $\Lambda^+$ for tensor products of states. It was found by Bravyi \textit{et al.}~\cite{bravyi2018simulation} that $\omega$-witnesses of small dimension are closed under tensor products, formalized as follows.
  \begin{theorem}[\cite{bravyi2018simulation}]
 	\label{StabRank}
  Let $\ket{\omega_j }$ be vectors from a \mbox{1-,} \mbox{2-,} or 3-qubit Hilbert space such that each $\omega_j$ is an $\omega$-witness. Then $\smash{\ket{\Omega}:=\otimes_j \ket{\omega_j}}$ is an $\omega$-witness.
 \end{theorem}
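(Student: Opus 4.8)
The plan is to prove the dual (witness) condition directly: recalling from Eq.~\eqref{opt:extent-dual} that $\ket{\omega}$ is an $\omega$-witness exactly when $|\bk{\omega}{\phi}| \le 1$ for every stabilizer state $\ket{\phi}$, I must show that $\ket{\Omega} = \otimes_j \ket{\omega_j}$ satisfies $|\bk{\Omega}{\Phi}| \le 1$ for \emph{every} stabilizer state $\ket{\Phi}$ on the combined system. The only subtlety is that such $\ket{\Phi}$ need not be a product across the factors; if they were, the bound would factorize immediately. To handle entanglement across the cut, I would induct on the number of factors, so that it suffices to treat the bipartite case where one factor $A$ (at most three qubits) is split from the rest $B$, writing $\ket{\Omega} = \ket{\omega_1} \otimes \ket{\mu}$ with $\ket{\mu} = \otimes_{j\ge 2}\ket{\omega_j}$ assumed to be a witness by the inductive hypothesis. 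Since the witness condition is Clifford-invariant (if $\ket{\omega}$ is a witness then so is $U\ket{\omega}$ for Clifford $U$, because $U^\dagger\ket{\phi}$ is again a stabilizer state), I am free to apply local Cliffords on either side.

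The next step brings in the structural theory of bipartite stabilizer states (Fattal \emph{et al.}): any stabilizer $\ket{\Phi}$ on $A\otimes B$ admits a Schmidt decomposition $\ket{\Phi} = \chi^{-1/2}\sum_{k=1}^{\chi}\ket{a_k}\otimes\ket{b_k}$ in which \emph{all} Schmidt coefficients are equal to $\chi^{-1/2}$, the Schmidt vectors $\{\ket{a_k}\},\{\ket{b_k}\}$ are orthonormal sets of stabilizer states, and the reduced operators are proportional to stabilizer-code projectors $\Pi_A = \sum_k \kb{a_k}{a_k}$ and $\Pi_B = \sum_k \kb{b_k}{b_k}$ of rank $\chi \le 2^{n_A}\le 8$. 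Expanding the overlap gives $\bk{\Omega}{\Phi} = \chi^{-1/2}\sum_k \bk{\omega_1}{a_k}\bk{\mu}{b_k}$, and a Cauchy--Schwarz estimate yields
\begin{equation}
|\bk{\Omega}{\Phi}| \le \frac{1}{\sqrt{\chi}}\Big(\textstyle\sum_k |\bk{\omega_1}{a_k}|^2\Big)^{1/2}\Big(\textstyle\sum_k |\bk{\mu}{b_k}|^2\Big)^{1/2}.
\end{equation}
Thus the whole problem reduces to a single \emph{code-overlap inequality}, namely $\sum_k |\bk{\omega}{a_k}|^2 = \bra{\omega}\Pi\ket{\omega} \le \sqrt{\chi}$ for a witness $\ket{\omega}$ and a rank-$\chi$ stabilizer-code projector $\Pi$: applied on both sides, the two factors multiply to $\chi$ and cancel the prefactor $\chi^{-1}$, giving exactly $|\bk{\Omega}{\Phi}|\le 1$.

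It then remains to establish the code-overlap inequality on each side. On the small factor $A$ this is the base case: any stabilizer code on at most three qubits can be Clifford-reduced to a canonical form (a fixed stabilizer state on the passive qubits tensored with a free subspace), after which $\bra{\omega}\Pi\ket{\omega}\le\sqrt{\chi}$ reduces to a finite verification over the stabilizer codes of $n\le 3$ qubits. On the large factor $B$ the code $\Pi_B$ has small rank $\chi\le 8$ even though $B$ is large; I would supply the bound from the inductive hypothesis, which for this reason must be \emph{strengthened} from "$\ket{\mu}$ is a witness" (the rank-one case) to the full code-overlap inequality for products of witnesses against all low-rank stabilizer codes. The key fact making this strengthened statement self-consistent is that a stabilizer code has a "flat" marginal, $\Tr_B[\Pi]\propto$ a stabilizer-code projector on $A$, which is the higher-rank analogue of the equal-Schmidt-coefficient property and is what lets the Cauchy--Schwarz bound propagate through the induction.

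I expect the main obstacle to be the base-case code-overlap inequality for $n\le 3$ qubits: this is precisely where the dimension restriction is used, since the bound $\bra{\omega}\Pi\ket{\omega}\le\sqrt{\chi}$ is \emph{false} for sufficiently many qubits --- consistent with the known failure of multiplicativity of the extent beyond three qubits~\cite{heimendahl2020stabilizer} --- so no dimension-free argument can exist and a genuine finite check (or a tailored small-dimensional estimate) is unavoidable. The secondary difficulty is bookkeeping: correctly formulating the strengthened inductive invariant so that the large factor is controlled at low code-rank while the small factor carries the explicit verification.
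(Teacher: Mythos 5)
The paper does not actually prove Thm.~\ref{StabRank}; it is imported from Ref.~\cite{bravyi2018simulation} (as a rewording of their Corollaries 1 and 3), so the relevant comparison is with that reference's argument, and your plan is essentially a faithful reconstruction of it. The reduction via the Fattal \emph{et al.}\ structure theorem (flat Schmidt coefficients $\chi^{-1/2}$, orthonormal stabilizer Schmidt vectors, marginals proportional to rank-$\chi$ stabilizer code projectors) followed by Cauchy--Schwarz is exactly how Bravyi \emph{et al.}\ prove multiplicativity of the stabilizer fidelity, and your ``code-overlap inequality'' $\bra{\omega}\Pi\ket{\omega}\leq\sqrt{\chi}$ is the witness-normalized form of their \emph{stabilizer alignment} condition $\|\Pi\psi\|^2\leq\sqrt{\mathrm{Tr}\,\Pi}\,F(\psi)$ (with $\ket{\omega}\propto\ket{\psi}/\sqrt{F(\psi)}$). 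Your two meta-observations are also on target: the base case is a genuine finite verification for $n\leq 3$ (carried out in Ref.~\cite{bravyi2018simulation} partly by exhaustive enumeration over small stabilizer codes), and no dimension-free argument can exist because the inequality --- equivalently $F(\psi)\geq 2^{-n/2}$ for states supported on a rank-$2^n$ code --- fails for larger systems, consistent with~\cite{heimendahl2020stabilizer}.

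The one place your plan is under-specified is the inductive invariant. You propose to strengthen ``is a witness'' to ``satisfies the code-overlap inequality against all \emph{low-rank} stabilizer codes,'' on the grounds that the marginal codes arising from a bipartite stabilizer \emph{state} have rank at most $2^{n_A}\leq 8$. That suffices for the final application (rank one on the joint system), but it does not obviously close the induction: to verify the code-overlap inequality for $\omega_1\otimes\mu$ against a rank-$\chi$ code on $A\otimes B$, the structure theorem for bipartite stabilizer \emph{codes} produces a marginal on the large factor whose rank can exceed $\chi$ by a factor of up to $2^{n_A}$, so the admissible ranks grow as the induction proceeds. Bravyi \emph{et al.}\ avoid this by formulating stabilizer alignment with respect to stabilizer projectors of \emph{all} ranks and proving that this unrestricted property is preserved under tensoring with an aligned few-qubit state; your invariant should be stated the same way. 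With that correction the plan is sound and coincides with the proof in the cited reference.
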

This is a rewording of Corollary 1 and Corollary 3 of Ref.~\cite{bravyi2018simulation}.    From the above result, Ref.~\cite{bravyi2018simulation} further showed that the extent is multiplicative for such tensor products:
 \begin{theorem}[\cite{bravyi2018simulation}]
 	\label{ExtentMulti} 
 	Let $\ket{\psi_j }$ be 1-, 2-, or 3-qubit states. Then
 	\begin{equation}
 		 \xi( \otimes_j \ket{\psi_j} ) = \prod_j   \xi( \ket{\psi_j} ) .
	\end{equation}
\end{theorem}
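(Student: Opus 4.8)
The plan is to establish equality by proving the two one-sided inequalities $\xi(\otimes_j \ket{\psi_j}) \leq \prod_j \xi(\ket{\psi_j})$ and $\xi(\otimes_j \ket{\psi_j}) \geq \prod_j \xi(\ket{\psi_j})$ separately. The upper bound (submultiplicativity) holds in any dimension and is the routine direction; the lower bound is where the restriction to 1-, 2-, and 3-qubit factors becomes essential, and is where the substantive work sits.

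For the upper bound I would argue directly from the primal definition Eq.~\eqref{eq:stabextentDef}. For each factor pick an optimal stabilizer decomposition $\ket{\psi_j} = \sum_k c^{(j)}_k \ket{\phi^{(j)}_k}$ with $\ket{\phi^{(j)}_k} \in \mathrm{S}_{n_j}$ and $\|c^{(j)}\|_1^2 = \xi(\ket{\psi_j})$. Because a tensor product of pure stabilizer states is again a pure stabilizer state, expanding the product
\[
  \otimes_j \ket{\psi_j} = \sum_{k_1, k_2, \dots} \Big( \prod_j c^{(j)}_{k_j} \Big) \bigotimes_j \ket{\phi^{(j)}_{k_j}}
\]
gives an explicit stabilizer decomposition of $\otimes_j \ket{\psi_j}$ whose coefficient vector has $\ell_1$-norm $\prod_j \|c^{(j)}\|_1$. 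Since this is a feasible point of the minimization defining $\xi$, we obtain $\xi(\otimes_j \ket{\psi_j}) \leq \big(\prod_j \|c^{(j)}\|_1\big)^2 = \prod_j \xi(\ket{\psi_j})$.

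For the lower bound I would instead use the dual formulation Eq.~\eqref{opt:extent-dual}. For each factor take an optimal $\omega$-witness $\ket{\omega_j}$, so that $|\bk{\omega_j}{\psi_j}|^2 = \xi(\ket{\psi_j})$ and $|\bk{\omega_j}{\phi}| \leq 1$ for every $\ket{\phi} \in \mathrm{S}_{n_j}$. The key step is to observe that $\ket{\Omega} := \otimes_j \ket{\omega_j}$ is itself a valid $\omega$-witness on the composite system --- which is exactly the content of Thm.~\ref{StabRank}, guaranteeing closure of $\omega$-witnesses under tensor products when each factor lives on a 1-, 2-, or 3-qubit space. Inserting $\ket{\Omega}$ as a feasible point into the dual problem for the product state yields
\[
  \xi(\otimes_j \ket{\psi_j}) \geq |\bk{\Omega}{\otimes_j \psi_j}|^2 = \prod_j |\bk{\omega_j}{\psi_j}|^2 = \prod_j \xi(\ket{\psi_j}),
\]
and combining with the upper bound finishes the proof.

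The main obstacle is entirely isolated in the tensor-product closure of $\omega$-witnesses supplied by Thm.~\ref{StabRank}. Without it the lower bound collapses: a generic stabilizer state $\ket{\phi}$ of the composite system need not factorize, so the product witness $\ket{\Omega}$ could violate $|\bk{\Omega}{\phi}| \leq 1$ and would fail to be feasible in the dual problem --- consistent with the fact that multiplicativity is known to break down for larger factors. Since Thm.~\ref{StabRank} is assumed available, no further case analysis is required and the argument reduces to the short two-sided bound sketched above.
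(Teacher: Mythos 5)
Your proof is correct and follows essentially the same route as the cited source and as this paper's own proof of the analogous mixed-state result (Thm.~\ref{Multiplicative2}): submultiplicativity from tensoring optimal primal decompositions, and the matching lower bound from tensoring optimal dual witnesses, with Thm.~\ref{StabRank} supplying the crucial closure of $\omega$-witnesses under tensor products for factors of at most three qubits. Nothing further is needed.
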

 Here, we give a related multiplicativity result for several mixed state monotones.
  \begin{theorem}
	\label{Multiplicative2}
	Let $\sigma_j$ be single-qubit states. Then
	\begin{align}
		\Lambda(\otimes_j \sigma_j )  =  \Xi(\otimes_j \sigma_j ) = \Lambda^+(\otimes_j \sigma_j ) =   \prod_j \Lambda^+( \sigma_j),
	\end{align}	
	and furthermore $\otimes_j \sigma_j$ admits an equimagical decomposition  (recall Def.~\ref{RoofExtension}).
\end{theorem}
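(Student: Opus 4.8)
The plan is to bracket $\Lambda^+(\otimes_j \sigma_j)$ from above and below by $\prod_j \Lambda^+(\sigma_j)$, forcing all three monotones to collapse onto a single value via the monotone equality lemma. The upper bound is immediate: submultiplicativity gives $\Xi(\otimes_j \sigma_j) \leq \prod_j \Xi(\sigma_j)$, and for single-qubit states Thm.~\ref{Sandwich2} tells us $\Xi(\sigma_j) = \Lambda^+(\sigma_j)$, so $\Xi(\otimes_j \sigma_j) \leq \prod_j \Lambda^+(\sigma_j)$. Since $\Lambda^+ \leq \Lambda \leq \Xi$ by Thm.~\ref{Sandwich1}, it then suffices to prove the matching lower bound $\Lambda^+(\otimes_j \sigma_j) \geq \prod_j \Lambda^+(\sigma_j)$, which I would obtain by exhibiting a single good positive witness.

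For the lower bound I would build a composite witness. For each non-stabilizer factor $\sigma_j$, Lem.~\ref{1qubitLemma} supplies a single-qubit $\omega$-witness $\ket{\omega_j}$ with $\sigma_j \in B_{\omega_j}$; concretely this means $\sigma_j = \sum_k p^{(j)}_k \kb{\Psi^{(j)}_k}{\Psi^{(j)}_k}$ with $|\bk{\omega_j}{\Psi^{(j)}_k}|^2 = \xi(\Psi^{(j)}_k)$ for every $k$, and moreover $\langle \omega_j | \sigma_j | \omega_j \rangle = \Lambda^+(\sigma_j)$ by Lem.~\ref{EqualityLem}. Any stabilizer factor is handled trivially by taking $\ket{\omega_j}$ to be that stabilizer state itself, for which $\Lambda^+(\sigma_j) = 1$. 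Setting $\ket{\Omega} := \otimes_j \ket{\omega_j}$, Thm.~\ref{StabRank} guarantees that $\ket{\Omega}$ is again an $\omega$-witness, since each factor lives on a single-qubit space.

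The crux is then to show $\otimes_j \sigma_j \in B_\Omega$. Expanding the tensor product of the individual decompositions writes $\otimes_j \sigma_j$ as a convex combination of product pure states $\ket{\Psi} = \otimes_j \ket{\Psi^{(j)}_{k_j}}$, so it is enough to check each such $\Psi$ satisfies $|\bk{\Omega}{\Psi}|^2 = \xi(\Psi)$. On one hand $|\bk{\Omega}{\Psi}|^2 = \prod_j |\bk{\omega_j}{\Psi^{(j)}_{k_j}}|^2 = \prod_j \xi(\Psi^{(j)}_{k_j})$; on the other hand multiplicativity of the extent for single-qubit products (Thm.~\ref{ExtentMulti}) gives $\xi(\Psi) = \prod_j \xi(\Psi^{(j)}_{k_j})$. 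The two agree, so every product term lies in $B_\Omega$ and hence so does their convex combination. Applying Lem.~\ref{EqualityLem} to $\rho = \otimes_j \sigma_j$ with the witness $\Omega$ then collapses all three monotones to $\langle \Omega | (\otimes_j \sigma_j) | \Omega \rangle = \prod_j \langle \omega_j | \sigma_j | \omega_j \rangle = \prod_j \Lambda^+(\sigma_j)$, closing the sandwich and proving the equalities.

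Finally, for the equimagical claim I would feed equimagical single-qubit decompositions (which exist by Thm.~\ref{Sandwich2}) into the same construction: if each $\sigma_j = \sum_k p^{(j)}_k \kb{\Psi^{(j)}_k}{\Psi^{(j)}_k}$ has $\xi(\Psi^{(j)}_k) = \Lambda^+(\sigma_j)$ independent of $k$, then every product term has extent $\xi(\otimes_j \Psi^{(j)}_{k_j}) = \prod_j \Lambda^+(\sigma_j)$, again by Thm.~\ref{ExtentMulti}, which is constant over all multi-indices; hence the induced product decomposition of $\otimes_j \sigma_j$ is equimagical. The main thing to get right is the $B_\Omega$ membership — specifically that the product states genuinely \emph{attain} the extent rather than merely bound it, which is exactly where multiplicativity of $\xi$ is indispensable — together with the bookkeeping for any stabilizer factors; everything else is assembling the already-established lemmas.
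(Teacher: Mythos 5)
Your proof is correct and follows essentially the same route as the paper's: the upper bound from submultiplicativity of $\Xi$ together with Thm.~\ref{Sandwich2}, and the lower bound from the product witness $\ket{\Omega}=\otimes_j\ket{\omega_j}$ validated by Thm.~\ref{StabRank}, with the equimagical claim obtained by tensoring single-qubit equimagical decompositions and invoking Thm.~\ref{ExtentMulti}; your extra step of verifying $\otimes_j\sigma_j\in B_\Omega$ is just a repackaging of the same sandwich via Lem.~\ref{EqualityLem}. The only wrinkle is your treatment of stabilizer factors: for a \emph{mixed} stabilizer state $\sigma_j$ (e.g.\ $\id/2$) no rank-one $\omega$-witness attains $\bra{\omega_j}\sigma_j\ket{\omega_j}=1$, so the clean fix is to observe that tensoring with a free state changes none of the monotones (submultiplicativity plus monotonicity under partial trace) and to build the witness only over the non-stabilizer factors --- a minor gap the paper's own proof shares.
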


Prior to this work, there were no known strict multiplicativity results for resource monotones for mixed states in qubit magic theory.  For instance, Howard and Campbell~\cite{Howard17robustness} found that the robustness of magic can be strictly sub-multiplicative, $\mathcal{R}(\rho \otimes \rho) < \mathcal{R}(\rho)^2$ for all non-stabilizer $\rho$ considered, and we discuss this later in this section.   There does exist a multiplicative lower-bound on the robustness of magic, proved using the so-called stab-norm~\cite{Howard17robustness}. However, the lower bounds and upper bounds appear to always be loose and so we have no strict multiplicativity results. Additionally, Raussendorf \textit{et al.}~\cite{raussendorf2019phase} introduced a qubit-based phase-space robustness $\mathcal{R}_{\mathrm{ps}}$ that can behave strictly super-multiplicatively, so that $\mathcal{R}_{\mathrm{ps}}(\rho \otimes \rho) > \mathcal{R}_{\mathrm{ps}}(\rho)^2$ for some $\rho$.  It is natural to wonder if Thm.~\ref{ExtentMulti} or Thm.~\ref{Multiplicative2} could extend to a tensor product of states with arbitrary dimension.  However, in the final stages of completing this work, it was proved that this cannot hold in full generality~\cite{heimendahl2020stabilizer}. \br{It remains an open question whether the monotones satisfy multiplicativity for states composed of a low number of qubits, mirroring the multiplicativity of the extent; indeed, numerical results suggest that $\Lambda^+$ is also multiplicative for mixed two-qubit states.}

\begin{proof}[Proof of Theorem~\ref{Multiplicative2}]
From the definition of $\Xi$ we see that it is manifestly sub-multiplicative.  Combining this observation with Thm.~\ref{Sandwich2} we have that
\begin{equation}
	\label{XiUpper}
	\Xi(\otimes_j \sigma_j) \leq \prod_j \Xi( \sigma_j) = \prod_j \Lambda^+( \sigma_j) 
\end{equation}
holds for all products of single-qubit states.   Strengthening this to strict equality requires us to find a matching lower bound. The proof of Thm.~\ref{Sandwich2} established that for every single-qubit state the optimal $W^+$-witness has the form $\kb{\omega_j}{\omega_j}$ where $\omega_j$ is an $\omega$-witness.  By Thm.~\ref{StabRank}, $\ket{\Omega}=\otimes \ket{\omega_j}$ is also an $\omega$-witness, and consequently $\kb{\Omega}{\Omega}=\otimes \kb{\omega_j}{\omega_j}$ is a $W^+$-witness that can be used to lower bound $\Lambda^+$ as follows
\begin{equation}
		\label{LambdaPlus}
\prod_j \Lambda^+( \sigma_j)  =  \bra{\Omega} \otimes_j \sigma_j \ket{\Omega} \leq	\Lambda^+(\otimes_j \sigma_j) .
\end{equation}
Combining Eq.~\eqref{XiUpper}, Eq.~\eqref{LambdaPlus} and Thm.~\ref{Sandwich1} we obtain
\begin{align}
	\prod_j \Lambda^+( \sigma_j)  & \leq	\Lambda^+(\otimes_j \sigma_j)  \\ \nonumber & \leq \Lambda(\otimes_j \sigma_j) \leq \Xi(\otimes_j \sigma_j) \leq \prod_j \Lambda^+( \sigma_j).
\end{align}
Since the left- and rightmost quantities are the same, all these inequalities must collapse to equalities.

It remains to show that these product states admit equimagical decompositions. This is easily verified by taking an equimagical decomposition for each single qubit state (existence ensured by Thm.~\ref{Sandwich2}) and using this to construct the natural decomposition for the product state.  It then follows immediately from Thm.~\ref{ExtentMulti} that each pure term has equal extent and by the above argument that this is optimal w.r.t to the $\Xi$ monotone.\end{proof}

\section{Comparison with robustness} \label{SecCompareROM}
Here we discuss how our new monotones scale compared to the robustness of magic (recall Def.~\ref{DefRoM}).  While $\Lambda^+, \Lambda$ and $\Xi$ are often equal, the robustness of magic is typically much larger, as formalized in the following result.
\begin{lem}
	\label{RegulaLemma2}
	For any density matrix $\rho$ we have
	\begin{equation} \label{RobBound1}
	\mathcal{R}(\rho) \geq   2 \Lambda^+(\rho) -1.
	\end{equation}
	Furthermore, if $\rho$ is a single-qubit state this tightens to
	\begin{equation} \label{RobBound2}
	\mathcal{R}(\rho) \geq   (1+\sqrt{2}) \Lambda^+(\rho) -\sqrt{2}.
	\end{equation}
\end{lem}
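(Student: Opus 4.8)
The plan is to prove both inequalities by exhibiting explicit dual (witness) certificates for the robustness and lower-bounding $\mathcal{R}(\rho)$ by their value on $\rho$. By LP duality applied to Def.~\ref{DefRoM}, the robustness admits the dual form $\mathcal{R}(\rho) = \max\{\mathrm{Tr}[A\rho] : A = A^\dagger,\ |\bra{\phi}A\ket{\phi}| \leq 1\ \forall \ket{\phi}\in\mathrm{S}_n\}$, which is the analogue of Eq.~\eqref{DualFormulation2} with the witness condition relaxed to hold only on stabilizer \emph{diagonals}. Consequently, any Hermitian $A$ obeying $|\bra{\phi}A\ket{\phi}|\leq 1$ on all pure stabilizer states certifies $\mathcal{R}(\rho)\geq\mathrm{Tr}[A\rho]$. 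To obtain Eq.~\eqref{RobBound1} I would take an optimal positive witness $W$ for $\Lambda^+$, so that $W\geq 0$, $\bra{\phi}W\ket{\phi}\leq 1$ for all $\ket{\phi}\in\mathrm{S}_n$, and $\mathrm{Tr}[W\rho]=\Lambda^+(\rho)$, and set $A := 2W-\id$. Since $0\leq\bra{\phi}W\ket{\phi}\leq 1$ (the lower bound coming from $W\geq 0$), we get $\bra{\phi}A\ket{\phi}=2\bra{\phi}W\ket{\phi}-1\in[-1,1]$, so $A$ is robustness-feasible and $\mathcal{R}(\rho)\geq\mathrm{Tr}[A\rho]=2\Lambda^+(\rho)-1$. (Equivalently, this follows by comparing the primal forms Eq.~\eqref{eq:RobDef2} and Eq.~\eqref{eq:LambdaPlusDefn3}: every feasible $\sigma\in\bar{\mathrm{S}}_{n}$ for $\mathcal{R}$ is a legitimate density matrix $\rho'$ for $\Lambda^+$, giving $\Lambda^+(\rho)\leq(\mathcal{R}(\rho)+1)/2$.)

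For the sharper single-qubit bound Eq.~\eqref{RobBound2} I would refine the affine construction $A=aW+b\id$ using the explicit witness structure. By Clifford invariance of $\mathcal{R}$ and $\Lambda^+$ I may assume $\rho\in P_Y$; then, as established in the proof of Thm.~\ref{Sandwich2}, the optimal positive witness is $W=\kb{\omega}{\omega}$ of the form Eq.~\eqref{OptWitness} for some $q\in[\sqrt{2/3},1]$. Writing $\kb{\omega}{\omega}=\lambda(\id+q_x X+q_y Y+q_z Z)$ with $q_x=q_z=q/\sqrt{2}$, $q_y=\sqrt{1-q^2}$ and $\lambda=(1+q/\sqrt{2})^{-1}$, a direct evaluation on the six single-qubit stabilizer states gives $\bra{\phi}W\ket{\phi}=\lambda(1\pm q/\sqrt{2})$ on $\ket{0},\ket{1},\ket{\pm}$ and $\lambda(1\pm\sqrt{1-q^2})$ on $\ket{\pm_i}$. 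Because $q\geq\sqrt{2/3}$ forces $q/\sqrt{2}\geq 1/\sqrt{3}\geq\sqrt{1-q^2}$, these overlaps all lie in $[m,1]$ with minimum $m:=\lambda(1-q/\sqrt{2})=\tfrac{1-q/\sqrt{2}}{1+q/\sqrt{2}}>0$.

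The key improvement over the general case is that this minimum overlap is strictly positive, which lets me use a steeper affine map. Choosing $a=\tfrac{2}{1-m}$ and $b=-\tfrac{1+m}{1-m}$ makes $\bra{\phi}A\ket{\phi}=a\bra{\phi}W\ket{\phi}+b$ range exactly over $[-1,1]$ as $\bra{\phi}W\ket{\phi}$ ranges over $[m,1]$, so $A$ is again robustness-feasible and $\mathcal{R}(\rho)\geq a\Lambda^+(\rho)+b=\tfrac{2\Lambda^+(\rho)-1-m}{1-m}=:g(m)$. It remains to make this uniform in $q$. Since $g(m)=\tfrac{(2\Lambda^+(\rho)-1)-m}{1-m}$ has $\partial_m g=\tfrac{2(\Lambda^+(\rho)-1)}{(1-m)^2}\geq 0$, $g$ is nondecreasing in $m$, while $m=\tfrac{1-q/\sqrt{2}}{1+q/\sqrt{2}}$ is decreasing in $q$; hence over $q\in[\sqrt{2/3},1]$ the bound is weakest at $q=1$, where $m=3-2\sqrt{2}$. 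Substituting $m=3-2\sqrt{2}$ into $g(m)$ and simplifying yields $g(m)=(1+\sqrt{2})\Lambda^+(\rho)-\sqrt{2}$, which therefore lower-bounds $\mathcal{R}(\rho)$ for every single-qubit state, proving Eq.~\eqref{RobBound2}. The main thing to get right is precisely this last monotonicity argument: a given $\rho$ comes with its own witness parameter $q(\rho)$, so one must verify that the family of bounds $g(m(q))$ is minimized at $q=1$ in order to extract a single state-independent pair of constants, and the strict positivity of $m$ (equivalently $q<\sqrt{2}$) is exactly what produces the strengthened slope $1+\sqrt{2}>2$.
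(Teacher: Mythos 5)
Your proposal is correct and follows essentially the same route as the paper: both lower-bound $\mathcal{R}(\rho)$ via an affine rescaling $aW+b\id$ of the optimal $W^+$-witness into a robustness witness, with the slope determined by the minimum stabilizer overlap ($s\geq 0$ in general, $s=(1-q/\sqrt{2})/(1+q/\sqrt{2})\geq 3-2\sqrt{2}$ for single-qubit witnesses of the form of Eq.~\eqref{OptWitness}), followed by the same monotonicity-in-$s$ argument to extract state-independent constants. The only cosmetic difference is that you verify the minimum overlap by evaluating $W$ on all six stabilizer states rather than invoking that $\ket{1}$ is antipodal to the maximal-overlap state $\ket{0}$.
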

We remark that a similar result to Eq.~\eqref{RobBound1} for $\Lambda$ is claimed in Refs.~\cite{rudolph_2005,regula2017convex}, but the proof contains an error. 
% See Fig.~\ref{STABbounds} for a numerical comparison of $\mathcal{R}(\rho)$ and $\Lambda^+(\rho)$ for a class of 1-qubit states for which our bound is tight. 

However, because the robustness of magic is not multiplicative, Lem.~\ref{RegulaLemma2} does not tell us much about how the different monotones scale.  For this, we observe that the gap can scale exponentially.

\begin{theorem} \label{ExpGapProp}
Given any single-qubit non-stabilizer state $\rho$, there exists positive real constants $\alpha$ and $\beta$ where  $\alpha > \beta$ and
\begin{align} \label{RoMlowerBound}
    2^{\alpha n} & \leq \mathcal{R}(\rho^{\otimes n}) \\  \label{LambdalowerBound}
    2^{\beta n} & = \Lambda(\rho^{\otimes n}) = \Lambda^+(\rho^{\otimes n})= \Xi(\rho^{\otimes n})  ,
\end{align}
For example, for the Hadamard $\ket{H}$ state we will show that this holds with $\alpha=0.271553$ and $\beta=0.228443$.
\end{theorem}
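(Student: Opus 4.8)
The plan is to prove the two displayed lines separately and then certify the strict gap $\alpha>\beta$. The equality line \eqref{LambdalowerBound} is essentially immediate: setting $\beta:=\log_2\Lambda^+(\rho)$, faithfulness gives $\Lambda^+(\rho)>1$ for non-stabilizer $\rho$, so $\beta>0$, and applying Thm~\ref{Multiplicative2} with every tensor factor equal to $\rho$ yields $\Lambda(\rho^{\otimes n})=\Xi(\rho^{\otimes n})=\Lambda^+(\rho^{\otimes n})=\Lambda^+(\rho)^n=2^{\beta n}$.

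For the robustness bound \eqref{RoMlowerBound} I would introduce the multiplicative \emph{stabilizer norm} $D(\rho):=2^{-n}\sum_{P\in\mathcal P_n}\lvert\Tr(\rho P)\rvert$. First, $D$ is multiplicative, $D(\rho\otimes\sigma)=D(\rho)D(\sigma)$, since Pauli expectation values factorize over tensor products. Second, $D(\rho)\le\mathcal R(\rho)$: for any stabilizer decomposition $\rho=\sum_j q_j\kb{\phi_j}{\phi_j}$ each pure stabilizer obeys $\bra{\phi_j}P\ket{\phi_j}\in\{0,\pm1\}$, with exactly $2^n$ nonzero values (its signed stabilizer group), so $2^{-n}\sum_P\lvert\bra{\phi_j}P\ket{\phi_j}\rvert=1$; the triangle inequality then gives $D(\rho)\le\sum_j\lvert q_j\rvert$, and minimizing over decompositions gives $D(\rho)\le\mathcal R(\rho)$. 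Combining the two properties, $\mathcal R(\rho^{\otimes n})\ge D(\rho^{\otimes n})=D(\rho)^n=2^{\alpha n}$ with $\alpha:=\log_2 D(\rho)$.

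The crux is the strict separation $\alpha>\beta$, equivalently $D(\rho)>\Lambda^+(\rho)$ for every single-qubit non-stabilizer state. Both quantities are Clifford invariant ($D$ because Cliffords permute Paulis, $\Lambda^+$ because it is a monotone), so I would reduce to $\rho\in P_Y$, where $D(\rho)=\tfrac12\bigl(1+\langle X\rangle+\langle Y\rangle+\langle Z\rangle\bigr)$. The witness analysis underlying Thm~\ref{Sandwich2} (Lem.~\ref{WitnessLemma} and Lem.~\ref{EqualityLem}) shows the optimal positive witness takes the form \eqref{OptWitness}, so that
\begin{equation}
\Lambda^+(\rho)=\max_{q\in[\sqrt{2/3},\,1]}\frac{1+q\langle H\rangle+\sqrt{1-q^2}\,\langle Y\rangle}{1+q/\sqrt2},
\end{equation}
with $\langle H\rangle=(\langle X\rangle+\langle Z\rangle)/\sqrt2$. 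It then suffices --- and is in fact stronger --- to show that $D(\rho)$ exceeds the witness value for \emph{every} admissible $q$; clearing the positive denominator turns this into a one-parameter inequality that is convex in $t:=q/\sqrt2$, so its minimum over $t\in[1/\sqrt3,1/\sqrt2]$ sits at an endpoint or a unique interior critical point. Verifying positivity at these points, and that the gap vanishes only in the stabilizer limit, delivers the strict inequality.

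Finally I would record the explicit constants. For the Hadamard state, $D(\kb HH)=\tfrac12(1+\tfrac1{\sqrt2}+0+\tfrac1{\sqrt2})=\tfrac{1+\sqrt2}{2}$, so $\alpha=\log_2\tfrac{1+\sqrt2}{2}=0.271553$, while $\Lambda^+(\kb HH)=\xi(H)=\bigl(\cos\tfrac\pi8+(\sqrt2-1)\sin\tfrac\pi8\bigr)^2=1.17157$ gives $\beta=0.228443$, confirming $\alpha>\beta$. I expect the genuine difficulty to lie entirely in the third paragraph: establishing $D>\Lambda^+$ \emph{uniformly} over the two-parameter family of single-qubit non-stabilizer states, since the optimal witness parameter $q$ itself drifts with the state and the bound is saturated in the stabilizer limit; by contrast the multiplicativity input and the stab-norm bound on $\mathcal R$ are routine given the earlier results.
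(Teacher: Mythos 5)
Your proposal is correct, and its first two steps coincide with the paper's: the equality line follows from Thm.~\ref{Multiplicative2} with $\beta=\log_2\Lambda^+(\rho)$ (equal to the paper's $\log_2\Lambda(\rho)$ by Thm.~\ref{Sandwich2}), and the bound \eqref{RoMlowerBound} comes from the multiplicative stab-norm exactly as in the paper (which cites rather than re-derives $\mathcal{D}\le\mathcal{R}$). Where you genuinely diverge is the crux $\mathcal{D}(\rho)>\Lambda^+(\rho)$. The paper proves this indirectly: it first establishes the exact single-qubit formula $\mathcal{R}(\rho)=|\langle X\rangle|+|\langle Y\rangle|+|\langle Z\rangle|=2\mathcal{D}(\rho)-1$ and then feeds it into the single-qubit tightening of Lem.~\ref{RegulaLemma2}, $\mathcal{R}\ge(1+\sqrt2)\Lambda^+-\sqrt2$, which yields $\mathcal{D}>\Lambda^+$ whenever $\Lambda^+>1$. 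You instead compare $\mathcal{D}$ directly against the one-parameter witness family of Lem.~\ref{WitnessLemma} (extended to mixed states via Lem.~\ref{1qubitLemma}). Your route is sound, and the final positivity check you leave as a sketch actually closes more cleanly than you anticipate: writing $s=\langle X\rangle+\langle Y\rangle+\langle Z\rangle$ and clearing the positive denominator, the quantity to be shown positive is $\tfrac{s-1}{2}\bigl(1-\tfrac{q}{\sqrt2}\bigr)+\langle Y\rangle\bigl(\tfrac{q}{\sqrt2}-\sqrt{1-q^2}\bigr)$; both terms are non-negative for $q\in[\sqrt{2/3},1]$ (the second because $q^2\ge 2/3$, the first because $q\le 1$), and the first is strictly positive precisely when $s>1$, i.e.\ off the stabilizer octahedron. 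So no convexity or critical-point analysis is needed at all. The trade-off is that the paper's argument is shorter given Lem.~\ref{RegulaLemma2} and also yields the exact value of $\mathcal{R}$ for one qubit as a by-product, whereas yours is self-contained modulo the witness classification and exhibits the gap term explicitly; both ultimately rest on Lem.~\ref{WitnessLemma}. Your numerical constants for $\ket{H}$ match the paper's.
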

 
\begin{proof}[Proof of Lem.~\ref{RegulaLemma2}]
The dual formulation of the robustness of magic tells us that $\mathcal{R}(\rho) \geq \mathrm{Tr}[R \rho]$ for any $R$ such that $|\bra{\phi} R \ket{\phi}| \leq 1$ for all $\ket{\phi}$ that are stabilizer states.  We call such an operator an $R$-witness. Note that an $R$-witness is not necessarily positive.  Let $W$ denote the $W^+$-witness such that $\Lambda^+(\rho)=\mathrm{Tr}[W \rho]$.  Now, we consider the operator 
\begin{equation}
    R = \frac{2}{1-s} W -  \frac{1+s}{1-s}\id ,
\end{equation}
where 
\begin{equation}
  s  = \mathrm{min}_{\phi \in \mathrm{S}_n} \bra{\phi} W \ket{\phi}    .
\end{equation}
Next, we show $R$ is indeed an $R$-witness.  For any $\ket{\phi} \in \mathrm{S}_n$,  \begin{align}
    \bra{\phi}R \ket{\phi}& = \frac{2}{1-s} \bra{\phi}W \ket{\phi}-\frac{1+s}{1-s} \\ \nonumber
    & \leq \frac{2}{1-s} -\frac{1+s}{1-s} = 1  ,
\end{align}
where we have used $\bra{\phi}W \ket{\phi} \leq 1$.  Using $\bra{\phi}W \ket{\phi} \geq s$, we similarly obtain
\begin{align}
    \bra{\phi}R \ket{\phi} & \geq \frac{2s}{1-s} -\frac{1+s}{1-s} = -1 ,
\end{align}
Therefore, $R$ is indeed an $R$-witness and we can lower bound the robustness as follows
\begin{align}
\label{RobBoundGeneral}
    \mathcal{R}(\rho) \geq \mathrm{Tr}[R \rho] & = \frac{2}{1-s}\Lambda^{+}(\rho) - \frac{1+s}{1-s} \\ \nonumber
    & = \frac{(2\Lambda^{+}(\rho)-1) - s}{1-s} ,
\end{align}
Since $\Lambda^{+}(\rho) \geq 1$, the right hand side is monotonically increasing with $s$ on the relevant range $s \in [0,1)$. This prompts the question whether we can lower bound $s$.  By definition $s \geq 0$ for any $W^{+}$-witness and so Eq.~\eqref{RobBound1} holds in general.  In the special case of single-qubit states, and assuming for brevity that $\rho \in P_Y$, we know the optimal witness has the form $W=\kb{\omega}{\omega}$ given by Lem.~\ref{WitnessLemma}. Since $\ket{0}$ has the largest possible overlap with $\ket{\omega}$, it follows that $\ket{1}$ must have the smallest possible overlap and one finds that
\begin{equation}
   \bra{1}W\ket{1}= s = (1-q/\sqrt{2})/(1+q/\sqrt{2}).    
\end{equation}
Over the allowed range $q \in [\sqrt{2/3},1]$, we have
\begin{equation}
    s \geq (1-1/\sqrt{2})/(1+1/\sqrt{2}),
\end{equation}
for every optimal single-qubit $W^+$-witness. Substituting this into Eq.~\eqref{RobBoundGeneral} gives Eq.~\eqref{RobBound2}. \end{proof}

% \begin{figure}
%	\includegraphics[width=250pt]{CompareMonotonesV3.pdf}
%	\caption{Comparison of different quantities for single-qubit equatorial states.  The stab-norm $\mathcal{D}$ (see the supplementary material of Ref.~\cite{Howard17robustness} and also Ref.~\cite{Campbell11}) also provides a lower bound on the robustness of magic $\mathcal{R}$. While the stab-norm bound is loose, it establishes an exponential gap for many copies of a single-qubit state (see Thm.~\ref{ExpGapProp}). For this class of states, the bound in Eq.~\eqref{RobBound2} is tight.}  \label{STABbounds}
% \end{figure}

\begin{proof}[Proof of Thm.~\ref{ExpGapProp}]  The stab-norm $\mathcal{D}$ has been shown to provide a lower bound on the robustness of magic (see the supplementary material of Ref.~\cite{Howard17robustness} and also Ref.~\cite{Campbell11}), so that for any single-qubit non-stabilizer state $\rho$ we have
\begin{equation} \label{StabNormLower}
 \mathcal{D}(\rho)^n =  \mathcal{D}(\rho^{\otimes n}) \leq \frac{\mathcal{D}(\rho)^n - \frac{1}{2^n}}{1- \frac{1}{2^n}} \leq \mathcal{R}(\rho^{\otimes n}) ,
\end{equation}
where the stab-norm of a single-qubit state is 
\begin{equation}
 \mathcal{D}(\rho) =  \frac{1}{2}(1 + |  \langle X \rangle | +|  \langle Y \rangle |+ |  \langle Z \rangle |).
\end{equation}
Defining $\alpha=\log_2( \mathcal{D}(\rho) )$, we obtain Eq.~\eqref{RoMlowerBound}. For instance $\mathcal{D}(\kb{H}{H})=1.207$ and so $\alpha=0.271553$ for Hadamard states.  

Similarly, Eq.~\eqref{LambdalowerBound} holds due to Thm.~\ref{Multiplicative2} and setting $\beta:= \log_2( \Lambda(\rho) )$.  For instance, $\beta=0.228443$ for Hadamard states.  To show $\alpha > \beta$ in general, we need to show that $\mathcal{D}(\rho) > \Lambda(\rho) $ for all non-stabilizer, single-qubit states. We note that for a single-qubit we have
\begin{equation}
 \mathcal{R}(\rho)=|\langle X \rangle |  + |\langle Y \rangle | + |\langle Z \rangle |
\end{equation}
for any non-stabilizer state. This can be shown by using Eq.~\eqref{StabNormLower} to obtain a lower bound on $\mathcal{R}(\rho)$, with the corresponding upper bound following from a simple quasiprobability decomposition into stabilizer states. Therefore, $\mathcal{R}(\rho)= 2\mathcal{D}(\rho)-1$ and
combining this with Lem.~\ref{RegulaLemma2}, we get
\begin{equation}
  \mathcal{D}(\rho) \geq \frac{1+\sqrt{2}}{2}\Lambda^{+}(\rho)-\left(  \frac{\sqrt{2}-1}{2} \right).
\end{equation}
This reveals that $\mathcal{D}(\rho)>\Lambda^{+}(\rho)$ whenever $\Lambda^{+}(\rho)>1$.
\end{proof}

We further remark that the robustness of magic is not multiplicative and the known upper bounds on $\mathcal{R}(\rho^{\otimes n})$ are loose compared to the lower bound in Eq.~\eqref{RoMlowerBound}.  For instance, Heinrich and Gross~\cite{heinrich2018robustness} showed that for the Hadamard state (or the equivalent $T$-state) $\mathcal{R}(\kb{H}{H}^{\otimes n})= \mathcal{O}(2^{0.368601 n})$ and this is the best known upper bound.

\begin{figure*}
\centering
\begin{tikzpicture}
    \node at (0,0) {\includegraphics{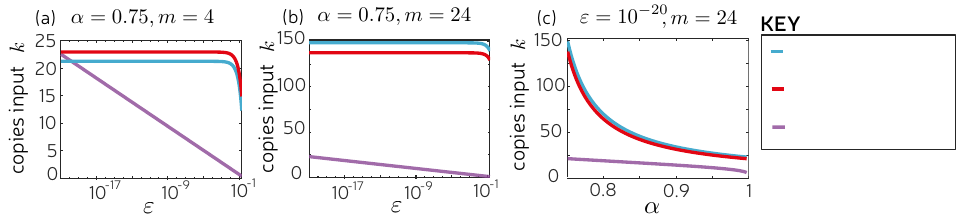}};
    \node  (X) at (6.37,0.98) [align=left]{\footnotesize Bound in Eq.~\eqref{eq:dist_bound1}};
    \node  (X) at ($(X)-(0,0.64)$) {\footnotesize Bound in Eq.~\eqref{eq:dist_bound2}};
    \node  (X) at ($(X)-(-0.19,0.64)$) {\footnotesize Bound from Ref.~\cite{fang_2019-1}};
\end{tikzpicture}
\caption{Comparison of the lower bounds for the number of copies $k$ of the state $\rho=\alpha \vert H \rangle \langle H \vert + (1-\alpha)\id/2$ necessary to distill $m$ copies of $\ket{H}$ with success probability $p=0.9$ and output infidelity $\varepsilon$. In (a), we fix $\alpha=0.75$ and demonstrate that the bounds in this paper can characterize distillation well in a range of physical error regimes even for a small number of target copies ($m=4$), providing a better bound than Ref.~\cite{fang_2019-1} down to $\varepsilon \approx 10^{-21}$. In (b), we show that the bounds substantially improve when $m$ increases. This suggests in particular that, even though the bound of Ref.~\cite{fang_2019-1} gets increasingly better as $\epsilon \to 0$ by construction, in practical regimes its performance can be exceeded by considering a larger number of copies of the distillation target. We show this in (c) by varying the input error parameter $\alpha$ with a fixed small output error of $\varepsilon = 10^{-20}$ and 24 target copies $m$. Our bounds perform better even in the regime of $\alpha$ close to 1, and their performance can be improved further by considering larger $m$. Note that our bounds apply also to the pure-state case ($\alpha=1$), while the bound of Ref.~\cite{fang_2019-1} explicitly applies only to full-rank inputs.}
\label{fig:bounds_comparison}
\end{figure*}

\section{Distillation and asymptotic rates}
\label{sec:distill}

We now consider the scenario of distillation --- that is, consuming many copies of an input resource state $\rho$ to prepare copies of some target state --- and show how the quantifiers we introduced characterize this task. Firstly, it is easy to see using the multiplicativity of the magic monotones $\Lambda$, $\Lambda^+$, and $\Xi$ for single-qubit systems together with their monotonicity that, whenever there exists a stabilizer operation taking $\rho^{\otimes k} \to \sigma^{\otimes m}$ for some single-qubit $\rho$ and $\sigma$, we must have
\begin{equation}\begin{aligned}
\frac{k}{m} \geq \frac{\log \Lambda(\sigma)}{\log \Lambda(\rho)} ,
\end{aligned}\end{equation}
and analogously for the other magic monotones. This already allows one to obtain insightful no-go results on the transformations between stabilizer states and gate synthesis, along the lines considered in~\cite{Howard17robustness} but without the need to perform the difficult computation of the monotones for many copies of a state.

However, in practical settings it is often desirable to go beyond such exact transformations and consider protocols which allow for imperfect conversion. Our quantifiers can yield bounds for the efficiency of more general distillation protocols and their asymptotic rates. We focus on the magic monotone $\Lambda^+$ as it is the most efficiently computable out of the three and gives us the tightest bounds. 
A useful property of $\Lambda^+$ is its monotonicity on average under general probabilistic protocols:
specifically, we have~\cite{regula2017convex}
\begin{equation}\label{eq:prob_monotonicity}
\Lambda^+(\rho) \geq \sum_i p_i \Lambda^+\!\left(\frac{O_i(\rho)}{p_i}\right),
\end{equation}
where each $O_i$ is a stabilizer-preserving quantum operation that need not preserve trace (i.e. $O_i(\sigma) \propto \omega \in \bar{\mathrm{S}}_{n} \; \forall \,\sigma \in \bar{\mathrm{S}}_{n}$), the overall quantum operation $\sum_i O_i$ preserves trace, and $p_i = \Tr (O_i(\rho))$ denotes the probability that the input state $\rho$ is transformed to the output $O_i(\rho)$.

The most general representation of a distillation protocol is then an operation which takes $k$ copies of a given input $\rho$ to $m$ copies of some desired pure output state $\psi$, up to error $\varepsilon$ in fidelity, and succeeding with probability $p$. All such protocols are limited as follows.
\begin{theorem}\label{thm:distillation_bounds}
Let $\rho$ be any $n$-qubit quantum state, and $\psi$ a pure state of at most 3 qubits. If there exists a probabilistic (that is, not necessarily trace-preserving) stabilizer operation taking $\smash{\rho^{\otimes k} \to p \tau}$, where $\tau$ is a state such that 
$\smash{ \bra{\psi^{\otimes m}} \tau \ket{\psi^{\otimes m}} \geq 1-\varepsilon}$, then it necessarily holds that
\begin{equation}\begin{aligned}\label{eq:dist_bound1}
	k \geq \frac{\log p + \log(1-\varepsilon) + m \log F(\psi)^{-1}}{\log \Lambda^+(\rho)}
\end{aligned}\end{equation}
and
\begin{equation}\begin{aligned}\label{eq:dist_bound2}
	k \geq p \left( \frac{\log (1-\varepsilon) + m \log F(\psi)^{-1}}{\log \Lambda^+(\rho)} \right)
\end{aligned}\end{equation}
where $\smash{F(\psi) = \max_{\ket{\phi} \in \mathrm{S}_n} \left|\bk{\psi}{\phi}\right|^2}$ denotes the stabilizer fidelity~\cite{bravyi2018simulation}.
\end{theorem}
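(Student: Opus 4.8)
The plan is to reduce both bounds to a single chain of inequalities for $\Lambda^+$, built from three ingredients available above: submultiplicativity, the probabilistic monotonicity of Eq.~\eqref{eq:prob_monotonicity}, and a witness argument that converts the output fidelity into a lower bound on $\Lambda^+(\tau)$. Writing the success branch of the protocol as one outcome $O_{\mathrm{succ}}$ of a trace-preserving stabilizer instrument $\{O_i\}_i$, with $O_{\mathrm{succ}}(\rho^{\otimes k}) = p\tau$ and $\Tr[O_{\mathrm{succ}}(\rho^{\otimes k})]=p$, I would first apply Eq.~\eqref{eq:prob_monotonicity} and discard every non-success term (each is $\geq 0$ since $\Lambda^+\geq 1$) to get $\Lambda^+(\rho^{\otimes k}) \geq p\,\Lambda^+(\tau)$. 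Combined with submultiplicativity, $\Lambda^+(\rho^{\otimes k})\leq \Lambda^+(\rho)^k$, this yields $\Lambda^+(\rho)^k \geq p\,\Lambda^+(\tau)$, which after taking logarithms produces Eq.~\eqref{eq:dist_bound1} once $\Lambda^+(\tau)$ is bounded below.

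The crux is the lower bound $\Lambda^+(\tau)\geq (1-\varepsilon)F(\psi)^{-m}$, and this is where the at-most-3-qubit hypothesis on $\psi$ enters. The natural candidate witness is $W = F(\psi)^{-m}\kb{\psi^{\otimes m}}{\psi^{\otimes m}}$, which is manifestly positive and, if admissible, gives $\Tr[W\tau] = F(\psi)^{-m}\bra{\psi^{\otimes m}}\tau\ket{\psi^{\otimes m}} \geq (1-\varepsilon)F(\psi)^{-m}$ directly from the fidelity hypothesis and the primal form of $\Lambda^+$. The content to verify is that $W$ is a genuine $W^+$-witness, i.e.\ $\bra{\Phi}W\ket{\Phi}\leq 1$ for every stabilizer state $\Phi$ of the full $(\leq 3m)$-qubit output, including entangled ones. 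I would establish this by noting that the single-copy vector $\ket{\omega}:=\ket{\psi}/\sqrt{F(\psi)}$ is an $\omega$-witness (since $|\bk{\omega}{\phi}| = |\bk{\psi}{\phi}|/\sqrt{F(\psi)}\leq 1$ for every stabilizer $\phi$, by definition of $F$), and then invoking Thm.~\ref{StabRank}: because $\psi$ occupies at most three qubits, the tensor product $\ket{\Omega}:=\ket{\omega}^{\otimes m} = \ket{\psi^{\otimes m}}/F(\psi)^{m/2}$ is again an $\omega$-witness, so $W = \kb{\Omega}{\Omega}$ satisfies $\bra{\Phi}W\ket{\Phi} = |\bk{\Omega}{\Phi}|^2 \leq 1$ on all stabilizer states. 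This is exactly the step that fails for a generic target: it amounts to multiplicativity of the stabilizer fidelity, $F(\psi^{\otimes m}) = F(\psi)^m$, which Thm.~\ref{StabRank} guarantees only for $\leq 3$-qubit building blocks, and I expect it to be the main obstacle of the argument.

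Finally, for the sharper bound Eq.~\eqref{eq:dist_bound2} I would pass to the logarithmic monotone. Since the weights in Eq.~\eqref{eq:prob_monotonicity} sum to one, concavity of $\log$ (Jensen) upgrades probabilistic monotonicity to $\log\Lambda^+(\rho^{\otimes k}) \geq \sum_i p_i \log\Lambda^+\!\big(O_i(\rho^{\otimes k})/p_i\big) \geq p\,\log\Lambda^+(\tau)$, again discarding the nonnegative non-success terms. Chaining with $\log\Lambda^+(\rho^{\otimes k})\leq k\log\Lambda^+(\rho)$ and the fidelity witness bound gives $k\log\Lambda^+(\rho) \geq p\big(\log(1-\varepsilon)+m\log F(\psi)^{-1}\big)$, i.e.\ Eq.~\eqref{eq:dist_bound2}. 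Dividing through by $\log\Lambda^+(\rho)>0$ (the bounds are vacuous when $\rho$ is a stabilizer state) completes both parts; the two bounds are complementary, with the second trading the additive $\log p$ of the first for an overall prefactor $p$, so that different regimes of $p$ favour one or the other.
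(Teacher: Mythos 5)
Your proposal is correct and follows essentially the same route as the paper's proof: submultiplicativity plus probabilistic monotonicity of $\Lambda^+$ give $\Lambda^+(\rho)^k \geq p\,\Lambda^+(\tau)$, the dual feasibility of $\kb{\psi^{\otimes m}}{\psi^{\otimes m}}/F(\psi)^{m}$ (justified via Thm.~\ref{StabRank}, i.e.\ multiplicativity of the stabilizer fidelity for $\leq 3$-qubit targets) lower-bounds $\Lambda^+(\tau)$ by $(1-\varepsilon)F(\psi)^{-m}$, and passing to $\log\Lambda^+$ with concavity yields the second bound. The only difference is that you spell out the witness-admissibility check more explicitly than the paper, which simply cites the same references.
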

The above establishes two bounds on the least number of copies of $\rho$ necessary to perform the distillation of $\psi$ up to the desired accuracy, characterizing the dependence on the resources contained in both $\rho$ (as quantified by $\Lambda^+$) and in $\psi$ (as quantified by stabilizer fidelity $F$). Note that either of the two bounds can perform better, depending on the values of the parameters (see Fig.~\ref{fig:bounds_comparison}).
\begin{proof}
By sub-multiplicativity of $\Lambda^+$ we have $\Lambda^+(\rho)^k \geq	\Lambda^+(\rho^{\otimes k})$.  By monotonicity \br{under probabilistic protocols} (see Eq.~\eqref{eq:prob_monotonicity}) we have $\Lambda^+(\rho^{\otimes k}) \geq  p \Lambda^+ (\tau)$.  Because
$\kb{\psi}{\psi}^{\otimes m}/F(\psi^{\otimes m}) $ is a \smash{$W^+$-witness} and hence a feasible solution to the dual form of $\Lambda^+$, we arrive at 
\begin{equation}\begin{aligned}
\Lambda^+(\rho)^k 	&\geq p \operatorname{Tr} \left(\tau \frac{ \kb{\psi}{\psi}^{\otimes m}}{F(\psi^{\otimes m})} \right) \\
	&\geq p \frac{1-\varepsilon}{F(\psi^{\otimes m})} ,
\end{aligned}\end{equation}
If $\psi$ is any single-qubit, two-qubit, or three-qubit pure state, then $F(\psi^{\otimes m}) = F(\psi)^m$ (see Ref.~\cite{bravyi2018simulation} or Thm.~\ref{StabRank}), and so
 \begin{equation}\begin{aligned}\label{eq:dist_bound_Lambda}
\Lambda^+(\rho)^k 	&\geq p \frac{1-\varepsilon}{F(\psi^{\otimes m})} ,
\end{aligned}\end{equation}
 Taking the logarithm, we get
\begin{equation}\begin{aligned}
	k &\geq \log_{\Lambda^+(\rho)} \left[ p (1-\varepsilon) F(\psi)^{-m} \right],
\end{aligned}\end{equation}
which is precisely Eq.~\eqref{eq:dist_bound1}. Alternatively, if we use $\log \Lambda^+$ instead of $\Lambda^+$ in the above derivation (noting that $\log \Lambda^+$ also decreases on average under stabilizer protocols due to concavity of the logarithm), we obtain the bound in Eq.~\eqref{eq:dist_bound2}.\end{proof}
 
  Another bound of this kind, which also explicitly depends on $\Lambda^+(\rho)$ and $F(\psi)$ but exhibits a different scaling with respect to $\varepsilon$, was recently obtained in~\cite{fang_2019-1}. We compare the performance of the bounds in Fig.~\ref{fig:bounds_comparison}.
  
  When $p=1$, Eq.~\eqref{eq:dist_bound_Lambda} recovers a related recent bound of~\cite{regula_2019-1}.  When $\varepsilon = 0$, we obtain a benchmark on the performance of all distillation protocols which distill the target exactly, but can fail with a certain probability:
  \begin{equation}\begin{aligned}
	\frac{k}{m p} \geq \frac{\log F(\psi)^{-1}}{\log \Lambda^+(\rho)}.
\end{aligned}\end{equation}
  This was considered for odd-dimensional qudits in~\cite{veitch2014resource,wang2018efficiently} as the  ``distillation efficiency''.

Additionally, the ultimate constraints on the convertibility between two states are often characterized in the asymptotic limit, where we are interested in the best achievable rate $R(\rho \to \psi)$ at which $k$ copies of $\rho$ can be approximately converted to $k R(\rho \to \psi)$ copies of $\psi$, with the error $\varepsilon$ of this conversion vanishing in the limit $k \to \infty$. Using Eq.~\eqref{eq:dist_bound2} with $p=1$, any such rate must satisfy
\begin{equation}\begin{aligned}\label{eq:dist_asymptotic}
	R(\rho \to \psi) \leq \frac{\log \Lambda^+(\rho)}{\log F(\psi)^{-1}},
\end{aligned}\end{equation}
which gives a semidefinite programming upper bound on the asymptotic rate of transformation between any state $\rho$ and a pure state $\psi$ of at most three qubits.

States of interest in magic state distillation include $\ket{H}$ and $\ket{F}$~\cite{bravyi2005universal}.
These states obey a Clifford symmetry in the following sense;  we say a state $\ket{\psi}$ is Clifford symmetric if there exists an Abelian subgroup $\mathcal{C}_\psi$ of the Clifford group such that: (\textit{i}) $C \ket{\psi} = \ket{\psi}$ for all $C \in \mathcal{C}_\psi$; and (\textit{ii}) $\ket{\psi}$ is the unique state with this property up to a global phase. Crucially, any such state has extent equal to the inverse of its stabilizer fidelity~\cite{bravyi2018simulation}, so $\xi(\psi) = F(\psi)^{-1}$. When we already know the value of the extent $\xi(\psi)$, we only need to evaluate $\Lambda^+(\rho)$ to determine the bounds in Thm.~\ref{thm:distillation_bounds} and in \eqref{eq:dist_asymptotic}.
For instance, for the rate of transformation from any state to a Clifford symmetric state of up to three qubits, we get $	R(\rho \to \psi) \leq \frac{\log \Lambda^+(\rho)}{\log \Lambda^+(\psi)}$.
Asymptotic distillation rates of the magic states $\ket{H}$ and $\ket{F}$ are bounded by
\begin{align}
	R(\rho\to \kb{H}{H}) & \leq \frac{\log \Lambda^+(\rho)}{\log(4 - 2 \sqrt{2})} \\
	R(\rho\to\kb{F}{F}) & \leq \frac{\log \Lambda^+(\rho)}{\log(3 - \sqrt{3})}
\end{align}
where we used the known values of $\xi(\ket{H})$ and $\xi(\ket{F})$~\cite{bravyi2018simulation,beverland2019lower}.

The above can be compared with the recent bounds obtained in~\cite{wang2018efficiently} for qudit magic state theory, as our approach similarly yields computable upper bounds on the rates of distillation, although applicable to the fundamentally important case of qubit systems.

We can alternatively show these asymptotic results by \br{using the regularized relative entropy of magic~\cite{veitch2014resource} to} bound the achievable rates of transformations  
between states using any stabilizer protocol. \br{Specifically, define} $r_\infty(\rho) = \lim_{n\to\infty} \frac{1}{n} r(\rho^{\otimes n})$ where $r(\rho) = \min_{\sigma \in \bar{\mathrm{S}}_{n}} D(\rho\|\sigma)$ \br{and $D(\rho\|\sigma) = \mathrm{Tr}(\rho \log \rho) - \mathrm{Tr}(\rho \log \sigma)$ is the quantum relative entropy}. Then the ratio $r_\infty(\rho)/r_\infty(\sigma)$ provides a general upper bound on the rate $R(\rho \to \sigma)$ of the transformation from $\rho$ to $\sigma$ using stabilizer protocols~\cite{veitch2014resource}.
This upper bound is achievable whenever the states can be reversibly interconverted~\cite{veitch2014resource} or when the set of stabilizer protocols is relaxed to the class of operations which asymptotically preserve the set of stabilizer states~\cite{brandao_2015}. Using the bounds $r(\rho) \leq \log \Lambda^+ (\rho)$ for arbitrary states~\cite{datta_2009} and $r(\psi) \geq - \log F(\psi)$ for pure states~\cite{datta_2009}, we similarly obtain Eq.~\eqref{eq:dist_asymptotic}. Notice also that $r(\psi) = \log \Lambda^+(\psi)$ for any Clifford symmetric state, and $r_\infty(\psi) = \log \Lambda^+(\psi)$ for a Clifford symmetric state of at most three qubits.

Finally, we remark that the best known magic state distillation protocols perform many orders of magnitude worse than our best bounds. It remains a considerable challenge to close this gap.

\section{Classical simulation algorithms}
\label{sec:simulate}

 \prx{%
 
 \jrs{In this section we introduce three simulation techniques, each associated to one of the magic monotones defined earlier.}
In subsection \ref{sec:simulate1}, we generalize quasiprobability-based methods \cite{pashayan15,Howard17robustness,OakRidge17} for estimating Born rule probabilities or expectation values of bounded observables \jrs{up to additive error}.  
\hakop{We use a novel choice of frame consisting of the set of stabilizer dyads and extend quasiprobabilistic techniques to accommodate this choice. By doing so, we are able to reduce the sampling overhead compared to previous qubit quasiprobability simulators~\cite{Howard17robustness,OakRidge17,Seddon19}, resulting in a runtime proportional to the  dyadic negativity squared, $\Lambda(\rho)^2$.}

\hakop{In subsection \ref{sec:simulate2}} we describe \jrs{a simulator which extends stabilizer rank methods \cite{bravyi2016improved,bravyi2018simulation},} previously only defined for pure states, to arbitrary mixed-state \jrs{input}s. \jrs{The algorithm simulates the sampling of bit strings from a quantum circuit (i.e. by measurement of a subset of qubits in the computational basis). We show that the classical distribution we sample from is $\delta$-close in $\ell_1$-norm to the quantum distribution, and that under modest assumptions each string is sampled in average time $\mathcal{O}(\Xi(\rho)/\delta^{-3})$, where $\Xi$ is the mixed-state extent. When an equimagical decomposition is known \br{(recall Def.~\ref{RoofExtension})}, this becomes the worst-case runtime. This reduces the runtime by a factor of $\delta^{-1}$ compared to the results of Ref. \cite{bravyi2018simulation}}.

Finally, in subsection \ref{sec:simulate3}, \jrs{we introduce the constrained path simulation technique, which efficiently estimates Pauli expectation values \jrs{or Born rule probabilities} up to additive error} on $\chan(\rho)$ for stabilizer channel $\chan$ and non-stabilizer state $\rho$. The technique approximates the magic state $\rho$ with the stabilizer part of a feasible solution to the generalized robustness problem, Eq. \eqref{eq:LambdaPlusDefn2}.  Whereas the dyadic frame simulator outputs estimates to arbitrarily high precision but with runtime that grows with $\Lambda(\rho)$, here the estimate is efficiently computed, but \jrs{with unavoidable additive error lower bounded as $\mathcal{O}(\Lambda^+(\rho))$, where $\Lambda^+$ is the generalized robustness.}

}%end \prx
\subsection{\jrs{Dyadic frame simulator}}
\label{sec:simulate1}
\subsubsection{Quasiprobability simulators\label{sec:quasi}}
\jrs{Before describing our first algorithm, we briefly review the principles of classical simulation using quasiprobabilities. A very general notion of quasiprobability simulation was introduced by Pashayan, Wallman and Bartlett \cite{pashayan15}. A specific instance of this type of simulator is defined by fixing a frame, a finite set of operators that forms a basis for the space of Hermitian operators acting on a Hilbert space. This basis need not be orthonormal and can in general be over-complete. For concreteness we consider the algorithm introduced by Howard and Campbell \cite{Howard17robustness}, where the frame is the set of pure stabilizer state projectors. We can define the $n$-qubit stabilizer frame  as:}
\begin{equation}
    \mathcal{G}_n = \{\op{\phi}: \ket{\phi}\in \mathrm{S}_n\},
\end{equation}
\jrs{so that the convex hull of $\mathcal{G}_n$ is precisely $\bar{\mathrm{S}}_{n}$, the set of mixed stabilizer states. Indeed, $\mathcal{G}_n$ forms an over-complete basis for the Hermitian operators on $\mathbb{C}^{2^n}$. It follows that \emph{any} $n$-qubit density $\rho$ matrix has at least one decomposition of the form:}
\begin{equation}
    \rho = \sum_j q_j \op{\phi_j}, \quad \op{\phi_j} \in \mathcal{G}_n, \quad \sum_j q_j =1,  \label{eq:quasidecomp}
\end{equation}
\jrs{with $q_j$ real. Consider the simulation task of estimating the Born rule probability $\mu = \Tr[\Pi \chan(\rho)]$, where $\Pi$ is a stabilizer projector and $\chan$ is an efficiently simulable channel, but $\rho$ is a general mixed magic state. Given a known quasiprobability decomposition as per Eq. \eqref{eq:quasidecomp}, we can rewrite:}
\begin{equation}
    \mu = \sum_j q_j  \Tr[\Pi \chan(\op{\phi_j})] 
    =\sum_j \frac{\abs{q_j}}{\onenorm{q}}  E_j.
\end{equation}
\jrs{where $E_j =  \onenorm{q} \mathrm{sign}(q_j)\Tr[\Pi \chan(\op{\phi_j})]$. Now $\abs{q_j}/\onenorm{q}$ are non-negative and sum to unity, so form a proper probability distribution. The Howard and Campbell \cite{Howard17robustness} algorithm goes as follows. First fix a total number of samples $M$. Then: 
\begin{enumerate}
\item For each integer $k$ from 1 to $M$, sample index $j_k$ from the distribution $\{\abs{q_j}/\onenorm{q}\}$. \label{samplestep}
    \item Compute each $\widehat{E}_k = E_{j_k}$.
    \item Output $\widehat{\mu} = \frac{1}{M} \sum_k \widehat{E}_k$.
\end{enumerate}
It is clear that since $\Tr[\Pi \chan(\op{\phi_j})]$ amounts to evaluating a stabilizer circuit, each $\widehat{E}_k$ can be efficiently computed using the standard Gottesman-Knill tableaux method \cite{gottesman1998theory,aaronson04improved}. Moreover, one can easily check that $\mathbb{E}(\widehat{\mu}) = \sum_j (|q_j|/\onenorm{q}) E_j = \mu$, so the algorithm gives an unbiased estimator for the Born rule probability. However, due to the renormalization of the distribution, each estimate $\widehat{E}_k$ takes a value in the range $[-\onenorm{q},+\onenorm{q}]$, increasing the variance of the estimator. From Hoeffding's inequalities \cite{Hoeffding1963}, the probability that $\widehat{\mu}$ is far from the expected value $\mu$ is bounded as:}
\begin{equation}
    \Pr\{|\widehat{\mu} - \mu| \geq \epsilon\} \leq 2 \exp\left(-\frac{M \epsilon^2}{2\onenorm{q}^2 }\right).
\end{equation}
\jrs{It follows that to estimate the value within additive error at most $\epsilon$ with probability at least $1-p_{\mathrm{fail}}$, we must set the number of samples so that $M \geq 2 \onenorm{q}^2 \epsilon^{-2} \log (2 p_{\mathrm{fail}}^{-1}) $.  Recall from Definition \ref{DefRoM} that robustness of magic $\mathcal{R}(\rho)$ is defined as the minimal $\onenorm{q}$, so the worst-case runtime for the Howard and Campbell algorithm scales with (at least)  $\mathcal{R}(\rho)^2$. }

\jrs{Whereas in the simulation model described above, the frame was comprised of stabilizer projectors $\op{\phi}$, in our dyadic frame simulator} we extend the frame to include dyads $\kb{L}{R}$ where $\ket{L}$ and $\ket{R}$ may be different stabilizer states. An operator is now considered free if it is in the convex hull of the dyads $e^{i\theta}\kb{L}{R}$. \br{Importantly, a density matrix $\sigma$ can be written in this form if and only if $\sigma \in \bar{\mathrm{S}}_n$.} Non-free density matrices are then expressed as generalized quasiprobability distributions over the set of $n$-qubit dyads, where the ``quasiprobabilities'' are now complex-valued. As we shall see, the associated dyadic negativity \jrs{ quantifies the} classical simulation overhead for estimating Born rule probabilities on a non-free state.  \jrs{In the next subsection we illustrate our new algorithm by giving a simplified version where the stabilizer circuit elements are restricted to be probabilistic mixtures of Clifford gates. We subsequently generalize the algorithm to cover all completely stabilizer-preserving circuits with magic state inputs.}

\subsubsection{Dyadic frame simulator\label{sec:dyadic_simple}}

\jrs{We assume the following restricted simulation setting. The input to the algorithm will consist of (i) a known dyadic decomposition of a mixed magic state $\rho = \sum_j \alpha_j \kb{L_j}{R_j}$; (ii) a circuit description comprised of a list of $T$ quantum operations $\{O^{(1)}, \ldots, O^{(T)}\}$; and (iii) a stabilizer projector $\Pi$ representing the outcome of a Pauli measurement. We stipulate that each $O^{(t)}$ must be a convex mixture of unitary Clifford channels, $O^{(t)} = \sum_k p^{(t)}_k U_k (\cdot) U_k^\dagger $ , and we assume this decomposition is known and can be efficiently sampled from. The output of the algorithm is again an estimate for the Born rule probability $\mu = \Tr[\Pi \chan(\rho)]$, where $\chan = O^{(T)} \circ \ldots \circ O^{(1)}$. Note that the above restriction on $O^{(T)}$ means that we can write the whole circuit as an ensemble over unitary Clifford gates:}
\begin{equation}
\chan(\cdot) = \sum_{\vec{k}} p_{\vec{k}} U_{\vec{k}} (\cdot) U_{\vec{k}}^\dagger 
\end{equation}
\jrs{where $\vec{k} = (k_1, k_2, \ldots, k_T)$ is a vector that represents a Clifford trajectory through the circuit $U_{\vec{k}}= U_{k_T} \ldots U_{k_2} U_{k_1}$, and $p_{\vec{k}}$ is a product distribution and so can be efficiently sampled from. The algorithm proceeds by sampling elements from the initial distribution, computing an estimate, repeating many times and averaging. The procedure for generating one sample is as follows:} \jrs{
\begin{enumerate}
    \item Randomly select index $j$ with probability $|\alpha_j|/\onenorm{\alpha}$.
    \item Randomly select trajectory $\vec{k}$ with probability $p_{\vec{k}}$.
    \item Compute final dyad: \begin{equation}{e^{i\theta'_{j,\vec{k}}} \ket{L'_{j,\vec{k}}} \bra{R'_{j,\vec{k}}} = e^{i\theta_j} U_{\vec{k}}\ket{L_j} \bra{R_j}U_{\vec{k}}^\dagger }.\end{equation}
    \item Compute sample $\widehat{E} = \onenorm{\alpha} \re\{ e^{i\theta'_{j,\vec{k}}} \bra{R'_{j,\vec{k}}}\Pi \ket{L'_{j,\vec{k}}}\} $.
\end{enumerate}}

\jrs{In step 3, $e^{i\theta'}$ is a final global phase taking into account the initial phase  $e^{i\theta_j} = \alpha_j/\abs{\alpha_j} $ and the action of the sampled unitary circuit on $\ket{L_j} $ and $\ket{R_j} $ respectively. Whereas the Howard and Campbell algorithm dealt with projectors $\op{\phi}$, so that any global phase on $\ket{\phi}$ is unimportant, here $\ket{L_j} $ and $\ket{R_j} $  can represent different stabilizer states and the combined phase can affect both the magnitude and sign of the real-valued sample $\widehat{E}$. While the original tableaux method used in the Gottesman-Knill theorem does not track this global phase, subsequent extensions of the method show that the update can be efficiently computed, including the phase~\cite{Bravyi16stabRank,bravyi2016improved,bravyi2018simulation}. We can also efficiently compute the complex inner product $\bk{L}{R}$ for any pair of stabilizer states~\cite{Bravyi16stabRank,bravyi2016improved,bravyi2018simulation}. Thus steps 3 and 4 are efficient. Note that the two parts of the dyad $U_{\vec{k}}\ket{L_j}$ and  $\bra{R_j}U_{\vec{k}}^\dagger = (U_{\vec{k}}\ket{R_j})^\dagger$ are updated independently. }
%This sampling and updating procedure is illustrated in Figure \ref{fig:cliffordensemble}.

The algorithm is completed by repeating steps 1-4 $M$ times. We can check that the method gives an unbiased estimator for the target Born rule probability:
\begin{align}
    \mathbb{E}(\widehat{E}) &= \sum_{j,\vec{k}} \frac{|\alpha_j|}{\onenorm{\alpha}}p_{\vec{k}} \left(\onenorm{\alpha} \re\{ e^{i\theta'_{j,\vec{k}}} \bra{R'_{j,\vec{k}}}\Pi \ket{L'_{j,\vec{k}}}\}  \right)  \\
    & = \re\{ \sum_{j, \vec{k}} e^{i \theta_j}|\alpha_j| p_\vec{k} \Tr[\Pi U_{\vec{k}}\ket{L_j} \bra{R_j}U_{\vec{k}}^\dagger]\}\\
    & = \re\{ \Tr[\Pi \sum_\vec{k} p_{\vec{k}}U_\vec{k} (\sum_j \alpha_j\ket{L_j} \bra{R_j}) U_\vec{k}^\dagger )] \} \\
    & = \Tr[\Pi \chan(\rho)].
\end{align}
\jrs{We can therefore apply Hoeffding's inequality in the same way as for the standard quasiprobability technique, and using the fact that each $\widehat{E}$ is in the range $[-\onenorm{\alpha}, + \onenorm{\alpha}]$, we find that the total number of samples needed to achieve additive error $\epsilon$ and success probability $1-p_{\mathrm{pfail}}$ is:}
\begin{equation}
    M   \geq 2 \onenorm{\alpha}^2 \epsilon^{-2} \log (2 p_{\mathrm{fail}}^{-1}).
\end{equation}
\jrs{When the decomposition of $\rho$ is optimal with respect to dyadic negativity as per Definition \ref{def:dyadic}, we have that $\onenorm{\alpha} = \Lambda(\rho)$. When this holds, the worst-case runtime of the algorithm will be $\mathcal{O}(\Lambda(\rho)^2)$.}

\jrs{This simplified algorithm can be used only in the case where the stabilizer circuit is a convex mixture of unitary Clifford operations, so channels are restricted to be unital. Our main goal, however, is to admit more general stabilizer channels. In particular, extending to adaptive Clifford circuits with mixed magic state inputs allows for universal quantum computation \cite{bravyi2005universal}. We now sketch how the dyadic frame simulator can be extended to admit all completely stabilizer-preserving channels. Full pseudocode and technical proofs of validity and performance are given in Appendix \ref{app:dyadicframe}.}

\jrs{The simplicity of the restricted simulator derives from the fact that unitary operations preserve the norm of the state vector. This means that when each circuit element $O^{(t)}$ can be decomposed as a convex mixture of unitary gates, the probability of choosing a particular trajectory $\vec{k}$ through the circuit depends only on the coefficients $p_k^{(t)}$ and is independent of initial state.} 
\jrs{Conversely, Kraus decompositions of non-unital channels always include non-unitary operators. When these channels appear in a circuit, transition probabilities for selecting one of the non-unitary operators must be computed on the fly as we step through the circuit. These transition probabilities depend not only on the initial state, but on the Kraus operators selected in previous steps, so they cannot be pre-computed.}
\jrs{Note that in general a channel may be decomposed as a mixture of $N_U$ unitary and $N_K$ non-unitary Kraus operators:}
\begin{equation}
    O(\cdot) = \sum_k^{N_U} p_k U_k (\cdot) U^\dagger_k + \sum_{k'}^{N_K} q_{k'} K_{k'} (\cdot) K_{k'}^\dagger.
\end{equation}
\jrs{The probability of picking one of the $N_U$ operators $U_k$ can simply be read off from the coefficients $p_k$. We can infer the \emph{total} probability $1-\sum_k p_k$ that the trajectory chosen will be from among the $N_K$ \emph{non}-unitary operators, but individual transition probabilities for each $K_{k'}$ must be computed based on the initial state.}
\jrs{ In Appendix \ref{app:dyadicframe} we show how appropriate transition probabilities can be computed efficiently even when the input \jrs{operator} is not a state but a dyad, provided that we restrict to channel decompositions where $N_U, N_K \leq \mathrm{poly}(n)$. We call such a decomposition \textit{simulable}. This leads to the following theorem.}

\begin{theorem}\label{thm:simulationJS}
    \jrs{Let  $\rho = \sum_j \alpha_j \ket{L_j}\bra{R_j}$, be a known dyadic decomposition of an initial $n$-qubit state,  where $\alpha\in \mathbb{C}$} and the probability distribution $\{|\alpha_j|/\onenorm{\alpha}\}$ can be efficiently sampled. Let $\chan = O^{(T)} \circ \ldots \circ O^{(1)}$, where each $O^{(t)} \in \mathcal{O}_n$ is a completely stabilizer-preserving channel. 
    \jrs{Suppose that every $O^{(t)}$ has a known simulable decomposition.} Then, given a stabilizer projector $\Pi$, \jrs{we can estimate the Born rule probability $\mu = \Tr(  \Pi \chan [\rho ] )$ within additive error $\epsilon$, with success probability at least $1- p_{\mathrm{fail}}$ and  worst-case runtime}
	\begin{equation}
 \frac{\|\alpha\|_1^2}{\epsilon^2}  \log(p_{\mathrm{fail}}^{-1})T
{\rm poly}(n).
	\end{equation}	
	Furthermore, if the dyadic decomposition of $\rho$ is optimal then $\|\alpha\|_1$ can be replaced by $\Lambda(\rho)$.
\end{theorem}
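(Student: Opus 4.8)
The plan is to construct a single-sample unbiased estimator $\widehat{E}$ of $\mu = \Tr[\Pi\chan(\rho)]$ whose range is contained in $[-\onenorm{\alpha},+\onenorm{\alpha}]$, and then invoke Hoeffding's inequality exactly as in the restricted setting of Sec.~\ref{sec:dyadic_simple}. Expanding each channel in its simulable Kraus form $O^{(t)}(\cdot)=\sum_{m} r^{(t)}_m A^{(t)}_m(\cdot)(A^{(t)}_m)^\dagger$ with $r^{(t)}_m\ge 0$, one has
\begin{equation}
\mu = \re\Big[\sum_j \alpha_j \sum_{\vec m}\Big(\prod_t r^{(t)}_{m_t}\Big)\,\Tr[\Pi\,A_{\vec m}\kb{L_j}{R_j}A_{\vec m}^\dagger]\Big],
\end{equation}
where $A_{\vec m}=A^{(T)}_{m_T}\cdots A^{(1)}_{m_1}$. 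Because every $A^{(t)}_m$ preserves the set of stabilizer states, propagating the two halves of a stabilizer dyad independently gives $A_{\vec m}\ket{L_j}=c^L_{j,\vec m}\ket{L'}$ and $A_{\vec m}\ket{R_j}=c^R_{j,\vec m}\ket{R'}$ for normalized stabilizer states $\ket{L'},\ket{R'}$, so each summand factorizes as $c^L\overline{c^R}\bra{R'}\Pi\ket{L'}$. The sum over $\vec m$ carries up to $\poly(n)^T$ terms, so rather than evaluate it I would estimate it by Monte Carlo, sampling one initial index $j$ and one trajectory $\vec m$.

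First I would sample $j$ with probability $|\alpha_j|/\onenorm{\alpha}$ and initialize a weight $w_0 = \onenorm{\alpha}\,\alpha_j/|\alpha_j|$. Stepping through the circuit, at stage $t$ I would use the current normalized dyad $\kb{L}{R}$ to define transition probabilities $\pi^{(t)}_m$, sample a Kraus index $m_t$, and update both the propagated dyad and the weight via $w_t = w_{t-1}\, r^{(t)}_{m_t} c^L_{m_t}\overline{c^R_{m_t}}/\pi^{(t)}_{m_t}$. The final estimate is $\widehat{E} = \re[\,w_T\,\bra{R'}\Pi\ket{L'}\,]$. Unbiasedness is immediate and independent of the choice of $\pi$: summing $\pi^{(t)}_{m_t}$ against the weight factor at each stage cancels $\pi^{(t)}_{m_t}$ and reconstructs the full sum over $\vec m$, so $\mathbb{E}[\widehat{E}]=\mu$.

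The crux is choosing $\pi^{(t)}$ so that it is a genuine, efficiently samplable distribution \emph{and} each weight-update factor has modulus at most $1$ --- the latter being exactly what keeps $|\widehat{E}|\le\onenorm{\alpha}$ and hence preserves the $\onenorm{\alpha}^2$ Hoeffding bound. The key observation is that trace preservation, $\sum_m r^{(t)}_m (A^{(t)}_m)^\dagger A^{(t)}_m=\id$, evaluated on $\ket{L}$ and on $\ket{R}$ gives $\sum_m r^{(t)}_m|c^L_m|^2 = \sum_m r^{(t)}_m|c^R_m|^2 = 1$, whence by Cauchy--Schwarz $Z:=\sum_m r^{(t)}_m|c^L_m||c^R_m|\le 1$. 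Taking $\pi^{(t)}_{m}=r^{(t)}_m|c^L_m||c^R_m|/Z$ then makes each weight factor equal to $Z$ times a pure phase, of modulus $Z\le 1$, so $|w_T|\le|w_0|=\onenorm{\alpha}$ and thus $|\widehat{E}|\le\onenorm{\alpha}$ since $|\bra{R'}\Pi\ket{L'}|\le 1$. I expect this Cauchy--Schwarz step --- reconciling the two distinct per-branch Born distributions of the dyad into a single valid sampling distribution with bounded reweighting --- to be the main obstacle, since this is precisely the part with no analogue in the unital case of Sec.~\ref{sec:dyadic_simple} and is special to the dyadic frame.

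The remaining steps are bookkeeping. The simulability assumption $N_U,N_K\le\poly(n)$ ensures that at each stage only $\poly(n)$ coefficients $c^L_m,c^R_m$ must be computed, which is efficient using the phase-sensitive stabilizer tableau updates and inner-product routines of Refs.~\cite{Bravyi16stabRank,bravyi2016improved,bravyi2018simulation}; likewise $\bra{R'}\Pi\ket{L'}$ is efficiently evaluable. Hence one sample costs $T\poly(n)$. Averaging $M$ independent copies of $\widehat{E}\in[-\onenorm{\alpha},\onenorm{\alpha}]$ and applying Hoeffding gives $\Pr[|\widehat\mu-\mu|\ge\epsilon]\le 2\exp(-M\epsilon^2/2\onenorm{\alpha}^2)$, so $M=\lceil 2\onenorm{\alpha}^2\epsilon^{-2}\log(2p_{\mathrm{fail}}^{-1})\rceil$ suffices, yielding total runtime $\onenorm{\alpha}^2\epsilon^{-2}\log(p_{\mathrm{fail}}^{-1})T\poly(n)$. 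Replacing $\onenorm{\alpha}$ by $\Lambda(\rho)$ when the decomposition is optimal is then immediate from Def.~\ref{def:dyadic}.
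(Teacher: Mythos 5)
Your proposal is correct and follows essentially the same route as the paper's Appendix~\ref{app:dyadicframe}: the same Monte Carlo over initial dyads and Kraus trajectories, the same crucial Cauchy--Schwarz step (the paper phrases it as $\sum_r \onenorm{q'_r K_r \kb{L}{R}K_r^\dagger} = Q^{(L)}\cdot Q^{(R)} \leq 1$, with trace preservation applied separately to $\ket{L}$ and $\ket{R}$), and the same Hoeffding conclusion. The only difference is presentational: where you renormalize the transition probabilities by $Z$ and absorb the factor $Z\leq 1$ into a multiplicative importance weight, the paper keeps the unnormalized probabilities $P_r$ and assigns the deficit $1-\sum_r P_r$ to an explicit ``abort'' outcome contributing zero to the estimator (which also transparently handles the degenerate case $Z=0$, where your renormalization would need the trivial convention of returning $0$); both schemes give the same unbiasedness and the same bound $|\widehat{E}|\leq\onenorm{\alpha}$.
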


By exploiting a dyadic frame, the negativity of the quasiprobability distribution and algorithm runtime is greatly reduced compared to previous work~\cite{Howard17robustness}, with an improved exponential scaling of the runtime (recall Thm.~\ref{ExpGapProp}).

\subsection{The density-operator stabilizer rank simulator}
\label{sec:simulate2}

\subsubsection{Prior art: the BBCCGH simulator}
\label{sec:BBCCGHrecap}
\jrs{Here we briefly review a previous stabilizer rank-based simulation method, which we will refer to as BBCCGH in what follows (after the authors' initials~\cite{bravyi2018simulation}). BBCCGH simulates sampling length $w$ bit strings $\vec{x}$ from measurements on pure magic states $\ket{\psi}$ with runtime linear in pure-state extent $\xi(\psi)$, and represents the prior state of the art in stabilizer rank techniques. In subsequent sections we improve on this algorithm and generalize to mixed states.BBCCGH can be decomposed into two main subroutines: \textsc{Sparsify}, which generates a sparse approximation of the target state, and \textsc{FastNorm}, which estimates Born rule probabilities $\norm{\Pi\ket{\psi}}^2$ up to multiplicative error. By calling \textsc{FastNorm} $\mathcal{O}(w)$ times, one estimates a chain of conditional probabilities so as to successively sample the outcome for each bit of $\vec{x}$ in turn. It is crucial that the error is multiplicative, as this ensures that the output distribution of the classical algorithm is close in $\ell_1$-norm to the quantum distribution $P(\vec{x}) = |\braket{\vec{x}}{\psi}|^2$.}

\jrs{In general, stabilizer rank simulators exploit the fact that a}ny pure quantum state $\ket{\psi}$ can be expressed as a linear combination of stabilizer states,
\begin{align}
\smash{\ket{\psi} = \sum_{j=1}^k c_j \ket{\phi_j} },     \label{eq:stab-decomposition}
\end{align}
where $\ket{\phi_j}$
are stabilizer states and $c_j$ are complex. \jrs{The exact stabilizer rank $\chi(\ket{\psi})$ is the smallest number of terms $k$ needed for a given state $\ket{\psi}$ ~\cite{Bravyi16stabRank,bravyi2016improved,bravyi2018simulation}. Computations can be performed in $\mathrm{poly}(n,k)$ time} by treating each stabilizer term in turn (albeit $k$ can grow exponentially with $n$). \jrs{In particular Bravyi et al. \cite{bravyi2018simulation} showed that $\textsc{FastNorm}$ can estimate $\norm{\psi}^2$ up to multiplicative error by repeatedly generating a random number $\eta_A = 2^n |\braket{\phi_A}{\psi}|^2$, where $\ket{\phi_A}$ is randomly drawn from a subset of stabilizer states known as equatorial states. Evaluating $\eta_A$ amounts to computing $k$ stabilizer inner products (one for each term of $\psi$), which can be done efficiently by exploiting a canonical representation of stabilizer states known as CH-form \cite{bravyi2018simulation}. We summarize this result of Bravyi et al. in the following theorem:}
\begin{theorem}\label{thm:fastnorm} \emph{\cite{bravyi2018simulation}}
    Given an un-normalized $n$-qubit vector $\ket{\psi} = \sum_{j=1}^\chi c_j \ket{\phi_j}$ with $\chi$ stabilizer terms in its decomposition, there exists a classical algorithm \textsc{FastNorm} that outputs a random variable $\eta$ such that:
    \begin{equation}
        (1-\epsilon)\norm{\psi}^2 \leq \eta \leq (1 + \epsilon) \norm{\psi}^2
    \end{equation}
    with probability greater than $1-p_{\mathrm{fail}}$ in worst-case runtime $\mathcal{O}(\chi n^3 \epsilon^{-2} \log(p_{\mathrm{fail}}^{-1}))$.
\end{theorem}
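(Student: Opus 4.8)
The plan is to prove Theorem~\ref{thm:fastnorm} by constructing the \textsc{FastNorm} estimator explicitly and analyzing it as a Monte Carlo procedure with bounded \emph{relative} variance. The estimator is built from an efficiently samplable ensemble of ``equatorial'' stabilizer states $\{\ket{\phi_A}\}$, whose defining property is that they resolve the identity, $2^n \expected{\op{\phi_A}} = \id$ (a complex $1$-design up to normalization). Given this, the first step is to establish unbiasedness: writing $\eta_A = 2^n |\braket{\phi_A}{\psi}|^2 = 2^n \bra{\psi}\op{\phi_A}\ket{\psi}$, linearity of expectation immediately gives $\expected{\eta_A} = 2^n \bra{\psi}\expected{\op{\phi_A}}\ket{\psi} = \norm{\psi}^2$, so the sample mean targets the quantity we want.

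The technical core is the second-moment bound. I would compute $\expected{\eta_A^2} = 2^{2n}\bra{\psi}^{\otimes 2}\,\expected{(\op{\phi_A})^{\otimes 2}}\,\ket{\psi}^{\otimes 2}$ using the second-moment (approximate $2$-design) structure of the equatorial ensemble: the average of $(\op{\phi_A})^{\otimes 2}$ lies close to the symmetric-subspace projector, so that $\expected{\eta_A^2} \le C \norm{\psi}^4$ for an absolute constant $C$ independent of both $n$ and $\chi$. This gives a relative variance $\mathrm{Var}(\eta_A)/\norm{\psi}^4 \le C-1 = O(1)$, which is exactly what makes a polynomial sample count possible. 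I expect this moment calculation to be the main obstacle, since it requires the detailed combinatorial structure of the equatorial states rather than a generic projective-design bound.

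With bounded relative variance in hand, the concentration argument is routine. Averaging $L = O(\epsilon^{-2})$ independent copies of $\eta_A$ gives, via Chebyshev's inequality, an estimate within multiplicative error $\epsilon$ of $\norm{\psi}^2$ with constant success probability; taking the median of $O(\log p_{\mathrm{fail}}^{-1})$ such averages then boosts the failure probability below $p_{\mathrm{fail}}$ by the standard median-of-means argument, for a total of $O(\epsilon^{-2}\log p_{\mathrm{fail}}^{-1})$ samples. It is worth noting that a direct Hoeffding bound is inadequate here, since the only crude bound available, $\eta_A \le 2^n \norm{\psi}^2$, is exponentially large; it is precisely the relative-variance bound that circumvents this.

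Finally I would account for the per-sample cost. Each $\eta_A$ requires evaluating $\braket{\phi_A}{\psi} = \sum_{j=1}^\chi c_j \braket{\phi_A}{\phi_j}$, a sum of $\chi$ stabilizer inner products. Using the CH-form representation of stabilizer states, each inner product $\braket{\phi_A}{\phi_j}$ --- including its global phase --- can be computed in $O(n^3)$ time, dominated by Gaussian elimination over $\mathbb{F}_2$, so a single sample costs $O(\chi n^3)$. Multiplying the per-sample cost by the sample count yields the claimed worst-case runtime $O(\chi n^3 \epsilon^{-2}\log p_{\mathrm{fail}}^{-1})$, which completes the proof.
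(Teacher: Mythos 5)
Your proposal is correct and follows essentially the same route as the source of this theorem: the paper does not reprove Thm.~\ref{thm:fastnorm} but imports it from Ref.~\cite{bravyi2018simulation}, whose argument is exactly the one you reconstruct --- an unbiased estimator $\eta_A = 2^n|\bk{\phi_A}{\psi}|^2$ over the equatorial-state ensemble, a second-moment bound giving $O(1)$ relative variance (the genuinely technical step, as you correctly flag), Chebyshev plus median-of-means for the $\epsilon^{-2}\log(p_{\mathrm{fail}}^{-1})$ sample count, and $O(\chi n^3)$ per sample from $\chi$ CH-form stabilizer inner products. Your observation that a Hoeffding bound fails because $\eta_A$ can be as large as $2^n\norm{\psi}^2$ is also the right reason the relative-variance route is needed.
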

\jrs{By applying this algorithm to projected vectors $\Pi \ket{\psi}$, where $\Pi$ is a stabilizer projector, one can estimate Born Rule probabilities $\norm{\Pi \ket{\psi}}^2$.}

\jrs{ If one was to apply $\textsc{FastNorm}$ directly to the ideal state $\ket{\psi}$, the runtime would be $\mathcal{O}(\chi(\psi))$, where $\chi$ is the exact stabilizer rank. However, computing the exact stabilizer rank is intractable for many-qubit states. Instead the strategy of BBCCGH is to approximate  $\ket{\psi}$ with a sparsified $k$-term vector $\ket{\Omega}$ of smaller stabilizer rank, using the subroutine \textsc{Sparsify} (Figure \ref{fig:sparsify}).}
%{Bravyi and Gosset~\cite{bravyi2016improved} demonstrated this for the case when $\ket{\psi}$ is a tensor product of $T$-states.} 
\jrs{Bravyi ~\textit{et al.} \cite[Lem.~6]{bravyi2018simulation})  showed that for any pure state $\ket{\psi}$ and any integer $k > 0$, one can use \textsc{Sparsify} to generate random (un-normalized) states $\ket{\Omega}$} with $k$ stabilizer terms such that:
\begin{equation}
\label{BBCCGHsparse}
  \mathbb{E}(  \| \ket{\psi}-\ket{\Omega} \|^2 ) \leq \frac{\|c\|_1^2 }{k} ,
\end{equation}
\jrs{In Appendix \ref{app:BBCCGHtracenorm} we present a simple corollary of \cite[Lem.~6]{bravyi2018simulation}), which} implies that
\begin{equation}
\label{BBCCGHsparse2}
 \mathbb{E}(   \| \kb{\psi}{\psi}-\kb{\Omega}{\Omega} \|_1 ) \leq 2 \frac{\|c\|_1 }{\sqrt{k}}  + \frac{\|c\|_1^2 }{k}  \approx  2 \frac{\|c\|_1 }{\sqrt{k}} .
\end{equation}
For any target precision $\delta_S>0$, choosing $k$ so that: 
\begin{equation}
\label{BBCCGHsparse2k}
     k \geq \frac{4 \| c \|^2_1}{\delta_S^2}  ,
\end{equation}
we get
\begin{align}
  \mathbb{E}(\| \kb{\psi}{\psi}-\kb{\Omega}{\Omega} \|_1) \leq \delta_S + \mathcal{O}(\delta_S^2) . \label{BBCCGH-sparsification-lemma}
\end{align}
Recall from Section \ref{sec:magicmonotones} that the minimal value of  $\| c \|^2_1$ is precisely the pure-state extent. We call Eq.~\eqref{BBCCGH-sparsification-lemma} \jrs{combined with the lower bound on $k$ in Eq.~ \eqref{BBCCGHsparse2k}} the BBCCGH sparsification lemma \cite{bravyi2018simulation}. 
\begin{figure}[tbp]
\centering
\includegraphics[trim={6cm 9cm 6cm 9.5cm},clip,width=0.32\textwidth]{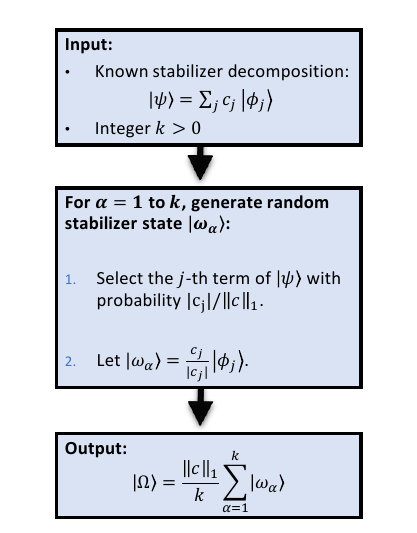}
\caption{The \textsc{Sparsify} procedure introduced by Bravyi \textit{et al.} \cite{bravyi2018simulation} The exact state $\ket{\psi}$ is approximated by an unnormalized random vector $\ket{\Omega}$ with $k$ stabilizer terms that is, on average, $\delta_S^2$-close in the Euclidean norm, where $\delta_S = \onenorm{c}^2/k$. }
\label{fig:sparsify}
\end{figure}
\jrs{Thus, with high probability and subject to some technical caveats discussed in Appendix \ref{app:whypostselect}, by combining the two subroutines BBCCGH simulates sampling from the quantum distribution $P(\vec{x}) = |\braket{\vec{x}}{\psi}|^2$ up to an error $\delta_S$ in runtime $\| c \|^2_1 \delta_S^{-4} \mathrm{poly}(n,w)$ . Assuming an optimal decomposition ($\xi(\psi) = \| c \|^2_1$), the runtime therefore scales linearly with pure-state extent. }

\jrs{Below we improve on this algorithm in three main respects: (i) we extend the simulator from pure to mixed magic state inputs, so that the average-case runtime is proportional to the mixed-state extent $\Xi$ defined in Section \ref{sec:magicmonotones}; (ii) we show that important cases admit decompositions such that $\Xi$ quantifies the worst-case runtime; and (iii) we derive a} new sparsification lemma that improves the runtime over that implied by Eq.~\eqref{BBCCGH-sparsification-lemma} by a factor of $1/\delta_S$ with minor caveats.  \jrs{Our new sparsification lemma also avoids some technical difficulties that arise when applying the BBCCGH sparsification lemma in a practical algorithm.   The runtime improvements originate from working in the density-operator picture even when the input magic state is pure.}
  
\jrs{We will first discuss the proof of our new lemma, before applying it} to classically simulate bit-string sampling.  
While our ideas naturally apply to estimating Born probabilities, and can be extended to propagate an initial state through a \jrs{noisy} stabilizer circuit prior to measurement, we omit this for brevity.

\subsubsection{Sparsification lemma}
\label{sec:sparsification}
 The input to \jrs{the subroutine \textsc{Sparsify}} is an integer $k$ and pure state $\ket{\psi}$ with known stabilizer decomposition \eqref{eq:stab-decomposition} with \jrs{coefficient} vector $c$. \jrs{The output is a randomly chosen $k$-term sparsification of $\ket{\psi}$,}
\begin{equation}
    \ket{\Omega}=\frac{\|c\|_1}{k} \sum_{\alpha=1}^k \ket{\omega_\alpha} 
    \label{eq:OmegaDef},
\end{equation}
where each $\ket{\omega_\alpha}$ is an i.i.d. sampled stabilizer state $(c_j/|c_j|)\ket{\phi_j}$ for some $j$, (see Figure \ref{fig:sparsify}), so that we have \cite{bravyi2018simulation} 
\begin{equation}
    \mathbb{E}(\ket{\omega_\alpha}) = \frac{\ket{\psi}}{\onenorm{c}} \Rightarrow \mathbb{E}(\ket{\Omega}) = \ket{\psi}.
\end{equation}
Since the output $\ket{\Omega}$ of \textsc{Sparsify} \jrs{is a random superposition of non-orthogonal terms, it} need not have unit norm. In \cite{bravyi2018simulation}, after obtaining a state $\ket{\Omega}$ from \textsc{Sparsify}, one estimates its Euclidean norm, and discards the state if its norm is not close to 1.  %For such a post-selected pure state, Eq.~\eqref{BBCCGHsparse} holds with high probability.
\jrs{A state post-selected in this way will be close to the target state with high probability. See Appendix \ref{app:whypostselect} for a discussion of why this post-selection is necessary.}

Here, we instead consider a sampling strategy that avoids post-selecting $\ket{\Omega}$. After \textsc{Sparsify} gives a random $\ket{\Omega}$, we renormalize so that it has unit norm.  Furthermore, instead of bounding the error between an individual sample and the target state $\ket{\psi}$, we bound the error between $\ket{\psi}$ and the whole ensemble as captured by the density matrix
\begin{equation}
		\rho_1 := \mathbb{E}\left[  \frac{\kb{\Omega}{\Omega}}{\bk{\Omega}{\Omega}}   \right] = \sum_{\Omega} \Pr(\Omega) \frac{\kb{\Omega}{\Omega}}{\bk{\Omega}{\Omega}}.\label{eq:expectedSparse}
\end{equation}	
Intuitively, this is advantageous because coherent errors in each sample smooth out to a less harmful stochastic error.  Similarly, randomizing coherent errors improves error bounds in the setting of circuit compilation~\cite{wallman15,campbell2017shorter,HastingMixing,campbell2019random}.

Our \jrs{refinement} to the BBCCGH sparsification lemma is \jrs{summarized in} the following theorem. 
\begin{theorem}
	\label{NewSparse3} Let $\rho_1$ be the mixed state in Eq. \eqref{eq:expectedSparse}.
Let $\ket{\psi}$ be an input state with known decomposition $\ket{\psi} = \sum_j c_j \ket{\phi_j}$, where $\ket{\phi_j}$ are stabilizer states, and let $c$ be the vector whose elements are the coefficients $c_j$ and
\begin{align}
 C_{\psi,c} = \norm{c}_1 \sum_j |c_j| |\bk{\psi}{\phi_j}|^2,
 \label{defi:constant-C}
\end{align}
\jrs{Then }there is a critical precision $\delta_c = 8(C_{\psi,c}-1)/ \|c\|_1^2$
such that for every target precision $\delta_S$ for which $\delta_S \ge \delta_c$, we can sample \jrs{pure states} from \jrs{an ensemble} $\rho_1$\jrs{, where} every pure state drawn from $\rho_1$ has stabilizer rank at most $\lceil 4 \|c \|_1^2 /\delta_S \rceil$ and 
\begin{equation}	
\label{NewSparseEq1}
  	\|  \rho_1  - \kb{\psi}{\psi}\|_1 \leq \delta_S + \mathcal{O}(\delta_S^2).
\end{equation}	
When $\ket{\psi}$ is a Clifford magic state (Recall Sec.~\ref{sec:distill}), the critical precision becomes $\delta_c = 0$, and sampled pure states in $\rho_1$ have stabilizer rank at most $\lceil(2+\sqrt{2}) \|c \|_1^2 /\delta_S \rceil $.
\end{theorem}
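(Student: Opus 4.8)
The plan is to analyze the random output $\ket{\Omega}$ of \textsc{Sparsify} through its first two moments and then control the trace-norm error of the renormalized ensemble $\rho_1$. Since by construction $\ket{\Omega}$ is a superposition of $k$ stabilizer terms, every pure state drawn from $\rho_1$ has stabilizer rank at most $k$, so the claimed rank bounds follow immediately once I fix $k = \lceil 4\onenorm{c}^2/\delta_S\rceil$ (or $\lceil(2+\sqrt2)\onenorm{c}^2/\delta_S\rceil$ in the Clifford case). First I would record the moments. Writing $\sigma := \sum_j (|c_j|/\onenorm{c})\kb{\phi_j}{\phi_j}$ and using that the summands $\ket{\omega_\alpha}$ are i.i.d.\ with $\mathbb{E}[\ket{\omega_\alpha}] = \ket{\psi}/\onenorm{c}$, a direct computation gives
\begin{equation}
\mathbb{E}[\kb{\Omega}{\Omega}] = \frac{k-1}{k}\kb{\psi}{\psi} + \frac{\onenorm{c}^2}{k}\sigma .
\end{equation}
From this I extract the scalar moments $\mathbb{E}[\bk{\Omega}{\Omega}] = 1 + (\onenorm{c}^2-1)/k$ and $\mathbb{E}[|\bk{\psi}{\Omega}|^2] = 1 + (C_{\psi,c}-1)/k$, where the constant of Eq.~\eqref{defi:constant-C} appears precisely as $C_{\psi,c} = \onenorm{c}^2\bra{\psi}\sigma\ket{\psi}$. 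Subtracting, the weight of $\ket{\Omega}$ orthogonal to $\ket{\psi}$ obeys $\mathbb{E}[\|\Omega_\perp\|^2] = (\onenorm{c}^2 - C_{\psi,c})/k$, where $\ket{\Omega_\perp} := (\id - \kb{\psi}{\psi})\ket{\Omega}$. Applying Cauchy--Schwarz to the identity $1 = \sum_j c_j \bk{\psi}{\phi_j}$ shows $1 \le C_{\psi,c} \le \onenorm{c}^2$, with the lower equality attained exactly when the overlaps $|\bk{\psi}{\phi_j}|$ are all equal and the terms $c_j\bk{\psi}{\phi_j}$ are phase-aligned --- the situation for a Clifford magic state, which yields $C_{\psi,c}=1$ and hence $\delta_c=0$.

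Next I would bound the fidelity deficit $\epsilon := 1 - \bra{\psi}\rho_1\ket{\psi} = \mathbb{E}[\|\Omega_\perp\|^2/\bk{\Omega}{\Omega}]$. Rather than pass through $\sqrt{1-F}$, which only yields an $\mathcal{O}(\sqrt{\delta_S})$ trace-norm bound, I would linearize the ratio: setting $a := \bk{\psi}{\Omega}$ and $b := \bk{\Omega}{\Omega}$, the elementary inequality $|a|^2/b \ge 2\re(a) - b$ (equivalent to $|a-b|^2 \ge 0$ for real $b$) gives $\|\Omega_\perp\|^2/b = 1 - |a|^2/b \le 1 - 2\re(a) + b$ pointwise. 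Taking expectations and using $\mathbb{E}[a]=1$ collapses this to the clean linear estimate $\epsilon \le (\onenorm{c}^2-1)/k$, with no square root and no hypothesis on the random denominator.

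The crux is the trace distance itself. Expanding $\rho_1 - \kb{\psi}{\psi}$ in a basis adapted to $\ket{\psi}$, it splits into a diagonal block of trace norm $2\epsilon$ (the $\ket{\psi}$-population deficit $\epsilon$ together with the orthogonal mass $\mathrm{Tr}[(\id-\kb{\psi}{\psi})\rho_1]=\epsilon$) and an off-diagonal coherence block $v := (\id-\kb{\psi}{\psi})\rho_1\ket{\psi} = \mathbb{E}[(\bar a/b)\ket{\Omega_\perp}]$, so that $\tfrac12\|\rho_1 - \kb{\psi}{\psi}\|_1 \le \epsilon + \|v\|_2$. The essential point --- and the source of the improvement over BBCCGH --- is that $\mathbb{E}[\ket{\Omega_\perp}] = 0$, so the coherent errors cancel on average: subtracting this vanishing mean, $v = \mathbb{E}[(\bar a/b - 1)\ket{\Omega_\perp}]$, and Cauchy--Schwarz yields $\|v\|_2 \le \sqrt{\mathbb{E}[|\bar a/b - 1|^2]}\,\sqrt{\mathbb{E}[\|\Omega_\perp\|^2]}$. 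The second factor is $\mathcal{O}(\onenorm{c}/\sqrt{k})$, while the first is governed by the normalization fluctuations, whose leading contribution is $\mathrm{Var}(a) = (C_{\psi,c}-1)/k$; together these give $\|v\|_2 = \mathcal{O}(\onenorm{c}\sqrt{C_{\psi,c}-1}/k)$, which is genuinely linear in $1/k$ rather than $1/\sqrt{k}$, exactly the gain that turns the naive $\sqrt{\delta_S}$ into $\delta_S$.

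Combining the two contributions and substituting $k = \lceil 4\onenorm{c}^2/\delta_S\rceil$ gives $\|\rho_1 - \kb{\psi}{\psi}\|_1 \le \delta_S + \mathcal{O}(\delta_S^2)$, which is Eq.~\eqref{NewSparseEq1}. The hypothesis $\delta_S \ge \delta_c = 8(C_{\psi,c}-1)/\onenorm{c}^2$ is precisely the condition that the $k$-independent residue left by the normalization fluctuations (of size $\sim (C_{\psi,c}-1)/\onenorm{c}^2$, which does not shrink as $k$ grows) stays below a fixed fraction of $\delta_S$, so that the terms discarded in the ratio expansion are honestly $\mathcal{O}(\delta_S^2)$; for a Clifford magic state $C_{\psi,c}=1$ removes these fluctuations entirely, so the bound holds for all $\delta_S$ and the sharpened control of $\|v\|_2$ permits the sparsity constant to shrink from $4$ to $2+\sqrt2$. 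I expect the main obstacle to be exactly the rigorous control of the off-diagonal term $\|v\|_2$ together with the per-sample renormalization: the random denominator $b = \bk{\Omega}{\Omega}$ couples non-trivially to the error, so making the second-moment estimates fully rigorous --- in particular a tail bound excluding a non-negligible probability of anomalously small $\|\Omega\|$, which is what ultimately forces the threshold $\delta_c$ --- is the delicate part of the argument.
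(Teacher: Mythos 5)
Your overall architecture is genuinely different from the paper's, and the difference matters. The paper never splits $\rho_1 - \kb{\psi}{\psi}$ into diagonal and off-diagonal blocks; instead it inserts an intermediate operator $\rho_2 = \mathbb{E}[\kb{\Omega}{\Omega}]/\mu$ with $\mu = \mathbb{E}[\bk{\Omega}{\Omega}]$, bounds $\|\rho_2 - \kb{\psi}{\psi}\|_1 \le 2\onenorm{c}^2/k$ by an exact computation of $\mathbb{E}[\kb{\Omega}{\Omega}]$ (your first displayed identity, which is correct), and bounds $\|\rho_1 - \rho_2\|_1 \le \sqrt{\mathrm{Var}[\bk{\Omega}{\Omega}]}$, after which the whole weight of the proof falls on a fourth-moment computation of $\mathrm{Var}[\bk{\Omega}{\Omega}]$ (the paper's Lemma on the sparsification variance bound, which produces the $4(C-1)/k + 2\onenorm{c}^4/k^2$ estimate, the threshold $\delta_c$, and the constant $2+\sqrt{2}$). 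The crucial technical point is that in the step $\|\mathbb{E}[\kb{\Omega}{\Omega}(b^{-1}-\mu^{-1})]\|_1 \le \mathbb{E}\,b\,|b^{-1}-\mu^{-1}| = \mu^{-1}\mathbb{E}|\mu - b|$, the trace norm of the rank-one operator exactly cancels the random denominator $b=\bk{\Omega}{\Omega}$, so the paper never needs any lower bound on $b$. Several of your preliminary observations (the first two moments, $C_{\psi,c}=\onenorm{c}^2\bra{\psi}\sigma\ket{\psi}$ with $1\le C_{\psi,c}\le\onenorm{c}^2$, the linearized fidelity bound $\epsilon \le (\onenorm{c}^2-1)/k$ via $|a|^2/b \ge 2\re(a)-b$, and the identification of $\mathbb{E}[\ket{\Omega_\perp}]=0$ as the mechanism behind the $1/\delta_S$ gain) are correct and genuinely nice; the last of these is indeed the same cancellation the paper exploits, just packaged differently.

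The gap is exactly where you suspected, and it is not a removable technicality in your formulation: your off-diagonal bound $\|v\|_2 \le \sqrt{\mathbb{E}|\bar a/b-1|^2}\sqrt{\mathbb{E}\|\Omega_\perp\|^2}$ requires a second moment of $(\bar a - b)/b$, i.e.\ control of $\mathbb{E}[\,|\bar a - b|^2/b^2\,]$, and nothing in the setup prevents $b$ from being anomalously small (the sampled stabilizer terms can nearly cancel), so without a lower tail bound on $\bk{\Omega}{\Omega}$ this expectation is not controlled at the claimed order. This is precisely the failure mode the paper's $\rho_2$ construction is designed to avoid, and it cannot be waved away as "delicate": an unproven tail bound here is the same obstruction that forces the post-selection step in the original BBCCGH analysis. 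Two smaller issues: (i) even granting $b\approx 1$, your estimate of the first Cauchy--Schwarz factor keeps only $\mathrm{Var}(a)=(C-1)/k$ and drops the contributions from $\mathrm{Var}(b)\sim \onenorm{c}^4/k^2$ and from the bias $1-\mu$, which produce an additional $\mathcal{O}(\delta_S^{3/2})$ term in the final bound (harmless asymptotically, but it means your stated $\|v\|_2 = \mathcal{O}(\onenorm{c}\sqrt{C-1}/k)$ is not what your own argument yields, and computing $\mathrm{Var}(b)$ is itself the bulk of the paper's work, which you have not reproduced); (ii) the Clifford-case constant $2+\sqrt{2}$ is asserted rather than derived — in the paper it arises as $2$ from $\|\rho_2-\kb{\psi}{\psi}\|_1$ plus $\sqrt{2}$ from $\sqrt{\mathrm{Var}[\bk{\Omega}{\Omega}]}$ at $C=1$, and your decomposition would not naturally produce that number.
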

Notice that the theorem sets a critical precision $\delta_c$ above which we can achieve the promised $1/\delta_S$ improvement in the runtime over BBCCGH \cite{bravyi2018simulation}. At higher precision, our runtime has the same leading order $\delta_S$-scaling as BBCCGH but with a much smaller constant prefactor, so still yields improved performance.  Furthermore, for the important case of noisy $T$ states, they are Clifford magic states so the improvement holds across all $\delta$.

The proof of Theorem \ref{NewSparse3} follows from two lemmata. 
\jrs{Here we sketch the proof strategy, deferring full technical proofs to Appendices \ref{app:tracenormlemma} and \ref{AppVarianceBound}. The first lemma captures the idea that the ensemble \eqref{eq:expectedSparse} can be made close in the trace norm to the target state $\op{\psi}$ by choosing sufficiently large $k$, up to a term that depends on the variance of $\bk{\Omega}{\Omega}$. The second lemma then bounds this variance in terms of $C_{\psi,c}$, $\onenorm{c}$ and $k$.}

\begin{lem}[\textbf{Ensemble sampling lemma}]
	\label{tracenormlemma}
Given a state $\ket{\psi}=\sum_j c_j \ket{\phi_j}$ where $\phi_j$ are stabilizer states, we can sample from \jrs{an ensemble} $\rho_1$ such that every sampled pure state has stabilizer rank $\leq k$ and 
\begin{equation}	
\label{RigorousSparseEq1}
	\|  \rho_1  - \kb{\psi}{\psi}\|_1 \leq \frac{2 \|c\|_1^2}{k}  + \sqrt{ \mathrm{Var}[\bk{\Omega}{\Omega}] }
\end{equation}	
where $\ket{\Omega}$ is the random sparsified vector defined in Eq.~\eqref{eq:OmegaDef}.
\end{lem}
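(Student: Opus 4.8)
The plan is to compare the normalized ensemble $\rho_1$ to $\op{\psi}$ by routing through the \emph{unnormalized} ensemble average $\rho_0 := \mathbb{E}[\op{\Omega}]$ and splitting the error with the triangle inequality, $\onenorm{\rho_1 - \op{\psi}} \le \onenorm{\rho_1 - \rho_0} + \onenorm{\rho_0 - \op{\psi}}$. The first step I would carry out is to compute $\rho_0$ explicitly. Writing each i.i.d.\ sample as $\ket{\omega_\alpha} = (c_j/|c_j|)\ket{\phi_j}$ with probability $|c_j|/\onenorm{c}$, independence gives $\mathbb{E}[\ket{\omega_\alpha}] = \ket{\psi}/\onenorm{c}$ and $\mathbb{E}[\op{\omega_\alpha}] = \sigma := \sum_j (|c_j|/\onenorm{c})\op{\phi_j}$. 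Separating the $k$ diagonal and $k(k-1)$ off-diagonal terms of $\op{\Omega} = (\onenorm{c}^2/k^2)\sum_{\alpha,\beta}\kb{\omega_\alpha}{\omega_\beta}$ then yields $\rho_0 = \tfrac{k-1}{k}\op{\psi} + \tfrac{\onenorm{c}^2}{k}\sigma$, which immediately has trace $\mathbb{E}[n_\Omega] = 1 + (\onenorm{c}^2-1)/k$, where $n_\Omega := \bk{\Omega}{\Omega}$.

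Now the two pieces. For the second piece, since $\sigma$ and $\op{\psi}$ are normalized states, $\onenorm{\rho_0 - \op{\psi}} = \onenorm{-\tfrac1k \op{\psi} + \tfrac{\onenorm{c}^2}{k}\sigma} \le (1+\onenorm{c}^2)/k$ by the triangle inequality. For the first piece, I would exploit that each realization $\op{\Omega}/n_\Omega - \op{\Omega}$ is a scalar multiple of a rank-one positive operator. Pulling the trace norm inside the expectation (convexity) and using $\onenorm{\op{\Omega}} = n_\Omega$ gives $\onenorm{\rho_1 - \rho_0} \le \mathbb{E}[\, n_\Omega\, |n_\Omega^{-1} - 1|\,] = \mathbb{E}[|n_\Omega - 1|]$. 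This is where the variance enters: by Cauchy--Schwarz and the decomposition $\mathbb{E}[(n_\Omega - 1)^2] = \mathrm{Var}[n_\Omega] + (\mathbb{E}[n_\Omega] - 1)^2$, together with subadditivity of the square root, I obtain $\mathbb{E}[|n_\Omega - 1|] \le \sqrt{\mathrm{Var}[n_\Omega]} + (\onenorm{c}^2-1)/k$.

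Combining the two bounds, the deterministic $k$-dependent terms add cleanly, $(\onenorm{c}^2-1)/k + (1+\onenorm{c}^2)/k = 2\onenorm{c}^2/k$, giving exactly the claimed inequality $\onenorm{\rho_1 - \op{\psi}} \le 2\onenorm{c}^2/k + \sqrt{\mathrm{Var}[\bk{\Omega}{\Omega}]}$. The stabilizer-rank claim is immediate: every $\ket{\Omega}$ is by construction a superposition of $k$ stabilizer states, so each sampled pure state has stabilizer rank at most $k$.

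The routine part is the computation of $\rho_0$; the step I expect to require the most care is the bound on $\onenorm{\rho_1 - \rho_0}$, specifically the observation that renormalizing each random vector contributes an error controlled purely by the fluctuation $|n_\Omega - 1|$ of its squared norm about unity --- the reduction of the trace norm to $n_\Omega$ relies on rank-one positivity, and one should note $n_\Omega > 0$ almost surely so that the normalization is well defined. Isolating $\mathrm{Var}[n_\Omega]$ rather than the raw second moment $\mathbb{E}[(n_\Omega-1)^2]$ is what lets the residual bias $(\mathbb{E}[n_\Omega]-1)$ be absorbed into the $2\onenorm{c}^2/k$ offset, and this is the cleanest route to matching the lemma's stated form.
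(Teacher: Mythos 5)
Your proof is correct and follows essentially the same route as the paper's: a triangle inequality through the ensemble-average operator $\mathbb{E}[\op{\Omega}]$, the explicit diagonal/off-diagonal computation giving $\frac{k-1}{k}\op{\psi} + \frac{\|c\|_1^2}{k}\sigma$, Jensen's inequality to pull the trace norm inside the expectation, and a mean-absolute-deviation--to--standard-deviation bound. The only cosmetic difference is that the paper uses the trace-normalized intermediate $\mathbb{E}[\op{\Omega}]/\mu$ with $\mu = \mathbb{E}[\bk{\Omega}{\Omega}]$, so the fluctuation is centered at $\mu$ and bounded directly by $\sqrt{\mathrm{Var}[\bk{\Omega}{\Omega}]}$, whereas you center at $1$ and absorb the resulting bias $(\|c\|_1^2-1)/k$ into the deterministic term --- both bookkeepings land on the identical bound.
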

\jrs{The first step in proving Lemma \ref{tracenormlemma} is to note that we can use the triangle inequality to split the problem into two parts:}
\begin{equation}	
	\|  \rho_1  - \kb{\psi}{\psi}\|_1 \leq	\|  \rho_1  -\rho_2  \|_1 + \|\rho_2  - \kb{\psi}{\psi}\|_1 ,
\end{equation}
\jrs{where $\rho_2 = \mathbb{E}(\op{\Omega})/\mathbb{E}(\bk{\Omega}{\Omega})$. The first term is upper bounded by $\mathrm{Var}[\bk{\Omega}{\Omega}]$. The second term can then be evaluated in terms of $\mathbb{\ket{\omega}}$, and turns out to be upper bounded by $2\onenorm{c}^2/k$. Full technical details are given in Appendix \ref{app:tracenormlemma}. It remains to bound the variance of $\bk{\Omega}{\Omega}$.}
\begin{lem}[\textbf{Sparsification variance bound}]
\label{lem:variancelemma}
Using the notation of Lemma \ref{tracenormlemma} the variance of $\braketself{\Omega}$ satisfies the bound
\begin{equation}
    \mathrm{Var}[ \bk{\Omega}{\Omega} ] \leq  \frac{4(C-1)}{k} +  \frac{2 \norm{c}_1^4}{k^2} + \mathcal{O}\left(\frac{C}{k^3}\right),
    \label{eq:variancegeneralbound}
\end{equation}
where $C=C_{\psi,c}$ is as given in Eq.~\eqref{defi:constant-C}.
When $\ket{\psi}$ is a Clifford magic state as defined in Ref.~\cite{bravyi2018simulation}, 
\begin{equation}
     \mathrm{Var}[ \bk{\Omega}{\Omega} ] \leq 
     \frac{ 2 \norm{c}_1^4}{k^2} + \mathcal{O}\left(\frac{1}{k^3}\right). \label{eq:CliffordmagicVar}
\end{equation}
\end{lem}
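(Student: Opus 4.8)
The plan is to express everything through the fourth moments of the i.i.d.\ stabilizer vectors $\ket{\omega_\alpha}$ and to isolate which contributions scale as $k^3$ and which as $k^2$. Writing $\ket{S}:=\sum_{\alpha=1}^{k}\ket{\omega_\alpha}$ so that $\ket{\Omega}=(\onenorm{c}/k)\ket{S}$, the diagonal $\alpha=\beta$ terms of $\bk{\Omega}{\Omega}=(\onenorm{c}^2/k^2)\sum_{\alpha,\beta}\bk{\omega_\alpha}{\omega_\beta}$ are deterministically equal to $k$, hence
\begin{equation}
\mathrm{Var}[\bk{\Omega}{\Omega}]=\frac{\onenorm{c}^4}{k^4}\,\mathrm{Var}\!\left[\norm{S}^2\right].
\end{equation}
First I would center the samples by setting $\ket{v}:=\mathbb{E}[\ket{\omega_\alpha}]=\ket{\psi}/\onenorm{c}$ and $\ket{d_\alpha}:=\ket{\omega_\alpha}-\ket{v}$, so that the $\ket{d_\alpha}$ are i.i.d.\ and mean-zero and $\ket{S}=k\ket{v}+\ket{D}$ with $\ket{D}:=\sum_\alpha\ket{d_\alpha}$. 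Using $\norm{v}^2=1/\onenorm{c}^2$ this gives
\begin{equation}
\norm{S}^2=\frac{k^2}{\onenorm{c}^2}+2k\,\re\bk{v}{D}+\norm{D}^2,
\end{equation}
whose first term is constant, so $\mathrm{Var}[\norm{S}^2]=\mathrm{Var}[\,2k\,\re\bk{v}{D}+\norm{D}^2\,]$.

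The leading $\mathcal{O}(k^3)$ contribution comes solely from the linear piece, since $2k\,\re\bk{v}{D}=2k\sum_\alpha\re\bk{v}{d_\alpha}$ is a sum of $k$ i.i.d.\ mean-zero terms scaled by $2k$, so its variance is $4k^3\,\mathrm{Var}[t]$ with $t:=\re\bk{v}{\omega}$ for a single sample. Introducing the single-copy moments $M:=\mathbb{E}[\kbself{\omega}]$ and $A:=\mathbb{E}[\bk{\omega}{v}^2]$, a short calculation gives $\mathrm{Var}[t]=\tfrac12\re A+(C-2)/(2\onenorm{c}^4)$, where $\bra{v}M\ket{v}=C/\onenorm{c}^4$ identifies $C:=C_{\psi,c}$ of Eq.~\eqref{defi:constant-C}. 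The crux is the bound $\re A\le C/\onenorm{c}^4$: writing $A=\onenorm{c}^{-3}\sum_j(\bar{c}_j^2/|c_j|)\bk{\phi_j}{\psi}^2$, each summand has modulus $|c_j|\,|\bk{\psi}{\phi_j}|^2$, so passing to real parts can only decrease the sum (Cauchy--Schwarz, or $\cos\le1$). This yields $\mathrm{Var}[t]\le(C-1)/\onenorm{c}^4$ and hence, after the $\onenorm{c}^4/k^4$ rescaling, the dominant term $4(C-1)/k$.

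Next I would collect the $\mathcal{O}(k^2)$ terms. The cross term $\mathrm{Cov}[\,2k\,\re\bk{v}{D},\norm{D}^2\,]$ reduces, by independence, to a single-sample covariance proportional to $\mathrm{Cov}[\,\re\bk{v}{\omega},\norm{d}^2\,]$; since $\norm{d}^2=1+1/\onenorm{c}^2-2\re\bk{v}{\omega}$, this covariance equals $-2\,\mathrm{Var}[t]\le0$, so the cross term can only reduce the variance. The last piece $\mathrm{Var}[\norm{D}^2]$ is dominated by its off-diagonal part $\sum_{\alpha\ne\beta}\bk{d_\alpha}{d_\beta}$, whose variance has leading coefficient $\norm{T}^2+\Tr[\Sigma^2]$, where $\Sigma:=M-\kbself{v}\ge0$ is the covariance operator and $T:=\mathbb{E}[\ket{d}^{\otimes2}]$. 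I would then bound $\norm{T}^2\le\Tr[\Sigma^2]$ and $\Tr[\Sigma^2]\le(\Tr\Sigma)^2=(1-1/\onenorm{c}^2)^2\le1$ (the first using positivity of $\Sigma$), so the $k^2$-coefficient is at most $2$, which rescales to the stated $2\onenorm{c}^4/k^2$; the remaining terms gather into $\mathcal{O}(C/k^3)$.

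Finally, for a Clifford magic state the symmetric stabilizer decomposition satisfies $|\bk{\psi}{\phi_j}|^2=F(\psi)=1/\onenorm{c}^2$ for every $j$, whence $C=\onenorm{c}\sum_j|c_j|\cdot\onenorm{c}^{-2}=1$. Substituting $C=1$ into $\mathrm{Var}[t]\le(C-1)/\onenorm{c}^4=0$ forces $\mathrm{Var}[t]=0$, which annihilates both the $\mathcal{O}(k^3)$ variance and the cross term, leaving only $\mathrm{Var}[\norm{D}^2]$ and hence the pure $2\onenorm{c}^4/k^2+\mathcal{O}(1/k^3)$ bound of Eq.~\eqref{eq:CliffordmagicVar}. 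I expect the main obstacle to be the careful fourth-moment bookkeeping that cleanly separates the $k^3$ and $k^2$ orders, and in particular the treatment of the phase-dependent moment $A$: the variance genuinely depends on $\re A$, whereas the target bound is stated through the modulus-based quantity $C$, and it is exactly the inequality $\re A\le C/\onenorm{c}^4$---saturated precisely for Clifford magic states---that makes the leading coefficient collapse to $4(C-1)$ and vanish when $C=1$.
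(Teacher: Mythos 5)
Your proof is correct, and it reaches the stated bounds by a genuinely different route from the paper. The paper works directly with $B=\sum_{\alpha\neq\beta}\bk{\omega_\alpha}{\omega_\beta}$, expands $\mathbb{E}(B^2)-\mathbb{E}(B)^2$ by classifying the coincidence patterns of the four indices, and then bounds the single-coincidence sums ($\mathbb{E}[B_{\lambda=\beta}]$, $|\mathbb{E}[B_{\mu=\beta}]|$) by $\tfrac{k(k-1)(k-2)}{\onenorm{c}^2}\bra{\psi}\sigma\ket{\psi}$ and the double-coincidence sums by $2k(k-1)\Tr[\sigma^2]\le 2k(k-1)$; the $4(C-1)/k$ term then emerges only after cancelling the combinatorial prefactors against $\mathbb{E}(B)^2$. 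Your centering $\ket{\omega_\alpha}=\ket{v}+\ket{d_\alpha}$ reorganizes the same fourth-moment bookkeeping so that the mean-zero property of $\ket{d_\alpha}$ kills all the mixed terms automatically, cleanly attributing the $k^3$ order to the scalar variance $\mathrm{Var}[t]$ with $t=\re\bk{v}{\omega}$, the $k^2$ order to $\norm{T}^2+\Tr[\Sigma^2]\le 2$, and showing the cross covariance is $-8k^2\,\mathrm{Var}[t]\le 0$ (the paper instead just triangle-inequalities the analogous contributions). Your inequality $\re A\le|A|\le C/\onenorm{c}^4$, giving $\mathrm{Var}[t]\le(C-1)/\onenorm{c}^4$, plays exactly the role of the paper's triangle-inequality bound on $|\mathbb{E}[B_{\mu=\beta}]|$, and your observation that $C\ge1$ by Jensen (so that $C=1$ forces $\mathrm{Var}[t]=0$) is the same mechanism by which the paper's Clifford-magic computation $\bra{\psi}\sigma\ket{\psi}=1/\onenorm{c}^2$ annihilates the $1/k$ term; both arguments rely on using the symmetric (group-orbit) decomposition for Clifford magic states. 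What your version buys is a more transparent separation of orders in $k$ and a slightly tighter $k^2$ coefficient ($\Tr[\Sigma^2]\le\Tr[\sigma^2]$ since $\Sigma=\sigma-\kb{v}{v}$ with $\bra{v}\sigma\ket{v}\ge\norm{v}^4$); what the paper's version buys is that the intermediate exact expression, Eq.~\eqref{varianceExact}, is available for numerical use at finite $k$ (as in Fig.~\ref{fig:DeltaScaling}), whereas you discard the negative cross term early. The only cosmetic slip is the parenthetical attribution of ``positivity of $\Sigma$'': that is what justifies $\Tr[\Sigma^2]\le(\Tr\Sigma)^2$, while $\norm{T}^2\le\Tr[\Sigma^2]$ follows from $|\mathbb{E}[\bk{d'}{d}^2]|\le\mathbb{E}[|\bk{d'}{d}|^2]$; neither inequality is in doubt.
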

\begin{figure}
    \centering
    \includegraphics{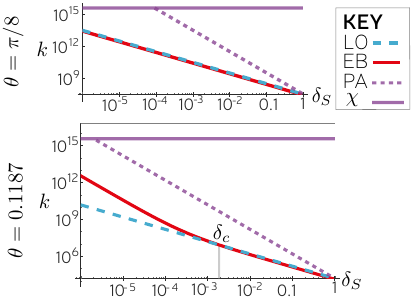}
    \caption{For the target state  $\ket{\psi}=(\cos(\theta) \ket{0}+\sin(\theta)\ket{1})^{\otimes 100}$ with two choices of $\theta$, we plot the trace norm error $\delta_S$ when using a sparsification with $k$ terms. EB (Exact Bound) refers to Eq.~\eqref{RigorousSparseEq1} and is valid for all $\delta_S$,  with the variance exactly bounded by Eq.~\eqref{varianceExact}. LO (Leading Order) refers to our Thm.~\ref{NewSparse3} expression $k=4\|c\|_1^2/\delta_S$, and is valid provided $\delta_S \geq \delta_c$ with $\delta_c$ highlighted by a vertical line.   
    Note $\theta=\pi/8$ corresponds to the Clifford magic state $\ket{H}$, for which $\delta_c=0$.  PA (Prior Art) shows the cost of Ref.~\cite{bravyi2018simulation}. The exact stabiliser rank is $\chi$ (see Thm. 2 of Ref.~\cite{bravyi2018simulation}) and this is an upper bound on PA.  When $C\neq 1$ and $\delta_S < \delta_c$, then EB shows that there is still a large saving even though LO is not valid in this regime. To better understand the deviations of EB from LO, we refer the reader to App.~\ref{AppVarianceBound} and in particular Fig.~\ref{fig:Cliffordness}, and to the discussion at the end of Section \ref{sec:simulate2}.}
    \label{fig:DeltaScaling}
\end{figure}

\jrs{In Appendix \ref{AppVarianceBound}, we prove Lemma \ref{lem:variancelemma} by expanding $\mathrm{Var}[ \bk{\Omega}{\Omega} ]$ as a series of terms of the form $\mathbb{E}(\bk{\omega_\alpha}{\omega_\beta}\bk{\omega_\lambda}{\omega_\mu})$, treating the cases where the indices $\alpha$, $\beta$, $\lambda$ and $\mu$ are all distinct (and therefore correspond to i.i.d. random variables), where $\alpha=\beta$, but $(\alpha, \lambda, \mu)$ are all distinct and so on. }

By combining Lemmas \ref{tracenormlemma} and \ref{lem:variancelemma} we can now prove Thm. \ref{NewSparse3}. Substituting $k = 4 \onenorm{c}^2/\delta_S$ and $\delta_S  \geq 8(C-1)/\onenorm{c}^2$ into Eq.~\eqref{eq:variancegeneralbound}, we obtain 
\begin{equation}
    \mathrm{Var}[ \bk{\Omega}{\Omega} ] \leq \frac{\delta_S ^2}{4}\left(1 + \mathcal{O}\left(\frac{\delta_S }{\onenorm{c}^4}\right)\right),
\end{equation}
and hence, using $\sqrt{1 +x} \leq 1 + x$ for $x\geq 0$:
\begin{equation}
    \sqrt{\mathrm{Var}[ \bk{\Omega}{\Omega} ]} \leq \frac{\delta_S }{2} + \mathcal{O}(\delta_S ^2).
    \label{eq:sqrtvarbound}
\end{equation}
Using \eqref{eq:sqrtvarbound} with the expression for $k$ and Lemma $\ref{tracenormlemma}$, we have
\begin{equation}
    \|  \rho_1  - \kb{\psi}{\psi}\|_1 \leq \delta_S  + \mathcal{O}(\delta_S ^2) .
\end{equation}
This proves the main result of Theorem \ref{NewSparse3}. 
When $\ket{\psi}$ is a Clifford magic state, Eq.~\eqref{eq:CliffordmagicVar} combined with Lemma \ref{tracenormlemma} gives
\begin{equation}
    \|  \rho_1  - \kb{\psi}{\psi}\|_1 \leq \frac{(2 +\sqrt{2})\onenorm{c}^2}{k} + \mathcal{O}\left( \frac{1}{k^2}\right).
\end{equation}
This allows us to obtain Eq.~\eqref{NewSparseEq1} by setting  ${k = \lceil(2+\sqrt{2}) \|c \|_1^2 /\delta_S\rceil  }$, completing the proof.

We have shown that whenever the constraint on the target precision $\delta_S $ is greater than a critical precision, one can sample from an ensemble of sparsified states $\rho_1$ that is $\delta_S $-close in the trace norm to $\braketself{\psi}$, where the number of stabilizer terms is $k=4 \onenorm{c}^2 / \delta_S $. Compared to the BBCCGH \cite{bravyi2018simulation} sparsification lemma where  $k=4 \onenorm{c}^2 /\delta_S ^2$, we see a factor $1/\delta_S $ improvement. If the target precision is smaller than the critical precision, one can compute $C$ and obtain a sharp bound on the trace norm error by using Lemmas \ref{tracenormlemma} and \ref{lem:variancelemma} directly. In this case, the $\delta_S ^{-2}$ scaling of $k$ is recovered, but with a prefactor often much smaller than in the original BBCCGH sparsification lemma. This is because one typically finds that $(C-1)/\onenorm{c}^2 \ll 1$ for many-qubit magic states. We illustrate this in Fig.~\ref{fig:DeltaScaling}, where we compare the sharpened trace-norm bound of our Lemmata with that of Ref. \cite{bravyi2018simulation} for states of the form $\ket{\psi_N} = \ket{\psi}^{\otimes N} $, where $\ket{\psi}$ are single-qubit magic states, and $N = 100$.   While $\delta_S  \geq 8(C-1)/\onenorm{c}^2$ we have a quadratic improvement over Eq.~\eqref{BBCCGHsparse2}, but even in the high-precision regime, we find a significant reduction in $k$. 

\subsubsection{Bit-string sampling \jrs{from mixed magic states}} 
\label{sec:simulate2:bitstringsampling}

\begin{figure}[htbp]
\centering
\includegraphics[trim={5.5cm 7.2cm 5.5cm 7.2cm},clip,width=0.35\textwidth]{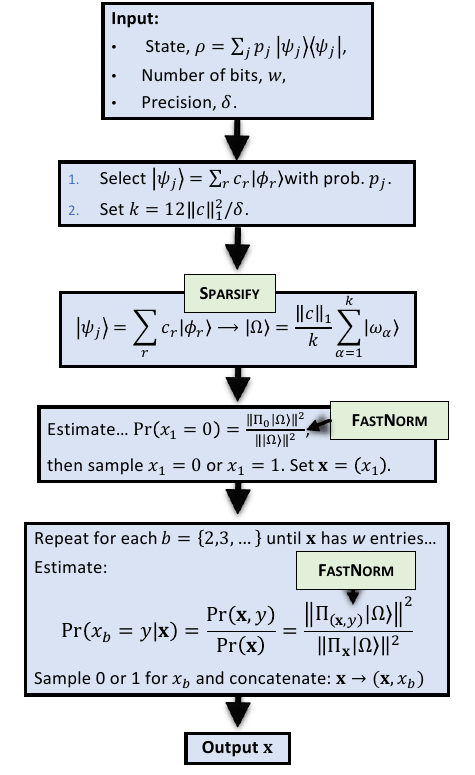}
\caption{\jrs{Our procedure for classically sampling a single length $w$ bit-string given an $n$-qubit state with known decomposition $\rho=\sum_j p_j \op{\psi_j}$, where each pure state $\ket{\psi_j}$ in turn has a known stabilizer decomposition $\ket{\psi_j} = \sum_r c_r^{(j)} \ket{\phi_j}$. We assume that $\delta$ is greater than the critical precision defined in Thm. \ref{NewSparse3}.  The procedure is a variant of that given in \cite{bravyi2018simulation} for pure state input, making use of two subroutines from that work, \textsc{Sparsify} and \textsc{FastNorm}, as described above. In the main text we describe how we have improved on the sparsification step, and extended the simulator to admit mixed states as input. The factor of 12 in the initial step arises from optimization of the error budget between the sparsification error $\delta_S$ and fast norm estimation error $\epsilon$; we set $\delta_S = \delta/3$ and $\epsilon = 2 \delta/3$ (See Appendix \ref{app:bitsampling}).}}
\label{fig:bitsampling}
\end{figure}
\jrs{Consider the setting where we have an $n$-qubit mixed magic state $\rho$, and we measure a subset of $w$ qubits in the computational basis (i.e. we measure Pauli $Z$ for each qubit), thereby generating a random bit string $\vec{x}$ of length $w$ representing the measurement outcomes. Without loss of generality we can assume we measure the first $w$ qubits. Let $\Pi_{\vec{x}} = \op{\vec{x}} \otimes \id_{n-w} $ be the projector representing the outcome where we obtain bit-string $\vec{x}$. Then the probability of obtaining the string $\vec{x}$ is given by the Born rule:}
\begin{equation}
    P(\vec{x}) = \Tr[\Pi_{\vec{x}} \rho]
\end{equation}

\jrs{We call $P$ the \textit{quantum} probability distribution. Here we deal with the simulation task of classically sampling from a probability distribution $\Psim(\vec{x})$ over $w$-bit strings $\vec{x}$ such that $\Psim$ is $\delta$-close in $\ell_1$-norm to $P$, with high probability. Our algorithm is closely related to the sampling algorithm given in Bravyi \textit{et al.} \cite{bravyi2018simulation}, differing in two key respects: (i) whereas the Bravyi \textit{et al.} simulator is defined only for pure states, our variant admits general mixed states; and (ii) we take advantage of our improved sparsification lemma to reduce runtime. We also avoid a post-selection step needed for the Bravyi \textit{et al.} algorithm (see Appendix \ref{app:whypostselect}). Figure \ref{fig:bitsampling} gives the key steps for our procedure for sampling a single bit-string. Full pseudocode is given in Appendix \ref{app:bitsampling}. The main steps in the algorithm are (1) the sampling of a random pure state $\ket{\psi_j}$ from the ensemble $\rho = \sum_j p_j \op{\psi_j}$, (2) a call to the subroutine \textsc{Sparsify} to generate the $k$-term approximation $\ket{\Omega}$, and (3) computation of a chain of conditional probabilities using at most $2w + 1$ calls to $\textsc{FastNorm}$. In Appendix \ref{app:bitsampling} we prove the validity of the algorithm, and give a full analysis of the runtime. Here we first sketch the proof before discussing the runtime improvement over Bravyi \textit{et al.} \cite{bravyi2018simulation}. In what follows we assume that $\delta > \delta_c$ as defined in Eq.~\eqref{NewSparse3}, returning to the case of arbitrary precision at the end of the section.}

\jrs{We want to show that the classical probability distribution $\Psim$ satisfies:}
\begin{equation}
\onenorm{\Psim - P} \leq \delta + \mathcal{O}(\delta^2)    
\end{equation}
\jrs{where $P$ is the quantum distribution. We split the proof into two parts. First, we consider an idealized algorithm $\textsc{Exact}$ where the calls to \textsc{FastNorm} are replaced by an oracle that can compute $\norm{\Pi_{\vec{x}}\ket{\Omega}}$ exactly given $k$-term sparsification $\ket{\Omega}$. Let $\Pex(\vec{x})$ be the probability of obtaining the string $\vec{x}$ as the output of $\textsc{Exact}$. We will first show that $\Pex$ is $\delta_S$-close to the quantum distribution $P$ in $\ell_1$-norm, and then show that $\Psim$ is $\epsilon$-close to $\Pex$. We then split the error budget so that $\delta = \delta_S + \epsilon$. In Appendix \ref{app:bitsampling} we show that the optimal strategy is to set $\delta_S = \delta/3$ and $\epsilon= 2 \delta/3$.}

\jrs{Let $\vec{x}_m =(x_1, \ldots, x_m)$ be the bit string comprised of the first $m$ bits of $\vec{x}$, and let 
$\ket{\Omega_m} = \Pi_{\vec{x}_m} \ket{\Omega} $ be the projection of the first $m$ qubits of $\ket{\Omega}$. By inspection of the last two steps of Figure \ref{fig:bitsampling}, we can multiply the chain of conditional probabilities and obtain the probability of sampling $\vec{x}$ from \textsc{Exact} given fixed sparsification $\ket{\Omega}$:}
\begin{align}
    \Pr(\vec{x}|\Omega) & = \mathrm{Pr}(x_1) \mathrm{Pr}(x_2|\vec{x}_1) \ldots \mathrm{Pr}(x_w|\vec{x}_{w-1}) \\
            & = \frac{\norm{\ket{\Omega_1}}^2}{\norm{\ket{\Omega}}^2}  \frac{\norm{\ket{\Omega_2}}^2}{\norm{\ket{\Omega_1}}^2}\ldots 
            \frac{\norm{\Pi_\vec{x}\ket{\Omega}}^2}{\norm{\ket{\Omega_{w-1}}}^2}  \label{eq:probchain}\\
            & = \frac{\norm{\Pi_\vec{x}\ket{\Omega}}^2}{\norm{\ket{\Omega}}^2}=\Tr\left[\Pi_\vec{x} \frac{\op{\Omega}}{\bk{\Omega}{\Omega}}\right]. 
\end{align}
\jrs{Thus \textsc{Exact} simulates sampling from the quantum state $\ket{\Omega}/\norm{\ket{\Omega}}$ exactly; any error arises solely from the sparsification procedure. Now consider that randomly choosing a pure state $\psi_j$ from $\rho = \sum_j p_j \op{\psi_j}$, generating a random approximation $\ket{\Omega}$ using $\textsc{Sparsify}$ and then normalizing is equivalent to sampling a pure state from the ensemble:}
\begin{equation}
    \sigma = \sum_j p_j \sum_\Omega \Pr(\Omega| \psi_j) \frac{\op{\Omega}}{\bk{\Omega}{\Omega}} = \sum_j p_j \rho_1^{(j)}
\end{equation}
\jrs{where $ \Pr(\Omega| \psi_j)$ is the probability of \textsc{Sparsify} outputting the vector $\ket{\Omega}$ , and $\rho_1^{(j)}$ is the expected projector $\mathbb{E}(\op{\Omega}/\bk{\Omega}{\Omega})$ as defined in Eq.~\eqref{eq:expectedSparse}, both conditioned on the input to \textsc{Sparsify} being $\ket{\psi_j}$. From our argument above it follows that $\Pex(\vec{x}) = \Tr[\Pi_\vec{x} \sigma]$. A key conceptual difference between our method and that of Bravyi \textit{et al.} \cite{bravyi2018simulation} is that while the BBCCGH sparsification results are concerned with the fidelity between the target state $\ket{\psi}$ and a single randomly chosen sparsification $\ket{\Omega}$, here we compare the target state $\rho$ with the full ensemble over sparsifications $\sigma$. From our sparsification lemma (Thm. \ref{NewSparse3}), for each pure state $\ket{\psi_j}$, we have that $\onenorm{\rho_1^{(j)}-\op{\psi_j}}\leq \delta_S + \mathcal{O}(\delta_S^2)$. It follows that $\onenorm{\sigma-\rho}\leq \delta_S + \mathcal{O}(\delta_S^2)$, and so:}
\begin{equation}
    \onenorm{\Pex - P} \leq \delta_S + \mathcal{O}(\delta_S^2). \label{eq:PexVsP}
\end{equation}

\jrs{Next we argue that $\Pex$ is $\epsilon$-close to $\Psim$, the distribution arising from our full classical algorithm. Recall from Thm.~\ref{thm:fastnorm} that \textsc{FastNorm} is able to output estimates for $\norm{\Omega_m}^2$ up to some relative error $\epsilon_{\mathrm{FN}}$, which we can set arbitrarily small (at the cost of increased runtime). One can show (see Appendix \ref{app:bitsampling}) that estimating the chain of $w$ conditional probabilities \eqref{eq:probchain} using \textsc{FastNorm} leads to a total relative error $3w \epsilon_{\mathrm{FN}}$ in the distribution sampled from, i.e.}
\begin{equation}
    (1 - 3 w\epsilon_{\rm FN}) \Pex(\vec{x}) \leq \Psim(\vec{x}) \leq (1 + 3 w\epsilon_{\rm FN}) \Pex(\vec{x}),
\end{equation}
\jrs{so to achieve relative error $\epsilon$ we must set $\epsilon_{\mathrm{FN}} = \epsilon/3w$. This governs the runtime of $\textsc{FastNorm}$. By combining this result with Eq. \eqref{eq:PexVsP} we have:}
\begin{equation}
    \onenorm{\Psim - P} \leq \delta + \mathcal{O}(\delta^2).
\end{equation}

\jrs{To analyze the runtime of our simulator, we define:}
\begin{equation}
    \widetilde{\Xi} = \sum_j p_j \onenorm{c^{(j)}}^2, \label{eq:Xitwiddle}
\end{equation}
\jrs{where $c^{(j)}$ is the vector of coefficients in the decomposition $\ket{\psi_j} = \sum_r c^{(j)} \ket{\phi_j}$. Recall from Thm.~\ref{thm:fastnorm} that for an $n$-qubit state vector with $k$ terms, the runtime of \textsc{FastNorm} is $\mathcal{O}(k n^3 \epsilon_{\mathrm{FN}}^{-2})$. From the previous discussion, if we selected the $j$-th pure state in the decomposition of $\rho$, we will have set $k\propto \onenorm{c^{(j)}}^2 \delta^{-1}$ and $\epsilon_{\mathrm{FN}} \propto \delta w^{-1}$. In a single run of the full algorithm, \textsc{FastNorm} is called $\mathcal{O}(w)$ times. Therefore the runtime to generate a single $w$-length bit string is $T = \mathcal{O}(\onenorm{c^{(j)}}^2 w^3 n^3 \delta^{-3}) $ with probability $p_j$. So from Eq. \eqref{eq:Xitwiddle}, the \textit{average-case} runtime is $\mathcal{O}(\widetilde{\Xi} w^3 n^3 \delta^{-3})$. Through $\widetilde{\Xi}$, this average-case runtime is sensitive to the particular decomposition of $\rho$ supplied to the simulator. In the case where the decomposition is optimal with respect to the mixed-state extent $\Xi$ (Def.~\ref{RoofExtension}), we have $\widetilde{\Xi}=\Xi(\rho)$, so that the average-case runtime is linear in $\Xi(\rho)$.  Recall from Section \ref{sec:single} that all single-qubit states admit an equimagical decomposition (Thm.~\ref{Sandwich2}) that naturally extends to all tensor products of single-qubit states. In that case $\onenorm{c^{(j)}}^2 = \Xi(\rho)$ for all $j$, so that we can give the \textit{worst-case} runtime as $\mathcal{O}(\Xi(\rho))$.}

\jrs{The runtime scaling of $\mathcal{O}(\delta^{-3})$ holds provided that the sparsification error $\delta_S$ is not smaller than the critical threshold $\delta_c = 8(C-1)/ \|c\|_1^2$, where $C$ is defined in Eq. \eqref{defi:constant-C}. However, the algorithm is still valid for the case of arbitrary precision, $\delta_S<\delta_c$. In this case we recover the same leading order scaling as Bravyi \textit{et al.}, namely $\mathcal{O}(\delta^{-4})$ \cite{bravyi2018simulation}, but typically with a prefactor improved by several orders of magnitude (see Fig. \ref{fig:DeltaScaling}). A detailed technical analysis is provided in Appendix \ref{app:bitsampling}, including proof of the following theorem, which captures the results discussed above.}

\begin{theorem}\label{thm:bit-string-sampling}
    Let $\rho = \sum_j p_j \kbself{\psi_j}$ be an $n$-qubit state where every pure state  has a known stabilizer decomposition $\smash{{\ket{\psi_j} = \sum_r c_r^{(j)} \ket{\phi_r} }}$. 
    For every $\ket{\psi_j}$, \jrs{let ${C_j = \norm{c^{(j)}}_1 \sum_r |c_r^{(j)}| |\bk{\psi}{\phi_r}|^2}$}.  Let ${\widetilde{\Xi}= \sum_j p_j \onenorm{c^{(j)}}^2}$, and let $D = \mathrm{max}\{(C_j - 1)/\onenorm{c^{(j)}}^2\}$. Then for any $p_{\mathrm{fail}} >0$, and $\delta \geq 24D$ there exists a classical algorithm that, with success probability $(1-p_{\mathrm{fail}})$, samples a bit-string $\vec{x}$ of length $w$ with probability $\Psim(\vec{x})$ such that:
    \begin{equation}
        \onenorm{\Psim - P} \leq \delta + \mathcal{O}(\delta^2),
    \end{equation}
    where $P(\vec{x})=\Tr(\Pi_{\vec{x}}\rho)$, and $\Pi_{\vec{x}}= \kbself{\vec{x}} \otimes \id_{n-w}$ is a projector.
    The algorithm returns $\vec{x}$ with random runtime $T$ where the \jrs{average} runtime is
    \begin{equation}
        \mathbb{E}(T) = \mathcal{O}(w^3 n^3\widetilde{\Xi} \delta^{-3}\log(w/p_{\mathrm{fail}})).
        \label{eq:expectedRuntime}
    \end{equation}
    If the decomposition of $\rho$ is optimal with respect to the definition \eqref{eq:XiDef}, then the expected runtime is $\mathcal{O}(\Xi(\rho))$. Moreover, if the state decomposition is equimagical, then the right side of \eqref{eq:expectedRuntime} also bounds the \jrs{worst-case} runtime.
    
    If arbitrary precision $\delta \leq 24D$ is required, this can be achieved at the cost of an increased runtime:
    \begin{equation}
        \mathbb{E}(T) = \mathcal{O}(w^3 n^3\widetilde{\Xi} (\delta^{-3} + 3D\delta^{-4})\log(w/p_{\mathrm{fail}})).
        \label{eq:expectedRuntimeSharpened}
    \end{equation} 

\end{theorem}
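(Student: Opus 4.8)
The plan is to follow the two-stage error decomposition sketched in the main text, proving $\ell_1$-closeness and runtime separately before combining them through the error budget $\delta = \delta_S + \epsilon$. First I would introduce the idealized algorithm \textsc{Exact}, in which every call to \textsc{FastNorm} is replaced by an exact evaluation of $\norm{\Pi_{\vec{x}_m}\ket{\Omega}}$. Multiplying the chain of conditional probabilities telescopes to $\Pr(\vec{x}\mid\Omega) = \Tr[\Pi_{\vec{x}}\, \op{\Omega}/\bk{\Omega}{\Omega}]$, so \textsc{Exact} samples exactly from the normalized sparsified vector and the only error is the sparsification itself. The crucial mixed-state step is to observe that drawing $\ket{\psi_j}$ with probability $p_j$, sparsifying, and renormalizing is equivalent to sampling a pure state from $\sigma = \sum_j p_j \rho_1^{(j)}$, where $\rho_1^{(j)}$ is the expected normalized projector of Eq.~\eqref{eq:expectedSparse} conditioned on input $\ket{\psi_j}$. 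Applying Thm.~\ref{NewSparse3} termwise gives $\onenorm{\rho_1^{(j)} - \op{\psi_j}} \leq \delta_S + \mathcal{O}(\delta_S^2)$ whenever $\delta_S \geq \delta_c$, and convexity of the trace norm then yields $\onenorm{\sigma - \rho} \leq \delta_S + \mathcal{O}(\delta_S^2)$, hence $\onenorm{\Pex - P} \leq \delta_S + \mathcal{O}(\delta_S^2)$.

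Next I would bound $\onenorm{\Psim - \Pex}$, the error introduced by replacing exact norms with \textsc{FastNorm} estimates. Each conditional probability is a ratio $\norm{\Omega_{m}}^2/\norm{\Omega_{m-1}}^2$ of two quantities estimated to relative error $\epsilon_{\mathrm{FN}}$, so the dominant technical work is to show that propagating these multiplicative errors through the length-$w$ chain \eqref{eq:probchain} accumulates to a total relative error of at most $3w\epsilon_{\mathrm{FN}}$, giving $(1-3w\epsilon_{\mathrm{FN}})\Pex(\vec{x}) \leq \Psim(\vec{x}) \leq (1+3w\epsilon_{\mathrm{FN}})\Pex(\vec{x})$. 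Setting $\epsilon_{\mathrm{FN}} = \epsilon/3w$ delivers $\onenorm{\Psim - \Pex} \leq \epsilon$, and the triangle inequality combines the two stages into $\onenorm{\Psim - P} \leq \delta_S + \epsilon + \mathcal{O}(\delta^2)$. Optimizing the split of the budget, by minimizing the product of the $k \propto \delta_S^{-1}$ sparsification cost and the $\epsilon_{\mathrm{FN}}^{-2} \propto \epsilon^{-2}$ norm-estimation cost subject to $\delta_S + \epsilon = \delta$, gives $\delta_S = \delta/3$ and $\epsilon = 2\delta/3$.

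For the runtime I would track the cost of one bit-string sample conditioned on having drawn $\ket{\psi_j}$. By Thm.~\ref{NewSparse3} the sparsification uses $k = \mathcal{O}(\onenorm{c^{(j)}}^2/\delta)$ stabilizer terms, and the procedure invokes \textsc{FastNorm} $\mathcal{O}(w)$ times, each at relative error $\epsilon_{\mathrm{FN}} = \epsilon/3w$ and, by Thm.~\ref{thm:fastnorm}, at cost $\mathcal{O}(k n^3 \epsilon_{\mathrm{FN}}^{-2}\log(\cdot))$. A union bound over the $\mathcal{O}(w)$ calls requires each to fail with probability $\mathcal{O}(p_{\mathrm{fail}}/w)$, contributing the $\log(w/p_{\mathrm{fail}})$ factor. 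Substituting $\epsilon_{\mathrm{FN}}^{-2} = \mathcal{O}(w^2\delta^{-2})$ and $k = \mathcal{O}(\onenorm{c^{(j)}}^2 \delta^{-1})$ gives a per-sample cost of $\mathcal{O}(\onenorm{c^{(j)}}^2 w^3 n^3 \delta^{-3}\log(w/p_{\mathrm{fail}}))$ with probability $p_j$; averaging over $j$ and using $\widetilde{\Xi} = \sum_j p_j \onenorm{c^{(j)}}^2$ yields Eq.~\eqref{eq:expectedRuntime}. The specializations are then immediate: an optimal decomposition gives $\widetilde{\Xi} = \Xi(\rho)$ for the average case, while an equimagical decomposition has $\onenorm{c^{(j)}}^2 = \Xi(\rho)$ for every $j$, removing the runtime variance and promoting the bound to the worst case.

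Finally, for the arbitrary-precision regime $\delta \leq 24D$ I would abandon the leading-order $k \propto \delta_S^{-1}$ scaling and instead feed the exact bounds of Lem.~\ref{tracenormlemma} and Lem.~\ref{lem:variancelemma} into the analysis; there the $4(C-1)/k$ variance term dominates, forcing $k = \mathcal{O}(D\onenorm{c^{(j)}}^2\delta^{-2})$ and hence the extra $\delta^{-1}$ factor that produces the $\delta^{-3} + 3D\delta^{-4}$ scaling of Eq.~\eqref{eq:expectedRuntimeSharpened}. I expect the genuinely delicate step to be the multiplicative-error propagation through the chain of $w$ conditional probabilities: one must verify that ratios of \textsc{FastNorm} estimates do not amplify errors catastrophically and that the $w$ relative errors compose additively to leading order rather than multiplicatively, since it is precisely the multiplicative, as opposed to additive, accuracy of \textsc{FastNorm} that guarantees $\ell_1$-closeness of the sampled distribution. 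The remaining bookkeeping, namely convexity of the trace-norm bound, the union bound over failure probabilities, and the error-budget optimization, is routine.
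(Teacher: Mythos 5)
Your proposal is correct and follows essentially the same route as the paper's own proof: the \textsc{Exact}/\textsc{FastNorm} two-stage decomposition, the ensemble $\sigma=\sum_j p_j \rho_1^{(j)}$ with Thm.~\ref{NewSparse3} applied termwise, the $3w\epsilon_{\mathrm{FN}}$ multiplicative-error accumulation, the $\delta_S=\delta/3$, $\epsilon=2\delta/3$ budget split, and the fallback to the exact variance bounds of Lem.~\ref{tracenormlemma} and Lem.~\ref{lem:variancelemma} in the high-precision regime all match Appendix~\ref{app:bitsampling}. No gaps.
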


\subsection{Constrained path simulator}
\label{sec:simulate3}
\jrs{In a standard quasiprobability simulator, the target state $\rho = \sum_j q_j \sigma_j$ is decomposed as an affine combination of frame elements $\sigma_j$ that are in some sense easy to simulate. We can alternatively combine all the positive and negative contributions into convex combinations $\sigma_+$ and $\sigma_-$ respectively, so that the decomposition is rewritten: $\rho = \lambda \sigma_+ - (\lambda-1) \sigma_-$, for some $\lambda = \sum_{q_j \geq 0} q_j \geq 1$. The standard sampling procedure for estimating $\langle E \rangle$ for some observable $E$ can then be divided into two steps: (i) randomly sample the positive or negative path with probability $\lambda/\onenorm{q}$ or $(\lambda - 1)/\onenorm{q}$, where $\onenorm{q} = 2\lambda -1$; (ii) sample an individual frame element from the selected convex combination $\sigma_{\pm}$. As explained in Section \ref{sec:quasi}, the number of samples needed to achieve any accuracy $\epsilon>0$ is $\mathcal{O}(\onenorm{q}^2\epsilon^{-2})$. Viewed in this way, we see that any increased runtime for simulating magic states arises in step (i) rather than step (ii). In other words, sampling a frame element from the convex combination $\sigma_{\pm}$ does not incur additional overhead.}

\jrs{An alternative strategy is to constrain sampling to the positive path so that step (i) is avoided. This is equivalent to making the approximation $\rho \approx \lambda \sigma_+ $, and comes at the cost of an unavoidable systematic error of size $|(\lambda - 1)\Tr[E \sigma_-]|$. However, an advantage to this approach is that since $\Tr[E \sigma_-]$ is no longer evaluated explicitly, $\sigma_-$ need not be an efficiently simulable state. Therefore it is natural to connect this strategy with primal solutions to the generalized robustness problem:} 
\begin{equation}
    \rho = \Lambda^+ (\rho) \sigma - (\Lambda^+(\rho) - 1) \rho_-,
\end{equation}
\jrs{where $\sigma$ is a mixed stabilizer state, but $\rho_-$ can be any density operator. Moreover, since systematic error is unavoidable, it is unnecessary to evaluate the first term to high precision, so the runtime can be reduced. Pseudocode for our constrained path simulator is given in Algorithm \ref{algo:constrainedpath}. We will then place tight bounds on the systematic error. }

\begin{algorithm}[H]
\caption{constrained path simulator\label{algo:constrainedpath}}
\begin{algorithmic}[1]
\algrenewcommand\algorithmicrequire{\textbf{Input:}}
\algrenewcommand\algorithmicensure{\textbf{Output:}}
\Require \jrs{Target state} $\rho$; real numbers $\lambda, c,p_{\mathrm{fail}} > 0 $,  and stabilizer state $\sigma \in \bar{\mathrm{S}}_{n}$ s.t. $\rho \leq \lambda \sigma$ and $c, p_{\mathrm{fail}} \ll 1$; Pauli observable $E$ and simulable channel $O\in \mathcal{O}_n$.
\Ensure Estimate $\widehat{E}$ and error bound $\Delta$, s.t $|\widehat{E} - \mathrm{Tr}(E O[\rho])| \leq \Delta$.
\Statex
\State \jrs{$\epsilon \gets c\lambda$}

\State Let $E_\sigma$ be an estimate for $ \lambda \mathrm{Tr}(E O[ \sigma ]) $ obtained via sampling up to error $\epsilon$ and success probability $1- p_{\mathrm{fail}}$.
\State \jrs{ $E_\mathrm{max} \gets \min\{1, E_\sigma + \epsilon + \lambda - 1\}$.}
\State \jrs{$E_\mathrm{min} \gets \max\{-1, E_\sigma - \epsilon - \lambda + 1\}$.}
\State $\widehat{E} \gets (E_\mathrm{max} + E_\mathrm{min})/2$
\State $\Delta \gets (E_\mathrm{max} - E_\mathrm{min})/2 $
\end{algorithmic}
\end{algorithm}

Choosing $E_\mathrm{max}$ and $E_\mathrm{min}$ to be given in steps 3 and 4, we ensure that for all $\lambda$ and $E_\sigma$,
\begin{equation}
    |\widehat{E} - \mathrm{Tr}(E O[\rho])|\leq \Delta
\end{equation}
holds with probability $1-p_{\mathrm{fail}}$. The major caveat is that there are certain regimes (for large $\lambda$ and small $E_\sigma$) where the algorithm fails by trivially estimating the true expectation value to be anywhere in the range $[-1,1]$. Nevertheless, in some regimes we efficiently obtain a biased but non-trivial estimate. We first briefly explain steps 3 and 4, before analysing the error bound and runtime.

When $\lambda$ and $\sigma$ are such that $\rho \leq \lambda\sigma$, 
using \eqref{eq:LambdaPlusDefn3}, 
there is some density matrix $\rho_-$ such that $\rho$ can be written as 
\begin{equation}
    \rho = \lambda \sigma - (\lambda - 1) \rho_-.
\end{equation}
Step 2 estimates $E_\sigma$ such that \jrs{$|E_\sigma - \lambda \Tr(EO[\sigma])| \leq \epsilon$ }with probability $1-p_{\mathrm{fail}}$. We use this to bound possible values of $\Tr(E O[\rho])$:
\begin{align}
    \Tr(E O[\rho]) & = \lambda\Tr(E O[\sigma]) - (\lambda - 1) \Tr(E O[\rho_-]) \\
                   & \leq E_\sigma + \epsilon + (\lambda - 1)
\end{align}
Similarly one obtains \jrs{$\Tr(E O[\rho])\geq  E_\sigma - \epsilon - (\lambda - 1) $.} Trivially we know that $|\Tr(E O[\rho])| \leq 1$, so in case either expression exceeds this (for example if $E_\sigma$ is close to $\pm 1$) we simply take either $E_\mathrm{max} = 1$ or $E_\mathrm{min} = -1$ as necessary. We now consider the regimes where the bounds are trivial, and give the size of the error otherwise. 

\noindent\textbf{Case 1 (failure):} Trivial bounds are obtained when both these conditions hold:
\begin{align}
    E_\sigma + \epsilon + (\lambda - 1)  & \geq 1  \label{eq:failurebound1}\\
    E_\sigma - \epsilon - (\lambda - 1) & \leq -1,  \label{eq:failurebound2}
\end{align}
that is, when $E_\sigma$ satisfies:
\begin{equation}
    2- \lambda(1+c)\leq E_\sigma \leq \lambda (1+c) - 2. 
\end{equation}
This holds only if $\lambda \geq 2/(1 + c) \approx 2$, as otherwise at most one of the inequalities \eqref{eq:failurebound1} and \eqref{eq:failurebound2} can be true.

\noindent\textbf{Case 2 (constant error):} When  $\lambda < 2/(1 + c) \approx 2$, there is a range of values of $E_\sigma$ where inequalities \eqref{eq:failurebound1} and \eqref{eq:failurebound2} are both violated:
\begin{equation}
     \lambda (1+c) - 2\leq E_\sigma \leq 2- \lambda(1+c). 
\end{equation}
In this case, we have
\begin{align}
    \widehat{E} & = E_\sigma, \\
    \Delta & = \lambda(1 + c) - 1.
\end{align}

\noindent\textbf{Case 3 (error decreases with $\abs{E_\sigma}$):} The remaining case occurs when \jrs{$|E_\sigma|$ is sufficiently large}, so that either $E_\mathrm{max} = 1 $ or $E_\mathrm{min} = -1 $. This limits the range of possible values of $\Tr(E O[\rho])$, so that
\begin{equation}
    \Delta = \frac{\lambda(1 + c) - |E_\sigma|}{2}.
\end{equation}
This occurs when exactly one of the inequalities \eqref{eq:failurebound1} and \eqref{eq:failurebound2} is satisfied, while the other is violated.  \jrs{Note that this can happen even when $\lambda \gg 2 $, as it depends on the value of $E_\sigma$ returned. For example if $E_\sigma = \pm\lambda$, we obtain $\widehat{E} = 1 \mp \epsilon/2$ and $\Delta = \epsilon/2$.}

\jrs{Estimating $\lambda \mathrm{Tr}(E O[ \sigma ]) $ using any Clifford simulator (e.g. the dyadic frame simulator)} takes up the most time in the algorithm, as the other steps are trivial to evaluate. 	
Since $\sigma$ is a \jrs{convex combination of stabilizer states, there is no additional sampling overhead due to negativity. The prefactor $\lambda$ increases the variance of the estimator, but we compensate for this by setting the precision to $\epsilon = c \lambda$, where $c$ is a small constant. The rationale for this is that the systematic error due to our ignorance of $\rho_-$ is unavoidable, and this error is of size $\lambda - 1$. Therefore there is a limit to the precision we can achieve by increasing the runtime of the sampling step, and we should set the precision commensurate with the size of $\lambda$. Using the standard arguments (see Section \ref{sec:quasi}), the smallest number of samples $T$ sufficient to achieve this precision is:}
\begin{equation}
    T = \lceil  2 \lambda^2 \epsilon^{-2} \log (2 p_{\mathrm{fail}}^{-1}) \rceil = \lceil  2 c^{-2} \log (2 p_{\mathrm{fail}}^{-1}) \rceil.
\end{equation}
\jrs{The runtime for our constrained path simulator is therefore constant with respect to $\lambda$ (i.e. the generalized robustness $\Lambda^{+}(\rho)$ when the decomposition is optimal), depending only on the parameters $c$ and $p_{\mathrm{fail}}$. In this sense, we achieve efficient runtime by trading off against precision in the estimate; it is the error $\Delta$ which scales with the magic monotone rather than the runtime.}

Our explicit algorithm for estimating Pauli expectation values easily adapts to estimate Born rule probabilities for stabilizer projectors $\Pi$ by replacing the assumption $|\Tr(E \rho )|\leq 1$ for any $\rho$ with $0\leq \Tr(\Pi \rho) \leq 1$.

%%%%%%%%%%%%%%%%%%%%%%%%%%%%%%%%%%%%%%%%%%%%%%%%%%%%%

\section{Applications to other resources}
\label{sec:otherTheories}

\prx{%
Although we focused on the simulation of quantum circuits within the stabilizer formalism in Sec.\ref{sec:simulate}, our methods can be extended beyond magic-state quantum computation. The crucial idea here is to identify an efficiently simulable quantum sub-theory, consisting of a set of $n$-qubit pure-state vectors $\mathrm{S}_n$ and a set of operators $\mathcal{T}_n$ such that any $K \in \mathcal{T}_n$ acts on a state $\ket\phi \in \mathrm{S}_n$ as $K \ket\phi \propto \ket{\phi'} \in \mathrm{S}_n$, and this update can be efficiently tracked. Any such sub-theory can then be extended to mixed states and to the dyadic setting, allowing for the adaptation of our classical simulators. In particular, we can show that --- as long as the sub-theory itself satisfies some basic criteria regarding its simulability --- we can always efficiently simulate quantum circuits built from operators in $\mathcal{T}_n$ when acting on states composed of convex mixtures of projectors in $\mathrm{S}_n$. Just as before, when the algorithms work outside the given quantum sub-theory, they incur an additional resource cost which can be measured using our monotones.

This formalism very naturally fits into the framework of quantum resource theories~\cite{chitambar_2019}, which study the quantification and manipulation of resources in physically restricted settings. Here, a set of states and a set of operations are considered ``free'', while states and operations outside of these sets are costly to use and implement. The connection we build between classical simulators and resource theories then connects the quantitative value of such resources with the performance of the classical simulators, thus giving an explicit operational meaning to important resource monotones. Indeed, the monotones $\mathcal{R}$, $\Lambda$, and $\Lambda^+$ can be defined in general resource theories~\cite{regula2017convex}, but their operational meaning is not always known. For instance, although the robustness monotones $\mathcal{R}$ and $\Lambda^+$ have found general use in tasks such as channel discrimination~\cite{takagi_2019-2,takagi_2019} and resource conversion~\cite{liu_2019,regula_2019-1}, the dyadic negativity $\Lambda$ has not been shown to have any direct operational applications in general resource theories, nor has a connection between monotones such as the generalized robustness $\Lambda^+$ and classical simulation been established.

In the following, we will refer to the pure states $\mathrm{S}_n$ as free. Similarly, we define the set of free operations $\mathcal{O}_n$ as all quantum channels whose Kraus operators belong to $\mathcal{T}_n$, and thus cannot generate any resource from a free state. Analogously, the set of free observables $\mathcal{M}_n$ can be defined to be all observables which always result in a free post-measurement state. 

From our discussion of the dyadic frame simulator in Sec.~\ref{sec:simulate1} and Appendix~\ref{app:dyadicframe}, the proof clearly requires only three crucial assumptions about the classical simulability of the underlying sub-theory. We formalize them as follows:
\begin{enumerate}[({S}1)]
  \item Only $\mathcal{O}(\mathrm{poly}(n))$ bits of information are necessary to index all $n$-qubit pure free states in the set $\mathrm{S}_n$.
  \item Given a free operator $K \in \mathcal{T}_n$ and any free state $\ket{\phi} \in \mathrm{S}_n$, we can compute the update $K \ket{\phi}$ as well as the norm $\| K \ket{\phi} \|$ in $\mathcal{O}(\mathrm{poly}(n))$ time.
  \item Given a free observable $\Pi \in \mathcal{M}_n$ and any free states $\ket{L}, \ket{R} \in \mathrm{S}_n$, we can compute $\bra{R}\Pi\ket{L}$ in $\mathcal{O}(\mathrm{poly}(n))$ time.
\end{enumerate}
As before, we will be interested in the composition of free operations $O \in \mathcal{O}_n$ which admit a simulable decomposition, i.e. can be written as $O(\cdot) = \sum_{i=1}^{N_K} K_i \cdot K_i^\dagger$ with each $K_i \in \mathcal{T}_n$ and $N_K \leq \mathrm{poly}(n)$. With this, our proofs of Sec.~\ref{sec:simulate1} and Appendix \ref{app:dyadicframe} can be immediately applied to generalize the dyadic frame simulator.

\begingroup
\renewcommand{\thetheorem}{\ref{thm:simulationJS}'}
\begin{theorem}
Consider a resource theory with free pure states $\mathrm{S}_n$, free operations $\mathcal{O}_n$, and free observables $\mathcal{M}_n$ satisfying criteria (S1)-(S3) above.

Let  $\rho = \sum_j \alpha_j \ket{L_j}\bra{R_j}$, be a known dyadic decomposition of an initial $n$-qubit state,  where $\alpha\in \mathbb{C}$ and the probability distribution $\{|\alpha_j|/\onenorm{\alpha}\}$ can be efficiently sampled. Let $\chan = O^{(T)} \circ \ldots \circ O^{(1)}$, where each $O^{(t)} \in \mathcal{O}_n$.
    Suppose that every $O^{(t)}$ has a known simulable decomposition. Then, given $\Pi \in \mathcal{M}_n$, we can estimate the Born rule probability $\mu = \Tr(  \Pi \chan [\rho ] )$ within additive error $\epsilon$, with success probability at least $1- p_{\mathrm{fail}}$ and  worst-case runtime
  \begin{equation}
 \frac{\|\alpha\|_1^2}{\epsilon^2}  \log(p_{\mathrm{fail}}^{-1})T
{\rm poly}(n).
  \end{equation}  
  Furthermore, if the dyadic decomposition of $\rho$ is optimal then $\|\alpha\|_1$ can be replaced by $\Lambda(\rho)$.
\end{theorem}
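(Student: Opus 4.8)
The plan is to observe that the proof of Theorem~\ref{thm:simulationJS}, as developed in Sec.~\ref{sec:simulate1} and Appendix~\ref{app:dyadicframe}, never uses any property of the stabilizer formalism beyond three computational primitives: the ability to index free states compactly, to propagate them (together with their norms) under the elementary free operators, and to evaluate the relevant matrix elements of a free observable. These are exactly the contents of (S1)--(S3). I would therefore re-run the entire argument with ``stabilizer state'' replaced by ``free state in $\mathrm{S}_n$'', ``Clifford/stabilizer-preserving channel'' replaced by ``free operation $O \in \mathcal{O}_n$ with Kraus operators in $\mathcal{T}_n$'', and ``stabilizer projector'' replaced by ``free observable $\Pi \in \mathcal{M}_n$'', checking at each step that every efficiency claim reduces to one of (S1)--(S3) and that every algebraic identity is dimension-free.

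First I would treat the estimator itself. Sampling the index $j$ from $\{|\alpha_j|/\onenorm{\alpha}\}$ is efficient by hypothesis, and storing each sampled dyad $\kb{L_j}{R_j}$ costs $\mathcal{O}(\poly(n))$ bits by (S1). The unbiasedness computation
\begin{equation}
    \mathbb{E}(\widehat{E}) = \re\{ \Tr[\Pi\, \chan(\textstyle\sum_j \alpha_j \kb{L_j}{R_j})] \} = \Tr[\Pi\,\chan(\rho)]
\end{equation}
is a purely linear manipulation that carries over verbatim, using only that $\Tr[\Pi\,\chan(\rho)]$ is real. The Hoeffding analysis is likewise unchanged: each sample $\widehat{E} = \onenorm{\alpha}\,\re\{e^{i\theta'}\bra{R'}\Pi\ket{L'}\}$ lies in $[-\onenorm{\alpha},+\onenorm{\alpha}]$ because $\ket{L'},\ket{R'}$ are normalized free states and $\Pi$ is a projector, so $|\bra{R'}\Pi\ket{L'}| \le \norm{\Pi\ket{R'}}\,\norm{\Pi\ket{L'}} \le 1$ by Cauchy--Schwarz. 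This yields the same bound $M \ge 2\onenorm{\alpha}^2\epsilon^{-2}\log(2p_{\mathrm{fail}}^{-1})$ on the number of samples, each evaluable in $\mathcal{O}(\poly(n))$ time: the inner product $\bra{R'}\Pi\ket{L'}$ by (S3), and the per-step update of each half of the dyad under the sampled trajectory by repeated application of (S2). Multiplying by the circuit depth $T$ reproduces the stated runtime, and the final replacement $\onenorm{\alpha}\to\Lambda(\rho)$ for an optimal decomposition is immediate from the definition of the dyadic negativity over $\mathrm{S}_n$.

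The one genuinely non-trivial ingredient, and where I expect the main work to lie, is the treatment of non-unital free operations, whose Kraus decompositions necessarily contain non-norm-preserving operators $K_{k'}$. Here the quantity one might naively use as a transition probability for a dyad, $\bra{R}K_{k'}^\dagger K_{k'}\ket{L}$, is complex and hence not a probability; the resolution in Appendix~\ref{app:dyadicframe} is to define transition probabilities from the norms $\norm{K_{k'}\ket{L}}$ and $\norm{K_{k'}\ket{R}}$ of the two halves \emph{separately} and to fold the resulting reweighting into the complex coefficient and phase carried by the sample. I would verify that this construction invokes only (S2) --- which supplies exactly these norms together with the updated vectors in $\mathcal{O}(\poly(n))$ time --- and that the telescoping of the per-step reweightings along a trajectory reproduces the correct overall coefficient, so that the estimator remains unbiased and each half stays a normalized free state at the readout stage. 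Since the original argument is phrased entirely in terms of these norms and updates, no stabilizer-specific structure enters, and the theorem follows.
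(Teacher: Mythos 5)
Your proposal is correct and matches the paper's own argument: the paper proves this theorem precisely by observing that the proof of Theorem~\ref{thm:simulationJS} in Sec.~\ref{sec:simulate1} and Appendix~\ref{app:dyadicframe} invokes only the three simulability primitives (S1)--(S3), so it carries over verbatim with stabilizer states, Clifford/Kraus operators, and stabilizer projectors replaced by their free counterparts. Your explicit verification of the unbiasedness, the Hoeffding bound with samples in $[-\onenorm{\alpha},+\onenorm{\alpha}]$, and the norm-based transition probabilities for non-unital channels is exactly the checklist the paper relies on.
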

\endgroup

%%%%

Theorem \ref{thm:simulationJS}' establishes an efficient simulation algorithm which can be employed in any resource theory that satisfies the requirements.
The theorem also connects the monotone $\Lambda$ with the sampling overhead of the algorithm, thus endowing $\Lambda$ with an exact operational interpretation in the context of resource theories beyond magic. The result of Theorem \ref{thm:simulationJS}' additionally allows us to employ the constrained path simulator of Sec.~\ref{sec:simulate3} to define a related simulation algorithm which depends on another monotone --- the generalized robustness $\Lambda^+$. Once again, the reasoning of Sec.~\ref{sec:simulate3} can be applied verbatim under the assumptions (S1)-(S3).
}%end \prx

As an example where the result can be immediately applied, consider the resource theory of quantum coherence~\cite{baumgratz_2014,streltsov_2017}, where the free states $\mathrm{S}_n$ are the vectors of the computational basis $\{\ket{i}\}$. The free measurements are in the computational basis, for instance projectors of the form $\Pi=\kb{{\bf x}}{{\bf x}} \otimes \id$ where ${\bf x}$ is a fixed bit-string, which can be efficiently computed.  The corresponding dyadic negativity $\Lambda$ is then the (element-wise) $\ell_1$-norm, $\|\rho\|_{\ell_1} = \sum_{i,j} \left| \langle i \vert \rho \vert j \rangle \right|$.  We remark that $\|\cdot \|_{\ell_1}$ is trivially a multiplicative monotone in any dimension. Although $\|\cdot \|_{\ell_1}$ is one of the most commonly employed measures in the resource theory of coherence~\cite{baumgratz_2014,rana_2017,streltsov_2017}, it has lacked an explicit operational interpretation thus far.
Since the resource theory of coherence is not known to admit a unique, physically motivated choice of free operations~\cite{chitambar_2016,streltsov_2017}, 
we briefly discuss the possible choices of $\mathcal{O}_n$ and their classical simulability. From this, we use Thm.~\ref{thm:simulationJS}' to give $\|\cdot \|_{\ell_1}$ an operational interpretation.

 The most fundamental class of free operations within the resource theory of coherence are the \emph{incoherent operations} (IOs)~\cite{baumgratz_2014}, defined to be maps which admit a decomposition into Kraus operators which preserve the set of incoherent states. Such Kraus operators can be expressed as~\cite{winter_2016} $K=\sum_{x \in S} c_x \kb{f(x)}{x}$ for some set of bit-strings $S$, coefficients $c_x$, and an arbitrary function $f$. Given such a $K$ acting on no more than $b$ qubits, where $b$ is constant, we can efficiently compute $(K \otimes \id_{n-b}) \ket{i}$. Since any Boolean $f$ can be implemented by composing a set of universal classical logic gates (with $b=2$) such as AND and XOR, such gates can generate an IO realising any Boolean function.  Furthermore, IOs can simulate any quantum channel with sufficiently many coherent states~\cite{baumgratz_2014,bendana_2017}.
 
 One family of useful IOs in practice are the \emph{strictly incoherent operations} (SIOs)~\cite{winter_2016,yadin_2016}, which can be efficiently implemented by quantum circuits using only incoherent ancillae~\cite{yadin_2016}. As a subtheory of IO, all the updates are still efficiently computable.   Furthermore, $b=3$ suffices to provide the Toffoli gate which is universal for classical reversible logic, and so can generate any Kraus operator of the required form.   The biggest difference between SIOs and IOs is that, while SIOs are better understood from the perspective of their practical implementation, they cannot be promoted to universal quantum operations through the use of ancillary resource states~\cite{bendana_2017}.

We conclude that the resource theory of coherence that uses either IOs or SIOs as free operations satisfies the conditions of Thm.~\ref{thm:simulationJS}'. Thus, the theorem endows the $\ell_1$-norm of coherence with an operational interpretation as the sampling overhead in \br{the simulation of either of the classes of operations} using the dyadic frame simulator, \br{and similarly the constrained path simulator gives another meaning to the robustness of coherence $\Lambda^+$~\cite{napoli_2016}.}

Our dyadic frame simulator is especially useful in resource theories where other simulation algorithms such as the Howard--Campbell simulator for magic states~\cite{Howard17robustness} cannot be readily adapted.
For instance, in the resource theory of coherence, the free states $\bar{\mathrm{S}}_{n}$ form a zero-measure subset of all states, which means that no resourceful state $\rho$ can be decomposed as $\rho= \sum_j p_j \kb{\phi_j}{\phi_j}$  with $\ket{\phi_j}  \in \mathrm{S}_n$ and so the corresponding robustness quantifier $\mathcal{R}(\rho)$ diverges.

Note, however, that the dyadic frame simulator does not work for all resource theories. While the dyadic frames for stabilizer and incoherent operations meet the conditions of Thm.~\ref{thm:simulationJS}, the requirements cannot hold for the theory of separable states under local operations and classical communication (LOCC). This is because the free states consist of an infinite number of inequivalent pure product states, which cannot be described using $\mathrm{poly}(n)$ bits. However, one can accurately compute local unitaries acting on product states efficiently. Indeed, our framework could encompass entanglement and similar theories using a suitable $\epsilon$-net over the set of separable states, and we leave the precise statement of the relevant conditions for future work.

\section{Conclusions}

We have introduced three resource monotones into the setting of magic state quantum computation: the dyadic negativity $\Lambda$, the generalized robustness $\Lambda^{+}$, and the \jrs{mixed-state extent} $\Xi$.  The first part of the paper focuses on resource-theoretic results, including that: (\textit{i}) for pure states, the monotones all equal the extent monotone $\xi$; (\textit{ii}) for tensor products of single-qubit mixed states, they all coincide; and (\textit{iii}) the monotones act multiplicatively on tensor products of single-qubit mixed states. The results significantly simplify the computation of the monotones for multiple copies of a single-qubit state, and allow us to completely understand the asymptotic behavior of our magic quantifiers, which contrasts with previously used monotones.  Furthermore, our magic monotones often tighten previously known bounds on distillation rates.   

For each monotone, we introduce a related classical simulation algorithm. Our dyadic negativity simulator has a runtime proportional to $\Lambda(\rho)^2$, which is similar to \br{--- but significantly faster than ---} the Howard--Campbell simulator with runtime $\mathcal{R}(\rho)^2$ where $\mathcal{R}$ is the robustness of magic. Additionally, we show that the dyadic negativity simulator works for circuits which use completely stabilizer-preserving operations. This class includes all the conventional stabilizer operations (Clifford unitaries, Pauli projections, etc.) and we believe it is likely to be strictly larger. If true, the situation would mirror entanglement theory in the separation between LOCC and separable operations~\cite{LOCC_SEP}. We found that for tensor products of $n$ single-qubit states, both $\Lambda$ and $\mathcal{R}$ scale exponentially with $n$, but $\mathcal{R}$ is exponentially larger than $\Lambda$. This establishes our dyadic negativity simulator as the fastest known quasiprobability simulator for qubit magic states.

However, not all classical simulation algorithms are based on quasiprobability distributions, with the stabilizer rank methods representing a distinct paradigm. There are several crucial differences, including that stabilizer rank methods enable a stronger notion of classical simulation, as they allow us to sample outputs of a quantum computation, not just estimate Born rule probabilities.  Prior work on stabilizer rank simulations considered only pure states, but our simulator extends this to mixed states and demonstrates an expected runtime proportional to $\Xi(\rho)$.  Note the linear dependence on $\Xi$ (largely due to fast norm estimation~\cite{bravyi2016improved}), in contrast to the quadratic dependence encountered with quasiprobability simulators.  Since in general $\Lambda[\rho] \leq \Xi[\rho]$, it is theoretically possible that $\Lambda[\rho] \ll \Xi[\rho]$ so that $\Lambda[\rho]^2 \leq \Xi[\rho]$, which would mean a runtime advantage for the quasiprobability methods.  However, for products of single-qubit states the monontones are equal, so for such states our resource theory results show that the advantage clearly falls to the stabilizer rank simulators.  Furthermore, we improve stabilizer rank bounds, with the runtime for sampling Clifford magic states (e.g. $T$ states) improved to $\mathcal{O}(1/\delta^3)$ from the prior $\mathcal{O}(1/\delta^4)$ bound
where $\delta$ is the sampling precision. For other magic states, the advantage is not as simple to describe using big-$\mathcal O$ notation, but Fig.~\ref{fig:DeltaScaling} shows it to be considerable in practice.

Finally, by ensuring that our simulation algorithms can be easily generalized and providing a recipe to adapt the simulators to  resource theories beyond magic states, 
we shed light on the simulation of quantum circuits using very general resources under suitable assumptions. 
This not only provides new insight into the practical uses of resource quantifiers in well-studied theories such as quantum coherence, but also opens an avenue for a further study of the connections between the theoretical frameworks of quantum resources and their operational applications in quantum computation.

A clear direction for further research is to extend our results to the channel picture, which would enable a more direct route to simulate circuits with no need to replace non-free operations with state injection gadgets.  This is especially important in the context of stabilizer theory for the simulation of circuits with gates outside the Clifford hierarchy, as the gadgets then become more complex.

%\textit{Acknowledgements.}--- 
\begin{acknowledgments}
We would like to thank the anonymous reviewers
for their insightful comments. 
This work was supported by the Engineering and Physical Sciences Research Council [grant numbers EP/P510270/1 (J.R.S.) and EP/M024261/1 (E.T.C.\ and Y.O.)]. B.R. was supported by the Presidential Postdoctoral Fellowship from Nanyang Technological University, Singapore. Research at Perimeter Institute is supported in part by the Government of Canada through the Department of Innovation, Science and Economic Development Canada and by the Province of Ontario through the Ministry of Colleges and Universities. H.P.\ also acknowledges the support of the Natural Sciences and Engineering Research Council of Canada (NSERC) discovery grants [RGPIN-2019-04198] and [RGPIN-2018-05188] and the ARC via the Centre of Excellence in Engineered Quantum Systems (EQUS) Project No. CE170100009. This work was completed
while E.T.C.\ was at the University of Sheffield.
\end{acknowledgments}
%%%%%%%%%%%%%%%%%%%%%%%%%%%%%%%%%%%%%%%%%%%%%%%%%%%%%%%%%%%%%%%%%%%%%%%%%%%%%%%%%%%%%%%%%%%%%%%%%%%%%%%%%%
%%%%%%%%%%%%%%%%%%%%%%%%%%%%%%%%%%%%%%%%%%%%%%%%%%%%%%%%%%%%%%%%%%%%%%%%%%%%%%%%%%%%%%%%%%%%%%%%%%%%%%%%%%
%%%%%%%%%%%%%%%%%%%%%%%%%%%%%%%%%%%%%%%%%%%%%%%%%%%%%%%%%%%%%%%%%%%%%%%%%%%%%%%%%%%%%%%%%%%%%%%%%%%%%%%%%%
%%%%%%%%%%%%%%%%%%%%%%%%%%%%%%%%%%%%%%%%%%%%%%%%%%%%%%%%%%%%%%%%%%%%%%%%%%%%%%%%%%%%%%%%%%%%%%%%%%%%%%%%%%

\bibliographystyle{apsrmp4-2}
 \bibliography{ms}
 
\appendix

\section{Alternative proofs for previous results}

\subsection{Monotone equivalence proof}
\label{RegulaProof}
 Here we prove Lem.~\ref{RegulaLemma1}.  Consider an optimal decomposition for the extent, such that
 \begin{equation}
	 \ket{\Psi} = \sum_i c_i \ket{\psi_i} , \; \ket{\psi_i} \in \mathrm{S}_n 
 \end{equation}
 with $\xi(\Psi)=\|c\|_1^2$. Then 
  \begin{equation}
 \kb{\Psi}{\Psi} = \sum_{i,j} c_i c_j^* \kb{\psi_i}{\psi_j}  
 \end{equation}
 is a valid decomposition into the dyadic frame leading to
  \begin{align}
      \Lambda(  \kb{\Psi}{\Psi}  ) & \leq \sum_{i,j} | c_i c_j | \\   \nonumber
      & =  (\sum_i  | c_i| )( \sum_j |c_j | ) \\  \nonumber
      & = \|c\|_1^2 = \xi(\Psi) .
 \end{align}
 Next, we prove the converse inequality.  The dual convex problem to the minimization of $\xi$ is 
 \begin{equation}
		 \xi ( \Psi ) = \mathrm{max}_\omega \{  |\bk{\omega}{\Psi}|^2   :     \mbox{ s.t. } \forall \phi \in  \mathrm{S}_n ,  |\bk{\omega}{\phi}|^{2} \leq 1 \}.
 \end{equation}
 Note that the $\omega$ need not be properly normalized.   Let us label $\omega^\star$ as a vector achieving this  maximum so that  $\xi ( \Psi ) =  |\bk{\omega^\star}{\Psi}|^2$.   We further recall that $\Lambda$ also has a dual formulation 
  \begin{align}
  \label{DualFormulation}
	 \Lambda ( \rho) = \mathrm{max}_W \{  \mathrm{Tr}[ W \rho ]  :  \mbox{ s.t. } &    \forall \,\ket{\phi}, \ket\psi \in \mathrm{S}_n ,\notag\\
	 &|\bra{\phi}W \ket{\psi}| \leq 1 \} .
 \end{align}
 In particular for feasible $W$ we have  $\Lambda ( \rho) \geq  \mathrm{Tr}[ W \rho ] $.  We notice that the extent witness $\ket{\omega^\star}$ can be used to build an operator $W=\kb{\omega^\star}{\omega^\star}$ that is a valid witness for $\Lambda$.  Therefore,
  \begin{align}
\Lambda ( \kb{\Psi}{\Psi}) & \geq \mathrm{Tr}[\kb{\omega^\star}{\omega^\star} \Psi \rangle \langle \Psi \vert ] \\ \nonumber
& \geq |\bra{\omega^\star} \Psi \rangle |^2 = \xi ( \Psi).
\end{align}
Having proved both directions, we conclude an equality.  This proves Lem.~\ref{RegulaLemma1}.  Since the witness $W$ was a positive operator, an identical proof also shows that $\Lambda^+ ( \kb{\Psi}{\Psi})= \xi ( \kb{\Psi}{\Psi} )$.  For the $\Xi$ monotone, there is only one convex decomposition of $\kb{\Psi}{\Psi}$. Hence $\Xi ( \kb{\Psi}{\Psi})= \xi ( \kb{\Psi}{\Psi} )$.

\subsection{Sandwich theorem}

Here we present a proof of Thm.~\ref{Sandwich1}.  Recall that $\Lambda$ is the result of maximizing over all $W$-witnesses, whereas $\Lambda^+$ is limited to all $W^+$-witnesses, which immediately leads to $\Lambda^+( \rho )	\leq \Lambda( \rho )$.  To show $\Lambda( \rho ) \leq \Xi ( \rho )$, one simply takes the optimal decomposition w.r.t to $\Xi$, as follows
\begin{align}
    \rho & = \sum_j p_j \kb{\Psi_j }{ \Psi_j } , \\ \nonumber
	\Xi[\rho] & =  \sum_j p_j \xi( \Psi_j) .
\end{align}	
Next, we insert this decomposition into $\Lambda$ and use convexity
\begin{equation}
	\Lambda[ \rho ] \leq \sum_j p_j \Lambda( \kb{\Psi_j}{\Psi_j})  .
\end{equation}	
Using Lem.~\ref{RegulaLemma1} we have
\begin{equation}
	\Lambda[ \rho ] \leq \sum_j p_j \xi( \Psi_j)  = \Xi[\rho] ,
\end{equation}	
which completes the proof of Thm.~\ref{Sandwich1}.

\section{Geometry of $\Phi^{\pm}_{\rho}$}
\label{AppGeometry}

Here we show that the states $\Phi^{\pm}_{\rho}$ introduced in Eq.~\eqref{PhiDefined} are contained in the set $P_Y$ introduced in Def.~\ref{OctantDefs}.  Note that Eq.~\eqref{PhiDefined} is defined in terms of $\rho \in P_{Y}$. 

The result $\Phi^{\pm}_{\rho} \in P_Y$ was used in the proof of Lem.~\ref{1qubitLemma}.  In that proof, we appealed to geometry presented in Fig.~\ref{SetsSlice} and here we instead provide an algebraic argument.

Using the definition of the set $P_Y$, to prove that $\Phi^{\pm}_{\rho} \in P_Y$,
it suffices to show that 
\begin{align}
   \langle \Phi_{\rho}^\pm \vert Y \vert \Phi_{  \rho}^\pm \rangle & \geq  0, \\ \nonumber
    \langle \Phi_{\rho}^\pm \vert (X-Y)\vert \Phi_{\rho}^\pm \rangle &  \geq     0 , \\ \nonumber
    \langle \Phi_{\rho}^\pm \vert (Z-Y) \vert \Phi_{\rho}^\pm \rangle &  \geq    0 . 
\end{align}
Now note that $\langle \Phi_{\rho}^\pm \vert Y \vert \Phi_{\rho}^\pm \rangle = \mathrm{Tr}[Y \rho]$. 
Minimizing $\mathrm{Tr}[Y \rho]$ over all feasible Bloch vectors $(r_A,r_B,r_F)$ in the decomposition of $\rho$ given by \eqref{WonkyDecomp}, we find that $\mathrm{Tr}[Y \rho] \ge 0$, which proves the first inequality.

Next, we tackle the second inequality (with the third inequality following in a similar fashion).  One computes that 
\begin{align}
   2 \langle \Phi_{\rho}^\pm \vert (X-Y) \vert \Phi_{\rho}^\pm \rangle & =   \langle \Phi_{\rho}^\pm \vert (\sqrt{6}\sigma_A + \sqrt{2}\sigma_B) \vert \Phi_{\rho}^\pm \rangle  \\ \nonumber
    &= \sqrt{6} r_A \pm \sqrt{2}\sqrt{1- r_A^2-f^2},
\end{align}
which is positive whenever
\begin{align}
  \sqrt{6} r_A \geq \sqrt{2(  1- r_A^2 -f^2 )}
\end{align}
or more concisely
\begin{align}
\label{Required}
   r_A \geq  \sqrt{(1 -f^2)}/2 .
\end{align}
For mixtures of $\Psi_{f}^X$ and $\Psi_{f}^Z$, we find this holds with equality.  For mixtures in the convex hull of $\{\Psi_{f}^X, \Psi_{f}^Y, \Psi_{f}^Z \}$ we find $r_A \leq  \sqrt{(1 -f^2)}/2$. However, we are currently considering $\rho$ outside this set, just outside the facet spanned by $\Psi_{f}^X$ and $\Psi_{f}^Z$. Therefore, Eq.~\eqref{Required} indeed holds.

\prx{%
\section{Dyadic frame simulator technical details}
\label{app:dyadicframe}
\jrs{Here we prove Theorem \ref{thm:simulationJS} which assures the validity and runtime of our dyadic frame simulator. Recall that the goal of the dyadic frame simulator is to estimate the Born rule probability $\mu=\Tr[\Pi \chan(\rho)]$, where $\rho$ is an $n$-qubit mixed magic state, and $\chan = O^{(T)} \circ \ldots \circ O^{(1)}$ is a sequence of completely stabilizer-preserving channels $O^{(t)} \in \mathcal{O}_n$. Before proving the theorem, we first restate and discuss the restrictions we impose on $O^{(t)}$. Any completely stabilizer-preserving channel $O \in \mathcal{O}_n$ has at least one Kraus decomposition of the form}
\begin{equation}
    O = \sum_r^{N_U} p_r \mathcal{U}_r + \sum_s^{N_K} q_s \mathcal{K}_s,
    \label{eq:simdecomp}
\end{equation}
\jrs{where all $\mathcal{U}_r = U_r(\cdot)U_R^\dagger$ are unitary Clifford operations, and $\mathcal{K}_s = K_s(\cdot)K_s^\dagger$ correspond to completely stabilizer-preserving non-unitary Kraus operators \cite{Seddon19}. Let $P_U = \sum_r p_r$ be the total weight of the unitary part of the decomposition. We say that a channel decomposition is \emph{simulable} if the number of Kraus operators is bounded as $N_U, N_K \leq \poly(n)$. In Theorem \ref{thm:simulationJS} we assume that the channels $O^{(t)}$ provided as input to the algorithm all have a known simulable decomposition. We use the restriction on $N_U$ for simplicity in the proof, but provided one can efficiently sample from the distribution $\{p_r/P_U\}$ and compute any corresponding $U_r$, then this restriction can be removed. Note however that the restriction on $N_K$ cannot be similarly relaxed.} For concreteness, \jrs{we assume that non-unitary Kraus operators are} given as a length \jrs{$\poly(n)$} list with each entry being a pair. The pair encodes a stabilizer-preserving Kraus operator and its associated weight factor. The Clifford part of the decomposition takes the same format. We use $\mathcal{L}^{(t)}_U$ and $\mathcal{L}_K^{(t)}$ to denote the respective lists for the unitary and non-unitary part of the decomposition of $O^{(t)}$. Each stabilizer-preserving Kraus operator $K$ is described by giving an efficient description of the stabilizer state corresponding to the Choi state $\Phi_{K}$ as 
\begin{equation}
\Phi_{K} = (\mathcal{K} \otimes \id)\ket{\Phi}\bra{\Phi} \label{eq:choi state} ,
\end{equation}
where $\ket{\Phi} \propto \sum_j \ket{j}\ket{j}$.  Note that since $K$ acts by conjugation, $\Phi_K$ is a pure stabilizer state, so can be specified by $\mathcal{O}(n^2)$ classical bits.

\jrs{The class of simulable channel decompositions encompasses a wide range of practically important stabilizer operations. First, any convex combination of $n$-qubit Clifford gates $O = \sum_j p_j U_j$ is included, provided $\{p_j\}$ can be efficiently sampled from. Another subset of simulable channels are those  of the form $O=O'\otimes \id_{n-b}$, where $O'\in\mathcal{O}_b$ and $b$ is a small constant. Any $O'\in \mathcal{O}_b$ has a $2b$-qubit Choi state $\Phi_{O'}$ that lies inside the stabilizer polytope (i.e. it can be written as a convex combination of pure stabilizer states, each corresponding to a Kraus operator) \cite{Seddon19}. Although the number of stabilizer states grows super-exponentially with $b$,} the real vector space inhabited by $2b$-qubit density matrices is $(4^b -1)$-dimensional.  We can therefore completely partition the stabilizer polytope into simplices with $4^b$ vertices, where any mixed stabilizer state inhabits at least one simplex. Hence, by Carath\'eodory's theorem, $\Phi_O$ can be written as a convex combination of at most $4^b$ pure stabilizer states. Thus for families of circuits where $b$ has a fixed upper bound, \jrs{$N_U + N_K$ does not grow with $n$.}  This restriction is not too onerous, since practical quantum algorithms are typically synthesized in terms of one-, two- and three-qubit gates, and noise channels are often assumed to act locally. Moreover, we often already know the stabilizer decompositions of interesting channels. For instance, we can express $T$-gate injection gadgets and the single-qubit depolarizing channel with only two and four Kraus operators respectively.

\jrs{We now }present the algorithm and prove its validity. Our algorithm has two subroutines: (i) Algorithm \ref{algo:KrausUpdate}, which is an extended Gottesman-Knill-type subroutine that probabilistically updates an input stabilizer dyad given a set of Kraus operators; and (ii) Algorithm \ref{algo:dyadic}, which is an outer quasiprobability sampling routine that samples an initial dyad from the initial non-stabilizer state and propagates the dyad through the circuit, randomly selecting a single Kraus operator \jrs{from each decomposition $O^{(t)}$.}

\jrs{Note that in Algorithm \ref{algo:KrausUpdate}, we use the trace norm (i.e. the Schatten 1-norm, $\onenorm{A}=\Tr[\sqrt{A^\dagger A}]$), rather than the usual trace (as in the Born rule) to calculate the transition probabilities for propagating with a particular Kraus operator. While $\Tr(\Pi\rho) = \onenorm{\Pi\rho\Pi^\dagger}$ for physical states $\rho$, this does not hold for general dyads $\kb{L}{R}$. We illustrate that the Schatten 1-norm is the appropriate choice with a toy example.} 
\jrs{Consider the scenario where the penultimate dyad is $\sigma^{(T-1)} = \kb{\!+\!0}{-0}$, the final stabilizer channel $O^{(T)}$ is defined by Kraus operators $K_1 = \id \otimes \kbself{0}$ and $K_2 = U \otimes \kb{1}{1}$ for some Clifford $U$, and the final measurement operator to be evaluated is $\Pi = \kb{1}{1} \otimes \id$. Now, the channel $O^{(T)}$ leaves $\sigma^{(T-1)}$ unchanged, $O^{(T)}(\kb{\!+\!0}{-0})=\kb{\!+\!0}{-0}$. It is therefore clear that the correct contribution to the expectation value estimate (line \ref{step:EmEstimate} in Alg. \ref{algo:dyadic}) should be:}
\begin{align}
\mu_m &= \onenorm{\alpha} \re\{\Tr[\Pi \sigma^{(T)}] \}\\
& = \onenorm{\alpha} \re\{\Tr[(\kbself{1}\otimes \id )\kb{\!+\!0}{-0}]\} \\
            & = \onenorm{\alpha} \re\{\braket{-}{1}\braket{1}{+}\} = -\onenorm{\alpha}/2,
\end{align} 
\jrs{where we used cyclicity of the trace, and neglect the phase $e^{i\theta_r}$ for brevity. We need to ensure that the transition probabilities we compute (in line \ref{step:transitionProb} of Alg. \ref{algo:KrausUpdate}) produce statistics that converge to this contribution. Suppose we were to naively use the trace to compute transition probabilities, $P_{\mathrm{Tr},j} = \Tr[K_j \sigma^{(T-1)} K_j^\dagger]$. Then we would obtain:}
\begin{align}
P_{\mathrm{Tr},1} & = \Tr[(\id \otimes \kbself{0}) \kb{\!+\!0}{-0}(\id \otimes \kbself{0})]  \\
                  & = \bk{-}{+} = 0,\\
P_{\mathrm{Tr},2} & = \Tr[(U \otimes \kbself{1}) \kb{\!+\!0}{-0}(U^\dagger \otimes \kbself{1})] \\
                  & = \bra{-}U^\dagger U \ket{+} |\bk{1}{0}|^2 = 0.
\end{align}
\jrs{Here we have a problem, because both paths evaluate to zero, preventing $\mu_m$ from making any non-zero contribution to our estimate. By contrast, in our algorithm we use the Schatten 1-norm to compute transition probabilities:}
\begin{align}
P_1 & = \onenorm{(\id \otimes \kbself{0}) \kb{\!+\!0}{-0}(\id \otimes \kbself{0})} \\
    & = \onenorm{\kb{\!+\!0}{-0}} = 1 \\
P_2 & = \onenorm{(U \otimes \kbself{1}) \kb{\!+\!0}{-0}(U^\dagger \otimes \kbself{1})}\\
    & = |\bk{1}{0}|^2 \onenorm{(U\ket{+})\kb{1}{1}(\bra{-})U^\dagger } = 0.
\end{align}
\jrs{This method correctly tells us that we should select Kraus operator $K_1$ with certainty, resulting in the correct contribution $\mu_m =  -\onenorm{\alpha}/2$.} 

\jrs{Below we prove that this strategy leads to an unbiased estimator for $\mu$, where each individual sample is bounded as $|\mu_m|\leq \onenorm{\alpha}$. As per standard quasiprobability simulators (see Section \ref{sec:quasi}), to estimate an observable} within additive error of $\epsilon$ with success probability $p_{\mathrm{suc}} \geq 1-p_{\mathrm{fail}}$, we require at least $M$ samples from our algorithm \jrs{\cite{pashayan15,Howard17robustness}} , where
	\begin{equation}
M\geq 2 \frac{\|\alpha\|_1^2}{\epsilon^2}  \log\left(\frac{2} {p_{\mathrm{fail}}}\right).\label{eq: num_samples}
	\end{equation}

\begin{figure}[t]
\begin{algorithm}[H]
\caption{Stabilizer Kraus update subroutine \label{algo:KrausUpdate}}
\begin{algorithmic}[1]
\algrenewcommand\algorithmicrequire{\textbf{Input:}}
\algrenewcommand\algorithmicensure{\textbf{Output:}}
\Require Initial stabilizer dyad $\sigma = \kb{L}{R}$; length-$N$ list of pairs $ \mathcal{L} = \{(q_1, K_1),\ldots,(q_N, K_{N})\}$, where $q_j>0$ are weights and $K_j$ are stabilizer Kraus operators; normalization factor $P_X$.
\Ensure Updated dyad $\sigma' = \ket{L'}\bra{R'}$.
\Function{stabilizerUpdate}{$\sigma$,$\mathcal{L}$,$P_X$}
\For{$r \gets 1$ to $N$} \label{step:Krausdist1}
\State $q'_r \gets q_r / P_X$ \label{step:normalizeq}
\State $P_r \gets \onenorm{q'_r K_r \sigma K_r^\dagger}$ \label{step:transitionProb}
\EndFor
\State $P_0 \gets 1 - \sum_{r=1}^{N_K} P_r$ \label{step:Krausdist2}
\State Sample $s$ from $\{0,\ldots,N\}$ with probability $P_{s}$
\If{s = 0}
\State $ \sigma' \gets 0$
\Else
\State $\sigma' \gets \ket{L'}\bra{R'} = q'_{s} K_{s} \ket{L} \bra{R}K_{s}^\dagger / P_{s} $
\EndIf
\State
\Return $\sigma'$
\EndFunction
\end{algorithmic}
\end{algorithm}
\vspace*{-\baselineskip}
\end{figure}

\begin{figure}[t]
\begin{algorithm}[H]
\caption{Dyadic frame simulator\label{algo:dyadic}}
\begin{algorithmic}[1]
\algrenewcommand\algorithmicrequire{\textbf{Input:}}
\algrenewcommand\algorithmicensure{\textbf{Output:}}
\Require Initial state $\rho$ with known dyadic stabilizer decomposition $\rho = \sum_{j} \alpha_j\ket{L_j}\bra{R_j}$; number of samples $M$; list of channels $\{O^{(1)},\ldots,O^{(T)} \}$ of length $T$ where $O^{(t)} \in \mathcal{O}_n$ for all $t$\jrs{, and each is \emph{simulable} as described in this appendix, and described by a list of weighted Clifford gates $\mathcal{L}^{(t)}_U$ and non-unitary Kraus operators $\mathcal{L}^{(t)}_K$}; stabilizer projector $\Pi$.
\Ensure Estimate $\hat{\mu}$ for $\Tr[\Pi \chan(\rho)]$, where $\chan = O^{(T)} \circ \ldots \circ O^{(1)}$.
\Statex
\State Let $P_j = |\alpha_j|/\onenorm{\alpha}$ define a probability distribution.
\For{$m \gets 1$ to $M$}
\State Sample $r$ with probability $P_{r}$. \label{step:sampledyad}
\State $\sigma^{(0)} \gets \ket{L^{(0)}}\bra{R^{(0)}} = \ket{L_{r}}\bra{R_{r}}$, \quad $e^{i\theta_r} \gets \alpha_r/|\alpha_r|$.
\For{$t \gets 1$ to $T$}
\State $P_U \gets \sum_j p_j$, where $p_j$ are the weights in list $\mathcal{L}^{(t)}_U$.
\State $P_K \gets 1 - P_U$ 
\State With probability $P_U$, $X\gets{``U\text{''}}$, else $X\gets{``K\text{''}}$.
\State $\sigma^{(t)} \gets$ \Call{stabilizerUpdate}{$\sigma^{(t-1)}$,$\mathcal{L}^{(t)}_X$,$P_X$}
\If{$\sigma^{(t)} = 0$ }
\State $\sigma^{(T)} \gets 0$
\BREAK
\EndIf
\EndFor
\State  $\mu_m \gets \re\{\onenorm{\alpha}e^{i\theta_{r}}\Tr[\Pi\sigma^{(T)}]\}$. \label{step:EmEstimate}
\EndFor
\State \Return $\hat{\mu} \gets \sum_m \mu_m/M$ 
\end{algorithmic}
\end{algorithm}
\vspace*{-\baselineskip}
\end{figure}

To prove the validity of our algorithm we must: (i) explain how the stabilizer update $q_r K_r \sigma K_r^\dagger$ can be carried out efficiently, (ii) show that the values $P_r$ in steps \ref{step:Krausdist1}-\ref{step:Krausdist2} of Algorithm \ref{algo:KrausUpdate} form a proper probability distribution,  and (iii) show that $\hat{\mu}$ returned by Algorithm \ref{algo:dyadic} is an unbiased estimator for $\Tr[\Pi \chan(\rho)]$. The total runtime given in Theorem \ref{thm:simulationJS} \jrs{is the product of the number of samples $M$ and the runtime to compute each sample.}

\textbf{(i) Efficient stabilizer update with Kraus operators.} In Algorithm \ref{algo:KrausUpdate} we must \jrs{compute the trace norm  $\onenorm{q_r K_r \kb{L}{R} K_r^\dagger}$ for all $n$ pairs $(q_r,K_r)\in \mathcal{L}$, and then perform the update $\kb{L}{R} \to \kb{L'}{R'}$. Note that we track any accumulated phase through the update, but here we absorb this factor in $\kb{L'}{R'}$ for brevity. There are two cases. Either $\mathcal{L}$ is a list of unitary operators, or it is a list of non-unitary Kraus operators. In the unitary case, computation of the norm is trivial. Since the initial dyad $\sigma$ is normalized with respect to the trace norm, and the norm is unitarily invariant, we have $ \onenorm{p_r U_r \sigma U_r^\dagger} = p_r$. By assumption, ${p_r/\sum_r p_r}$ can be efficiently sampled from. In the non-unitary case, we must compute:}
\begin{equation}
    \onenorm{q'_r K_r \kb{L}{R}K_r^\dagger} = q'_r \| K_r \ket{L}\| \cdot \| K_r \ket{R}\|.
\end{equation}
\jrs{Note that unlike the trace, the trace norm does not depend on the overlap between $K_r \ket{L}$ and $K_r \ket{R}$, and their vector norms are calculated separately. To see how this is done, we note that a Kraus operator whose Choi state is a normalized pure stabilizer state can always be written in the form $K_r = 2^{h/2} U_r \Pi_r$, where $U_r$ is a Clifford gate and $\Pi_r$ is a stabilizer projector of rank $2^{n-h}$ \cite{GrossHeinrichComm}, for some $h\leq n$. Since $U_r$ leaves the norm invariant, for the purpose of computing transition probabilities, only the projector and normalization constant matter:}
\begin{equation}
     \onenorm{q'_r K_r \kb{L}{R}K_r^\dagger} = q'_r 2^h \norm{\Pi_r\ket{L}}\cdot \norm{\Pi_r \ket{R}}.
\end{equation}
\jrs{The projection of each pure state onto a stabilizer subspace can be computed using standard stabilizer simulation techniques \cite{aaronson04improved,bravyi2018simulation}, in time $\mathcal{O}(hn^2)$. In the non-unitary case, we must compute the norm for $2N_K$ projected stabilizer states, so the total runtime for computing all transition probabilities for a single step $t$ is $\mathcal{O}(h N_K n^2)$.  }

\jrs{Once all $N$ transition probabilities are computed, a single Kraus operator $K_s \propto U_s \Pi_s$ is randomly selected for the update, so we must compute $\kb{L'}{R'} \propto U_s \Pi_s \kb{L}{R} \Pi_s U_s^\dagger$. As discussed in Section \ref{sec:simulate1}, it is vital that we track any acquired phase throughout each update, as this will affect our final estimate when we average over all $M$ samples $\mu_m$. We can do this using } the phase-sensitive Clifford simulator described in Ref. \cite{bravyi2018simulation}\jrs{. There} it was shown that the update corresponding to the projection $(\id + Q)/2$, where $Q$ is a Pauli operator, can be carried out in $\mathcal{O}(n^2)$ steps. \jrs{A rank $2^{n-h}$ stabilizer projector can be decomposed as a product of $h$ Pauli projections, so the projective part of the update takes time $\mathcal{O}(hn^2)$. As for the Clifford update, any $n$-qubit Clifford operation can be written in canonical form comprised of $\mathcal{O}(n^2/\log(n))$ gates from the standard gate set $\{CNOT,H,S\}$ \cite{aaronson04improved}. Bravyi \textit{et al.} \cite{bravyi2018simulation} showed that their phase-sensitive Clifford simulator can perform $CNOT$ and $S$ updates in time $\mathcal{O}
(n)$, and $H$ in time $\mathcal{O}(n^2)$, so the Clifford update for $U_r$ can be completed in time $\mathcal{O}(n^4/\log(n))$. Since $h\leq n$ and for simulable decompositions $N_K\leq \poly(n)$, the time taken is $\poly(n)$.}

\jrs{Combining all steps, the total time for a single call to \textsc{StabilizerUpdate} is $\mathcal{O}(h (N_K  + 1) n^2)+ \mathcal{O}(n^4/\log(n))$. 
Since $h\leq n$ and for simulable decompositions $N_K\leq \poly(n)$, the call is completed in $\poly(n)$ time in the general case. We note that in the special case where we restrict each $O^{(t)}$ to act on at most $b$ qubits, for some fixed $b$, the runtime for a single call can be improved considerably. In that case, $N_K\leq 4^b $ (i.e. a constant with respect to $n$) and the runtime will be $\mathcal{O}( b (4^b + 1) n^2) + \mathcal{O}(b^2 n^2/\log(b))$. }

%We call the subroutine $M$ times to generate each sample, 
%so the total runtime is $M {\rm poly}(n,T,N_K)$.

\textbf{(ii) Valid probability distribution.} From the definition of $P_0$ in step \ref{step:Krausdist2} of Algorithm \ref{algo:KrausUpdate}, it is clear that $\sum_{r=0}^{N} P_r = 1$ and $P_1,\dots, P_{N} \ge 0$. Hence, to show that $\{P_r\}$ is a probability distribution, it suffices to show that $P_0 \geq 0$. \jrs{This is trivially true for the unitary path, as $\{p_r/(\sum_{s=1}^{N_U} p_s)\}$ is clearly a properly normalized distribution, with $P_0 = 0$. It remains to show $P_0 \geq 0$ for the non-unitary case. }

It is given that \jrs{the channel $O^{(t)}$  is a CPTP map. Let $O_K(\cdot) = \sum_{r=1}^{N_K} q_r K_r (\cdot) K_r^\dagger$ denote the non-unitary part of the decomposition \eqref{eq:simdecomp}, and let $O_U(\cdot) = \sum_{r=1}^{N_U} p_r U_r (\cdot) U_r^\dagger$, so that $O^{(t)} = O_U + O_K$. Each Clifford is a unitary operator, so $\sum_{r=1}^{N_U} p_r U_r^\dagger U_r = P_U\id$, recalling that $P_U =\sum_{r=1}^{N_U} p_r$ and $P_K = 1 - P_U$. Since $O^{(t)}$ is CPTP, its Kraus representation must be complete:}
\begin{equation}
    \id = P_U \id + \sum_{r=1}^{N_K} (\sqrt{q_r}K_r)^\dagger (\sqrt{q_r}K_r).
\end{equation}
\jrs{It follows that $\sum_{r=1}^{N_K} (\sqrt{q_r}K_r)^\dagger (\sqrt{q_r}K_r) = P_K \id$, meaning that $O_K$ is a CPTP map up to normalization by $1/P_K$. This normalization is achieved by setting $q'_r = q_r/P_K$ in step \ref{step:normalizeq}. Then} for any pure state $\ket{\psi}$, we have
\begin{align}
   1\geq \Tr[O(\kb{\psi}{\psi})/P_K] 
   &= \sum_{r=1}^{N_K} \Tr[\sqrt{q'_r}K_r \kb{\psi}{\psi}K_r^\dagger \sqrt{q'_r}] \notag\\
    & = \sum_{r=1}^{N_K} \norm{\sqrt{q'_r} K_r \ket{\psi}}^2\text{.} \label{eq:normalizedQ}
\end{align}
Let $Q^{(\psi)}$ be the $N_K$-element real vector where the $r$-th entry is $Q^{(\psi)}_r =  \norm{\sqrt{q'_r} K_r \ket{\psi}}$. From Eq.~\eqref{eq:normalizedQ}, we have that $\norm{Q^{(\psi)}} \leq 1$. Then for any normalized dyad $\kb{L}{R}$ we can express the sum of $P_r$ for $r\geq 1$ as a dot product between $Q^{(L)}$ and $Q^{(R)}$: 
\begin{align}
    \sum_{r=1}^{N_K} P_{r} &= \sum_{r=1} \onenorm{q'_r K_r \kb{L}{R}K_r^\dagger}\\
               &= \sum_{r=1} \| \sqrt{q_r} K_r \ket{L}\| \cdot \| \sqrt{q_r} K_r \ket{R}\| \\
               &= \sum_{r=1} Q^{(L)}_r  Q^{(R)}_r \\
               & = Q^{(L)} \cdot Q^{(R)} \leq \norm{Q^{(L)}} \cdot \norm{Q^{(R)}} \leq 1,
\end{align}
where in the last line we used the Cauchy-Schwarz inequality to show that $\sum_{r\geq 1} P_{r} \leq 1$, as promised. We note that the strategy of using an `abort' outcome $P_0$ was deployed in the appendix of Ref. \cite{Rall2019} to simulate post-selective channels. In our case the fact that $P_r$ for $r\geq 1$ can sum to less than 1 instead arises from the non-Hermiticity of the initial dyad $\sigma$.

\textbf{(iii) Unbiased estimator.} Finally we show that the expected value of $\hat{\mu}$ in Algorithm \ref{algo:dyadic} is $\Tr[\Pi \chan(\rho)]$. First, let us recombine the unitary and non-unitary part of $O^{(t)}$ into a single Kraus representation $O^{(t)} = \sum_{a=1}^{N_{\mathrm{Tot}}} q^{(t)}_r K^{(t)}_r (\cdot) K_r^{(t)\dagger} $, where $N_{\mathrm{Tot}} = N_U + N_K$, and consider the probability of sampling the $r$-th pair $(q_r, K_r)$ at step $t$. By inspection of Algorithms \ref{algo:KrausUpdate} and \ref{algo:dyadic} we see that the probability of taking the unitary path and then selecting the $r$-th pair from $\mathcal{L}_U$ is:
\begin{equation}
    \Pr(``U",r) = P_U \cdot P_r = P_U \onenorm{p'_r U_r \sigma U^\dagger_r} = \onenorm{p_r U_r \sigma U^\dagger_r}.
\end{equation}
Similarly the probability of choosing the non-unitary path followed by the $r$-th element of $\mathcal{L}_K$ is $\Pr(``K",r) = P_K P_r = \onenorm{q_r K_r \sigma K^\dagger_r}$. Thus at step $t$, the probability of sampling any Kraus operator from the decomposition, whether unitary or non-unitary, is given by $\Pr(q_r,K_r) = \onenorm{q_rK_r \sigma K_r^\dagger}$, and we can drop the distinction between the two.

Now, let the $(T+1)$-element vector $\vec{r} = (r_0,r_1,\ldots,r_T)$ label a particular trajectory through the circuit, in the following sense. The first entry $r_0$ labels the initial dyad $\sigma_{\vec{r}}^{(0)} = \kb{L_{r_0}}{R_{r_0}}$ sampled in step \ref{step:sampledyad}. For $t\geq 1$, the entry $r_t$ gives the index of the Kraus operator chosen at the $t$-th circuit element and we write $\mathcal{K}_{\vec{r}}^{(t)}(\cdot)=K_{r_t}^{(t)}(\cdot)K_{r_t}^{(t)\dagger}$, and use $q_{\vec{r}}^{(t)}$ to denote the corresponding prefactor. Let $\sigma_{\vec{r}}^{(t)}$ denote the current dyad updated up to the $t$-th Kraus operator along the trajectory $\vec{r}$, so that we have the recursive relation $\sigma_{\vec{r}}^{(t)}=q_{\vec{r}}^{(t)} \mathcal{K}_{\vec{r}}^{(t)}(\sigma_{\vec{r}}^{(t-1)})/P_{\vec{r}}^{(t)}$, where $P_{\vec{r}}^{(t)}$ is the probability of obtaining the outcome corresponding to the map $\mathcal{K}_{\vec{r}}^{(t)}$. The probability $P_{\vec{r}}$ of choosing the trajectory $\vec{r}$ is given by 
$
    P_{\vec{r}} = \prod_{t=0}^T P_{\vec{r}}^{(t)},
$
where $P_{\vec{r}}^{(0)} = |\alpha_{r_0}|/\onenorm{\alpha}$ is the probability of sampling the initial dyad $\sigma_{\vec{r}}^{(0)}$. For $t\geq 1$,  $P_{\vec{r}}^{(t)} = \onenorm{q_{\vec{r}}^{(t)} \mathcal{K}_{\vec{r}}^{(t)}(\sigma_{\vec{r}}^{(t-1)})} $ is calculated in the $t$-th call to Algorithm \ref{algo:KrausUpdate}.
Then the final dyad $\sigma_{\vec{r}}^{(T)}$ that we obtain from sampling the trajectory $\vec{r}$ is 

\begin{equation}
    \sigma_{\vec{r}}^{(T)} = \frac{q_\vec{r}\mathcal{K}_\vec{r}(\sigma_{\vec{r}}^{(0)})}{P_{\vec{r}}/P_{\vec{r}}^{(0)}},\label{eq:finaldyadKrausExpansion}
\end{equation}
where $ \mathcal{K}_\vec{r}(\cdot) = \mathcal{K}_{\vec{r}}^{(T)} \circ \ldots\circ \mathcal{K}_{\vec{r}}^{(1)}(\cdot)$ and $q_{\vec{r}}=\prod_{t=1}^T q_{\vec{r}}^{(t)}$. This dyad is properly normalized according to the trace norm, but is only defined for those trajectories with non-zero probabilities \jrs{$P_{\vec{r}}^{(t)} > 0 $ for all $t$}. We write $\mathcal{R}$ to denote the set of all such non-zero probability trajectories.

Now, there are two mutually exclusive possibilities for a given iteration of Algorithm \ref{algo:dyadic}: either we pick $r_t > 0 $ at each circuit element, choose some $\vec{r} \in \mathcal{R}$, and thus obtain a normalized dyad $ \sigma_{\vec{r}}^{(T)}$, or at some step we choose $r_t = 0$, and the iteration terminates with $\sigma_{\vec{r}}^{(T)} = 0$. Since these are the only possible outcomes, the total probability of \jrs{terminating} must be $P_\mathrm{term} = 1 - \sum_{\vec{r}\in \mathcal{R}} P_{\vec{r}}$. We can now write down an explicit expression for the expectation value of the random variable $\mu_m$ in step \ref{step:EmEstimate}:
\begin{align}
    \langle \mu_m \rangle & = P_\mathrm{term}\cdot 0 + \sum_{\vec{r}\in \mathcal{R}} P_{\vec{r}} \re\{\onenorm{\alpha}e^{i\theta_{r_0}}\Tr[\Pi\sigma_{\vec{r}}^{(T)}]\}  \\
                        & = \sum_{\vec{r}\in \mathcal{R}} P_{\vec{r}}^{(0)} \re\{\onenorm{\alpha}e^{i\theta_{\vec{r}_0}}\Tr[\Pi q_{\vec{r}} \mathcal{K}_{\vec{r}}(\sigma_{\vec{r}}^{(0)})]\}  \label{eq:nonzerotraj},
\end{align}
where in the second line we have cancelled the factors $P_{\vec{r}}^{(t)}$ for $t\geq 1$ with those in the denominator of Eq.~\eqref{eq:finaldyadKrausExpansion}. The real vectors $\vec{r} \notin \mathcal{R}$ are never chosen when running the algorithm, since they correspond to paths where $P_{\vec{r}}^{(t)} = 0$ for some $t$, and hence $\mathcal{K}_{\vec{r}}^{(t)} (\sigma_{\vec{r}}^{(t-1)})= 0$. Since $\mathcal{K}_{\vec{r}}(\sigma_{\vec{r}}^{(0)}) = 0$ for all $\vec{r} \notin \mathcal{R} $, we can add these zero-probability trajectories to the summation ~\eqref{eq:nonzerotraj} without affecting the total. Thus
\begin{align}
    \langle \mu_m \rangle & = \sum_{\vec{r}} P_{\vec{r}}^{(0)} \re\{\onenorm{\alpha}e^{i\theta_{r_0}}\Tr[\Pi q_{\vec{r}} \mathcal{K}_{\vec{r}}(\sigma_{\vec{r}}^{(0)})]\} \\
    & = \sum_{r_0} P_{\vec{r}}^{(0)} \re\{\onenorm{\alpha}e^{i\theta_{r_0}}\Tr[\Pi\sum_{r_1,\ldots,r_T} q_{\vec{r}} \mathcal{K}_{\vec{r}}(\kb{L_{r_0}}{R_{r_0}})]\},\nonumber
\end{align}
where in the second line we have written $P_{\vec{r}}^{(0)}$ outside of the inner sum since this probability is independent of $r_t$ for $t\geq 1$. The inner expression sums over all Kraus trajectories, and by linearity we have
\begin{align}
    \sum_{r_1,\ldots,r_T} q_{\vec{r}} \mathcal{K}_{\vec{r}} &= \sum_{r_T} q_{\vec{r}}^{(T)}\mathcal{K}_{\vec{r}}^{(T)} \circ \ldots\circ \sum_{r_1} q_{\vec{r}}^{(1)}\mathcal{K}_{\vec{r}}^{(1)} \\
    & = O^{(T)} \circ \ldots \circ O^{(1)} = \chan.
\end{align}
Hence
\begin{align}
    \langle \mu_m \rangle 
    & = \sum_{r_0} P_{\vec{r}}^{(0)} \re\{\onenorm{\alpha}e^{i\theta_{r_0}}\Tr[\Pi\chan(\kb{L_{r_0}}{R_{r_0}})]\}\\
    & =  \re\{\Tr[\Pi\chan(\sum_{r_0}\alpha_{r_0} \kb{L_{r_0}}{R_{r_0}})]\}\\
    & = \Tr[\Pi\chan(\rho)],
\end{align}
where in the second line we used the definition $P_{\vec{r}}^{(0)}e^{i \theta_{r_0}} = \alpha_{r_0}/\onenorm{\alpha}$. Hence \jrs{we have proved} that $\langle \hat{\mu} \rangle = \Tr[\Pi \chan(\rho)]$, so $\hat{\mu}$ is an unbiased estimator, \jrs{with each individual sample satisfying $|\mu_m|\leq \onenorm{\alpha}$. We argued above this implies we need $2\|\alpha\|_1^2\epsilon^{-2}  \log(2p_{\mathrm{fail}}^{-1})$ samples ( Eq.~\eqref{eq: num_samples}). To generate each sample, we need to make $T$ calls to \textsc{StabilizerUpdate}, and we showed in part (i) that each call is computed in $\poly(n)$ time. Therefore the total runtime is $ \|\alpha\|_1^2\epsilon^{-2}  \log(p_{\mathrm{fail}}^{-1})T
{\rm poly}(n)$ , as stated in Theorem \ref{thm:simulationJS}.}

\section{\jrs{Trace norm error for BBCCGH sparsification}}
\label{app:BBCCGHtracenorm}

\jrs{As discussed in Section \ref{sec:BBCCGHrecap}, the BBCCGH sparsification lemma \cite[Lem.~6]{bravyi2018simulation} entails that, given a pure state with exact stabilizer decomposition $\ket{\psi} = \sum_j c_j \ket{\phi_j}$, one can randomly generate a $k$-term sparsification $\ket{\Omega}$, such that:}
\begin{equation}
  \mathbb{E}(  \| \ket{\psi}-\ket{\Omega} \|^2 ) \leq \frac{\|c\|_1^2 }{k} ,
\end{equation}
\jrs{where $\norm{\cdot}$ is the standard vector norm. In order to compare with our new sparsification result, which deals with density operators, we need to translate this in terms of the trace norm. Here we prove the following simple corollary to the BBCCGH sparsification lemma.}

\jrs{\begin{corollary}
Given a normalized state $\ket{\psi} = \sum_j c_j \ket{\phi_j}$, for any $k>0$, one can sample from a distribution of sparsified vectors $\ket{\Omega} = (\onenorm{c}/k)\sum_{\alpha=1}^k \ket{\omega_\alpha}$, where $\ket{\omega_\alpha}$ are stabilizer states, such that:
\begin{equation}
    \mathbb{E}(\onenorm{\op{\psi}-\op{\Omega}}) \leq 2 \frac{\onenorm{c}}{\sqrt{k}} + \frac{\onenorm{c}^2}{k}.
\end{equation}
\end{corollary}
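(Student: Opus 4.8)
The plan is to convert the Euclidean (vector-norm) error bound of the BBCCGH sparsification lemma, Eq.~\eqref{BBCCGHsparse}, into a trace-norm bound on the difference of rank-one projectors, and then take expectations using Jensen's inequality for the linear contribution. First I would write the random sparsified vector as $\ket{\Omega} = \ket{\psi} + \ket{e}$, where $\ket{e} := \ket{\Omega} - \ket{\psi}$ is the (random) error vector, and expand the projector difference algebraically as $\op{\Omega} - \op{\psi} = \ket{\psi}\bra{e} + \ket{e}\bra{\psi} + \op{e}$. This expansion is the crucial move, because $\ket{\Omega}$ is in general un-normalized, so one cannot simply invoke the standard identity $\onenorm{\op{\psi}-\op{\phi}} = 2\sqrt{1-|\bk{\psi}{\phi}|^2}$ that holds only between normalized states.

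Next I would bound the trace norm of each of the three terms using the fact that the trace norm of a rank-one operator $\kb{a}{b}$ equals $\norm{\ket{a}}\,\norm{\ket{b}}$. Since $\ket{\psi}$ is normalized, the triangle inequality for the trace norm yields $\onenorm{\op{\Omega}-\op{\psi}} \leq \norm{\ket{e}} + \norm{\ket{e}} + \norm{\ket{e}}^2 = 2\norm{\ket{e}} + \norm{\ket{e}}^2$, where the first two terms come from the cross terms $\ket{\psi}\bra{e}$ and $\ket{e}\bra{\psi}$ and the last from $\op{e}$.

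Finally I would take the expectation over the randomness of \textsc{Sparsify}. For the quadratic term the BBCCGH lemma gives directly $\mathbb{E}(\norm{\ket{e}}^2) = \mathbb{E}(\norm{\ket{\psi}-\ket{\Omega}}^2) \leq \onenorm{c}^2/k$. For the linear term I would use concavity of the square root (Jensen's inequality) to write $\mathbb{E}(\norm{\ket{e}}) = \mathbb{E}\!\big(\sqrt{\norm{\ket{e}}^2}\big) \leq \sqrt{\mathbb{E}(\norm{\ket{e}}^2)} \leq \onenorm{c}/\sqrt{k}$. Substituting both bounds into $\mathbb{E}(\onenorm{\op{\Omega}-\op{\psi}}) \leq 2\,\mathbb{E}(\norm{\ket{e}}) + \mathbb{E}(\norm{\ket{e}}^2)$ gives the claimed inequality $\mathbb{E}(\onenorm{\op{\psi}-\op{\Omega}}) \leq 2\onenorm{c}/\sqrt{k} + \onenorm{c}^2/k$. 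There is no substantive obstacle here; the only point requiring a little care is the un-normalized nature of $\ket{\Omega}$, which the error-vector expansion handles cleanly, and the correct application of Jensen to pass from the $L^2$ control of the error to $L^1$ control.
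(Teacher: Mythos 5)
Your proposal is correct and follows essentially the same route as the paper's own proof: expand the projector difference in terms of the error vector (the paper calls it $\ket{\Delta}=\ket{\psi}-\ket{\Omega}$, the sign being immaterial), bound it by $2\norm{\ket{\Delta}}+\norm{\ket{\Delta}}^2$ via the triangle inequality and the rank-one trace-norm identity, then control the linear term with Jensen's inequality and the quadratic term with the BBCCGH lemma. No substantive differences.
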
}
\begin{proof}
\jrs{Let $\ket{\Delta} = \ket{\psi}-\ket{\Omega}$. Then for any particular $\ket{\Omega}$ we have:}
\begin{align}
    \op{\psi} - \op{\Omega} & = \op{\psi} - (\op{\psi} + \op{\Delta} \nonumber\\
                            & \quad \quad \quad  \quad \quad - \kb{\Delta}{\psi} - \kb{\psi}{\Delta})  \\
                            & = \kb{\Delta}{\psi} + \kb{\psi}{\Delta} - \op{\Delta}.
\end{align}
\jrs{Using the triangle inequality:}
\begin{align}
    \onenorm{\op{\psi} - \op{\Omega}} & \leq 2 \onenorm{\kb{\Delta}{\psi}} + \onenorm{\op{\Delta}} \\
                                     & = 2 \norm{\ket{\Delta}} \cdot \norm{\ket{\psi}} + \norm{\ket{\Delta}}^2 \\
                                     & = 2 \norm{\ket{\Delta}}  + \norm{\ket{\Delta}}^2,
\end{align}
\jrs{where the last line follows because $\ket{\psi}$ is normalized. Since the above is true for any $\ket{\Omega}$ taken from the distribution, it follows that:}
\begin{equation}
    \mathbb{E}(\onenorm{\op{\psi} - \op{\Omega}}) \leq 2 \mathbb{E}(\norm{\ket{\Delta}})  + \mathbb{E}(\norm{\ket{\Delta}}^2). \label{eq:tracenormDiff}
\end{equation}
\jrs{For the second term, the BBCCGH sparsification lemma \cite[Lem.~6]{bravyi2018simulation} tells us that we have $\mathbb{E}(\norm{\ket{\Delta}}^2)\leq \|c\|_1^2 /k$.}

\jrs{This leaves the first term. From Jensen's inequality, for any random variable $X$, we have that $\mathbb{E}(X) \leq \sqrt{\mathbb{E}(X^2)}$. So:}
\begin{align}
    \mathbb{E}(\norm{\Delta}) & \leq \sqrt{\mathbb{E}(\norm{\Delta}^2)} \\
                                & \leq \frac{\onenorm{c}}{\sqrt{k}},
\end{align}
\jrs{where the second line again follows from Ref.\cite[Lem.~6]{bravyi2018simulation}. Substituting into the inequality \eqref{eq:tracenormDiff}, we obtain the result. }
\end{proof}
\section{\jrs{Post-selection, stabilizer fidelity and the sparsification tail bound in BBCCGH}}
\label{app:whypostselect}
\jrs{In this section we discuss technical difficulties that arise when applying the sparsification results of Bravyi \textit{et al.} \cite{bravyi2018simulation} in the context of the bit-string sampling algorithm. Recall that the main BBCCGH sparsification lemma \cite[Lem.~6]{bravyi2018simulation} only tells us that randomly chosen sparsifications $k$-term $\ket{\Omega}$ will be close to the target state $\ket{\psi} = \sum_j c_j \ket{\phi_j}$ \textit{on average}, specifically $\mathbb{E}(\norm{\ket{\psi}-\ket{\Omega}}^2) \leq \onenorm{c}^2/k$. In itself, this does not preclude the possibility of occasionally obtaining $\ket{\Omega}$ that are very poor estimates for $\ket{\psi}$. One can check numerically that this is a rare occurrence, but it is preferable to put rigorous bounds on the probability of obtaining such outliers. Bravyi \textit{et al.} addressed this with the sparsification tail bound \cite[Lem.~7]{bravyi2018simulation}. This states that if we set $k \geq \onenorm{c}^2/\delta^2$, the probability of obtaining $\ket{\Omega}$ close to $\ket{\psi}$ is lower bounded as follows.}
\begin{align}
    \Pr \Big\{ \norm{\ket{\psi}-\ket{\Omega}}^2 &\leq \bk{\Omega}{\Omega} - 1 + \delta^2 \Big\}   \nonumber\\
    & \quad \quad \geq 1 - 2 \exp\left(- \frac{\delta^2}{8 F(\psi)}\right) \label{eq:tailbound}
\end{align}
\jrs{where $F(\psi)$ is the stabilizer fidelity, defined $F(\psi) = \max_\phi \abs{\bk{\phi}{\psi}}^2$ where $\phi$ are stabilizer states. However, there are two subtleties involved in applying this result in practice.}

\jrs{First, note that the usefulness of the bound depends on the norm of $\ket{\Omega}$ being close to (or smaller than) 1. But in general $\braketself{\Omega}$ can be larger. In principle, it is possible for it to be as large $\onenorm{c}^2$, though this is rather unlikely. In any case, this can be solved by using a post-selection step where we estimate $\bkself{\Omega}$ (e.g. using \textsc{FastNorm}) and then discard if we find $\bkself{\Omega} - 1 \gg \delta^2$. Note that normalizing $\ket{\Omega}$ does not solve this problem, as the BBCCGH sparsification results do not tell us about the closeness of $\ket{\psi}$ with $\ket{\Omega}/\norm{\ket{\Omega}}$, only the unnormalized vector $\ket{\Omega}$.}

\jrs{Assuming we have successfully obtained $\ket{\Omega}$ with sufficiently small norm, a second difficulty arises from the right-hand side of \eqref{eq:tailbound}. The probability of success is larger when the stabilizer fidelity is small. In Bravyi \textit{et al.} \cite{bravyi2018simulation} it is argued that the failure probability is negligible for cases of interest where stabilizer fidelity is exponentially small in the number of qubits $n$. Let us unpack this argument by considering a specific case. Assume for the sake of argument that $\bkself{\Omega}$ is very close to 1, so that the expression in square brackets in \eqref{eq:tailbound} is  $\norm{\ket{\psi}-\ket{\Omega}}^2 \lesssim \delta^2$. Now suppose we fix target precision $\delta$, and we want to achieve success probability at least $p$. By rearranging \eqref{eq:tailbound} we see that this is possible only if the stabilizer fidelity satisfies:}
\begin{equation}
    F(\psi) \leq \frac{\delta^2}{8 \log\left(\frac{2}{1-p}\right)}. \label{eq:stabFidLimit}
\end{equation}
\jrs{To make this concrete, let us use the modest assumptions that we want trace norm error $\delta$ no larger than $10\%$, and success probability better than $1/2$. This can be achieved only when $F(\psi) \lesssim 0.0009$. Now consider the case where $\ket{\psi} = \ket{T}^{\otimes t}$, where $\ket{T} = (\ket{0} + e^{i \pi/4} \ket{1})/\sqrt{2}$. For $t$-fold tensor products of $m$-qubit states where $m\leq 3$, stabilizer fidelity is multiplicative \cite{bravyi2018simulation}, so that we have $F(\ket{T}^{\otimes t}) \approx (0.854)^t$. It follows that \eqref{eq:stabFidLimit} is satisfied for these parameters only when we have at least 45 copies of $\ket{T}$. If we want improved accuracy and success probability, the minimum value of $t$ needed to satisfy \eqref{eq:stabFidLimit} increases. Furthermore, the sparsification tail bound has the curious property that it seems to suggest worse performance for states containing less magic, as quantified by the stabilizer fidelity. For example, if instead of the $\pi/8$-state $\ket{T}$ we consider $t$-fold tensor products of the $\pi/32$-state $\ket{\pi/32}= (\ket{0} + e^{i \pi/16}\ket{1})/\sqrt{2}$, we must have at least $t \approx 1200$ before \eqref{eq:tailbound} gives a non-trivial lower bound on success probability. Therefore there is a large class of interesting intermediate-sized quantum circuits for which the BBCCGH sparsification tail bound cannot be applied.}

\jrs{Our improved sparsification results in Section \ref{sec:sparsification} sidestep these difficulties by considering the difference in the trace norm between $\op{\psi}$ and the \textit{ensemble} $\rho_1 = \sum_\Omega \Pr(\Omega) \frac{\op{\Omega}}{\bkself{\Omega}}$ from which sparsified vectors are drawn, rather than fidelity with any particular $\ket{\Omega}$. This allows us to implement classical bit-string sampling with a distribution $\delta$-close to the quantum distribution, even though any particular $\ket{\Omega}$ may not be a good approximation to $\ket{\psi}$. The key idea is that the measurement statistics on the ensemble $\rho_1$ mimic those on $\op{\psi}$; the comparison with any individual $\ket{\Omega}$ is unimportant in the context of bit-string sampling.}

\section{\jrs{Proof of ensemble sampling lemma}}
\label{app:tracenormlemma}
\jrs{Here we prove Lemma \ref{tracenormlemma},  the first of the two lemmata leading to our sparsification result. Given target state ${\ket{\psi} = \sum_j c_j \ket{\phi_j}}$, we need to prove that, for randomly generated sparse vectors $\ket{\Omega}=(\onenorm{c}/k)\sum_\alpha \ket{\omega_\alpha}$ output from \textsc{Sparsify} (Figure \ref{fig:sparsify}), where $\ket{\omega_\alpha}$ are stabilizer states randomly drawn from $\{(c_j/|c_j|)\ket{\phi_j}\}$, the following holds:}
\begin{equation}	
\label{RigorousSparseEq1App}
	\delta_S = \|  \rho_1  - \kb{\psi}{\psi}\|_1 \leq \frac{2 \|c\|_1^2}{k}  + \sqrt{ \mathrm{Var}[\bk{\Omega}{\Omega}] }.
\end{equation}
\jrs{Here $k$ is the number of terms in the sparsified vector $\ket{\Omega}$ and $\rho_1$ is the ensemble over all possible normalized $\ket{\Omega}$:}
\begin{equation}
		\rho_1 := \mathbb{E}\left[  \frac{\kb{\Omega}{\Omega}}{\bk{\Omega}{\Omega}}   \right] = \sum_{\Omega} \Pr(\Omega) \frac{\kb{\Omega}{\Omega}}{\bk{\Omega}{\Omega}}.\label{eq:expectedSparseAPP}
\end{equation}

First we introduce the operator
\begin{equation}
	\rho_2 =  \frac{1}{\mu} \mathbb{E} \left[  \kb{\Omega}{\Omega}   \right] ,
\end{equation}	
where $\mu=\mathbb{E}[\bk{\Omega}{\Omega}]$.   \jrs{Then using the triangle inequality,}
\begin{align}	
	\delta_S &=	\|  \rho_1  + \rho_2-\rho_2 - \kb{\psi}{\psi}\|_1 \\
    &\leq	\|  \rho_1  -\rho_2  \|_1 + \|\rho_2  - \kb{\psi}{\psi}\|_1 .\label{triangle}
\end{align}
Now,
\begin{align}
		\|  \rho_1  -\rho_2 \|_1 & = \|  \mathbb{E}\left[  \frac{\kb{\Omega}{\Omega}}{\bk{\Omega}{\Omega}}   \right] -\frac{ \mathbb{E} \left[  \kb{\Omega}{\Omega}   \right]}{\mu} \|_1 \\ 
		& = \|  \mathbb{E}\left[ \kb{\Omega}{\Omega} \left( \frac{1}{\bk{\Omega}{\Omega}} - \frac{1}{\mu}   \right)    \right]  \|_1 .	
\end{align}	
Using Jensen's inequality we can bring the expectation value outside the norm so that
\begin{align}
\nonumber
\|  \rho_1  -\rho_2 \|_1 & \leq \mathbb{E} \|  \left[ \kb{\Omega}{\Omega} \left( \frac{1}{\bk{\Omega}{\Omega}} - \frac{1}{\mu}   \right)    \right]  \|_1 ,	\\ 
\nonumber
	& =  \mathbb{E}  \left| \bk{\Omega}{\Omega} \left( \frac{1}{\bk{\Omega}{\Omega}} - \frac{1}{\mu}   \right)    \right|   \\ 
	& = \frac{1}{\mu}  \mathbb{E} |  \mu - \bk{\Omega}{\Omega}   |.\label{eq:jensen-outside-bound}
\end{align}	
That $\mu= \mathbb{E}[\bk{\Omega}{\Omega}]=1+(\|c\|_1^2-1)/k$ comes from Ref.~\cite{bravyi2018simulation}.  
Loosening \eqref{eq:jensen-outside-bound} with $\mu^{-1}\leq 1$ gives
\begin{align}
	\|  \rho_1  -\rho_2  \|_1 & \leq  \mathbb{E} |  \mu - \bk{\Omega}{\Omega}   | ,
\end{align}	
which is simply the average deviation of $\bk{\Omega}{\Omega}$ from the mean.  Using Jensen's inequality we get
\begin{align}
    \mathbb{E} |  \mu - \bk{\Omega}{\Omega}   | & \leq \sqrt{ \mathbb{E} |  \mu - \bk{\Omega}{\Omega}   |^2 } \\ \nonumber
    & = \sqrt{ \mathrm{Var}[ \bk{\Omega}{\Omega} ]},
\end{align}
and so
\begin{align} \label{Bound1}
    \|  \rho_1  -\rho_2  \|_1 & \leq   \sqrt{ \mathrm{Var}[ \bk{\Omega}{\Omega} ]} .
\end{align}

Next, we consider the term $\| \rho_2  - \kb{\psi}{\psi} \|_1$, by first finding an explicit form for $\rho_2$. Observe that
\begin{equation}
		\kb{\Omega}{\Omega} = \frac{\| c \|_1^2}{k^2} \sum_{\alpha, \beta}  \kb{\omega_\alpha}{\omega_\beta} ,
\end{equation}	
\jrs{recalling that $\ket{\omega_\alpha} = c_j\ket{\phi_j}/|c_j|$ with probability $|c_j|/\onenorm{c}$, so that $\mathbb{E}(\ket{\omega_\alpha})=\ket{\psi}/\onenorm{c}$ \cite{bravyi2018simulation}. T}aking the expectation value we have
\begin{equation}
\mathbb{E}(\kbself{\Omega}) = \mu \rho_2	 = \frac{\| c \|_1^2}{k^2} \sum_{\alpha, \beta} \mathbb{E}[ \kb{\omega_\alpha}{\omega_\beta} ], \label{eq:splitme}
\end{equation}
Let $\sigma:= \mathbb{E}[ \kb{\omega_\alpha}{\omega_\alpha} ]$.
We split \eqref{eq:splitme} into two summations as follows:
\begin{align}
	\mu \rho_2	& = \frac{\| c \|_1^2}{k^2} \left[\left(\sum_{\alpha \neq \beta} \mathbb{E}[ \kb{\omega_\alpha}{\omega_\beta} ]\right) + \left(\sum_{\alpha } \mathbb{E}[ \kb{\omega_\alpha}{\omega_\alpha} ]\right)\right], \\ 
		& = \frac{\| c \|_1^2}{k^2} \left[\left(\sum_{\alpha \neq \beta} \frac{\kb{\psi}{\psi}}{\|c\|_1^2} ]\right) + \left(\sum_{\alpha } \sigma\right)\right] , \\ \nonumber
		& = \frac{1}{k^2} \left(\sum_{\alpha \neq \beta} \kb{\psi}{\psi}\right) + \frac{\|c \|_1^2}{k^2} \sum_{\alpha } \sigma.	
\end{align}
In the first contribution, we used the independence of $\omega_\alpha$ and $\omega_\beta$ when $\alpha \neq \beta$ and $\mathbb{E}[\ket{\omega_\alpha}]=\ket{\psi}/ \|c\|_1$.  Next, there are $k(k-1)$ terms and $k$ terms in the first and second summations respectively, so that
\begin{align}
	\mu \rho_2	 	& = \left( 1 - k^{-1} \right) \kb{\psi}{\psi} + \frac{\|c \|_1^2}{k} \sigma .
\end{align}
%We remark that $\rho_2	\rightarrow \kb{\psi}{\psi} $ as $k \rightarrow \infty$.  However, samples from $\rho_2$ are not correctly normalized.  
Using this form for $\rho_2$, we have that 
\begin{align}
	\| \rho_2 - \kb{\psi}{\psi} \|_1	 	& =  \mu^{-1}	\| \mu \rho_2 - \mu \kb{\psi}{\psi} \|_1 , \\ \nonumber
 & = \mu^{-1}	 \|  ( 1 - k^{-1} - \mu) \kb{\psi}{\psi} + \frac{\|c \|_1^2}{k} \sigma   \|_1 .
\end{align}
Substituting in the value of $\mu$ we find $1 - k^{-1} - \mu = - \|c\|_1^2 / k$ and so
\begin{align} \label{Bound2}
	\| \rho_2 - \kb{\psi}{\psi} \|_1 & = \frac{\|c \|_1^2}{k \mu}	 \|    \sigma  -  \kb{\psi}{\psi}  \|_1 \leq 2\frac{\|c \|_1^2}{k} ,
\end{align}
where we have used the triangle inequality, $\| \sigma \|_1$ and ${\mu^{-1}\leq 1}$. Substituting Eq.~\eqref{Bound1} and Eq.~\eqref{Bound2} into Eq.~\eqref{triangle}, completes the proof of the lemma.

}%end \prx
\section{Sparsification variance bound}
\label{AppVarianceBound}

We now prove Lemma \ref{lem:variancelemma}, \jrs{the second lemma leading to} Theorem \ref{NewSparse3}. Recall that given a state $\ket{\psi} = \sum_j c_j \ket{\phi_j}$, where $\ket{\phi_j}$ are stabilizer states, we can obtain a sparsified $k$-term approximation given by:
\begin{equation}
    \ket{\Omega} = \frac{\onenorm{c}}{k} \sum^k_{\alpha=1} \ket{\omega_\alpha} \label{eq:OmegaAgain}
\end{equation}
where each $\ket{\omega_\alpha}$ is chosen randomly so that $\ket{\omega_\alpha} = (c_j/|c_j|)\ket{\phi_j}$ with probability $p_j = |c_j|/\onenorm{c}$. In general $\ket{\Omega}$ may not be conventionally normalized, but Lemma \ref{lem:variancelemma} upper bounds the variance of $\braketself{\Omega}$. We now prove Lem.~\ref{lem:variancelemma}. 

\begin{proof}[Proof of Lem.~\ref{lem:variancelemma}]
In Ref. ~\cite{bravyi2018simulation} it was shown that 
\begin{equation}
  \mu =  \mathbb{E}[\langle \Omega | \Omega \rangle] = \frac{\norm{c}_1^2}{k}  + \frac{\norm{c}_1^2}{k^2} \mathbb{E}\left(B\right), \label{eq:meanOmega}
\end{equation}
where $B = \sum_\alpha \sum_{\beta \neq \alpha} \langle\omega_\alpha|\omega_\beta\rangle$. 
Since $\ket{\omega_\alpha}$ and $\ket{\omega_\beta}$ are independently sampled for distinct $\alpha$ and $\beta$, we get
\begin{equation}
\mathbb{E}\left(\langle \omega_\alpha \vert \omega_\beta \rangle\right) = \mathbb{E}(\bra{\omega_\alpha}) \mathbb{E}(\ket{\omega_\beta}) = \frac{ \langle \psi \vert \psi \rangle}{\onenorm{c}^2}. \label{eq:alphabetaExpect}
\end{equation}
We use similar proof techniques to bound $\mathbb{E}[\langle \Omega \vert \Omega \rangle^2]$, and in turn bound the variance.
We begin with
\begin{align}
    \braketself{\Omega}^2 & = \frac{\norm{c}_1^4}{k^4}\left(\sum_{\alpha,\beta}\braket{\omega_\alpha}{\omega_\beta} \right)^2 , \\
                    & = \frac{\norm{c}_1^4}{k^4} \left(\sum_\alpha\left( \braketself{\omega_\alpha} + \sum_{\beta \neq \alpha}\braket{\omega_\alpha}{\omega_\beta}   \right)\right)^2 , \\
                    & = \frac{\norm{c}_1^4}{k^4} ( k + B)^2 , \\
                    & = \frac{\norm{c}_1^4}{k^4} ( k^2 + 2kB + B^2),
\end{align}
where in the second line we note that there are $k$ terms in the summation. Whereas from Eq.~\eqref{eq:meanOmega} we have
\begin{equation}
    \expected{\braketself{\Omega}}^2 = \frac{\norm{c}_1^4}{k^4} (k^2  + 2 k\mathbb{E}(B) +\mathbb{E}(B)^2) \label{eq:Squareofmean}.
\end{equation}
Comparing these expressions, for the variance we obtain
\begin{align}
    \VarOmega & = \expected{\braketself{\Omega}^2} -  \expected{\braketself{\Omega}}^2 , \\
    &= \frac{\norm{c}_1^4}{k^4} (\mathbb{E}(B^2) - \mathbb{E}(B)^2). \label{eq:twoVariances}
\end{align}
By counting terms in the summation $B$, and using the relation \eqref{eq:alphabetaExpect}, we find
\begin{equation}
    \mathbb{E}(B)^2 = \frac{k^2(k-1)^2}{\norm{c}_1^4} \label{eq:WSquareOfMean}.
\end{equation}
Expanding $B^2$, we get
\begin{align}
    B^2 &= \left(\sum_\alpha \sum_{\beta \neq \alpha} \braket{\omega_\alpha}{\omega_\beta}\right)\left(\sum_\lambda \sum_{\mu \neq \lambda} \braket{\omega_\lambda}{\omega_\mu}\right) \label{eq:Wexpansion}\\
        &= \sum_{(\alpha,\beta,\lambda,\mu) \in \mathcal{A}} \braket{\omega_\alpha}{\omega_\beta} \braket{\omega_\lambda}{\omega_\mu} + B'
\end{align}
where $\mathcal{A}$ denotes the set of all possible combinations $(\alpha,\beta,\lambda,\mu)$ where all four indices are distinct, and $B'$ denotes the remaining terms where at least two of the indices are the same. Now, if $(\alpha,\beta,\lambda,\mu)$ are all distinct, then $\braket{\omega_\alpha}{\omega_\beta}$ and $\braket{\omega_\lambda}{\omega_\mu}$ are independent random variables, so $\mathbb{E}(\braket{\omega_\alpha}{\omega_\beta} \braket{\omega_\lambda}{\omega_\mu} ) = \mathbb{E}(\braket{\omega_\alpha}{\omega_\beta})\mathbb{E}(\braket{\omega_\lambda}{\omega_\mu})$. This yields
\begin{equation}
    \mathbb{E}(B^2) = \frac{k(k-1)(k-2)(k-3)}{\norm{c}_1^4} + \mathbb{E}(B'). \label{eq:WmeanOfSquare}
\end{equation}
Substituting the expressions \eqref{eq:WSquareOfMean} and \eqref{eq:WmeanOfSquare} back into \eqref{eq:twoVariances}, we obtain
\begin{align}
    \VarOmega & = \frac{\norm{c}_1^4}{k^4}\mathbb{E}(B') - \frac{k(k-1)(4k-6)}{k^4}. \label{eq:nearlyVariance}
\end{align}
We must now consider terms $\braket{\omega_\alpha}{\omega_\beta} \braket{\omega_\lambda}{\omega_\mu}$ in the expansion of $B^2$ where $(\alpha,\beta,\lambda,\mu)$ are not all distinct. We use the notation $B_{j=k}$ to indicate the sum of all terms where indices $j$ and $k$ are equal but all others are distinct, e.g. $B_{\lambda=\alpha}  = \sum_{\alpha,\beta,\mu}  \braket{\omega_\alpha}{\omega_\beta} \braket{\omega_\alpha}{\omega_\mu} $, where the summation is over terms such that $\alpha$, $\beta$ and $\mu$ are all distinct, and so on. There are $k(k-1)(k-2)$ terms in each summation of this type. Similarly for the terms sharing two pairs of indices, we use the notation $B_{\lambda=\alpha;\mu = \beta} = \sum_{\alpha\neq\beta}  \braket{\omega_\alpha}{\omega_\beta}\braket{\omega_\alpha}{\omega_\beta}$. These summations comprise of $k(k-1)$ terms. 
From Eq.~\eqref{eq:Wexpansion}, we never have terms where $\alpha=\beta$ or $\lambda = \mu$. We can therefore write
\begin{align}
    B' &= B_{\lambda=\alpha} +  B_{\mu=\alpha} + B_{\lambda =\beta} + B_{\mu = \beta} \nonumber \\
    & \quad \quad + B_{\lambda=\alpha;\mu=\beta} + B_{\mu=\alpha;\lambda = \beta}.
\end{align}
One can check that $\mathbb{E}[B_{\lambda=\alpha}^*]=\mathbb{E}[B_{\mu=\beta}] $ and $\mathbb{E}[B_{\mu=\alpha}^*]=\mathbb{E}[B_{\lambda=\beta}] $.  Therefore 
\begin{align}
\expected{B'} &= 2 \re\{\expected{B_{\lambda = \beta}} +  \expected{B_{\mu=\beta}}\} \nonumber \\
    & \quad \quad \quad + \expected{B_{\lambda = \alpha;\mu=\beta} + B_{\mu=\alpha;\lambda=\beta}}. \label{eq:Bprime}
\end{align}
Next we note that
\begin{align}
		\mathbb{E}[B_{\lambda=\beta}] & = \sum_\alpha \sum_{\beta \neq \alpha } \sum_{\alpha \neq \mu \neq \beta} \mathbb{E}[  \bk{\omega_{\alpha}}{\omega_{\beta}}  \bk{\omega_{\beta}}{\omega_{\mu}} ] \\ \nonumber
			& = k(k-1)(k-2) \mathbb{E}[  \bra{\omega_{\alpha}}  ]   \mathbb{E}[  \kb{\omega_{\beta}}{\omega_{\beta}}  ] \mathbb{E}[  \ket{\omega_{\mu}}  ]  \\ 
				& = \frac{k(k-1)(k-2)}{\|c\|_1^2}  \bra{\psi}   \sigma \ket{\psi}, \label{eq:Bbetalambda}
\end{align}	 
where $\sigma = \mathbb{E}[  \kb{\omega_{\beta}}{\omega_{\beta}}] = \sum_j (|c_j|/\onenorm{c}) \kb{\phi_j}{\phi_j}$, since the probability of sampling $\kb{\omega_\beta}{\omega_\beta} = \kb{\phi_j}{\phi_j}$ is defined as $p_j = |c_j|/\onenorm{c}$. Next we consider $\expected{B_{\mu=\beta}}$. Taking the modulus and using the triangle inequality we obtain
\begin{align}
    |\expected{B_{\mu=\beta}}| &\leq \sum_\alpha \sum_{\beta \neq \alpha } \sum_{\alpha \neq \lambda \neq \beta} \expected{|\bk{\omega_\alpha}{\omega_\beta}\bk{\omega_\lambda}{\omega_\beta}|} \\
    	& = k(k-1)(k-2)  \mathbb{E}[  \bk{\omega_{\alpha}}{\omega_{\beta}}  \bk{\omega_{\beta}}{\omega_{\lambda}} ] \\
    	& = \frac{k(k-1)(k-2)}{\|c\|_1^2}  \bra{\psi}   \sigma \ket{\psi}. \label{eq:Bbetamu}
\end{align}
Similarly, for the last two terms ${B'' = B_{\lambda = \alpha;\mu=\beta} + B_{\mu=\alpha;\lambda=\beta}}$, we obtain
\begin{align}
    |\expected{B''}| & \leq \sum_{\alpha} \sum_{\beta \neq \alpha} \expected{| \bk{\omega_\alpha}{\omega_\beta}\bk{\omega_\alpha}{\omega_\beta}|} \\
    & \quad \quad +\sum_{\alpha} \sum_{\beta \neq \alpha} \expected{ \bk{\omega_\alpha}{\omega_\beta}\bk{\omega_\beta}{\omega_\alpha}} \\
    & = 2 \sum_{\alpha} \sum_{\beta \neq \alpha} \expected{ \bk{\omega_\alpha}{\omega_\beta}\bk{\omega_\beta}{\omega_\alpha}}.
\end{align}
Using cyclicity of the trace get
\begin{align}
    \expected{ \bk{\omega_\alpha}{\omega_\beta}\bk{\omega_\beta}{\omega_\alpha}} 
            & = \expected{\Tr[\bk{\omega_\alpha}{\omega_\beta}\bk{\omega_\beta}{\omega_\alpha}]} \\
            & = \Tr[\expected{\ket{\omega_\alpha}\bk{\omega_\alpha}{\omega_\beta}\bra{\omega_\beta}}] \\
            & = \Tr[\expected{\kbself{\omega_\alpha}}\expected{\kbself{\omega_\beta}}] \\
            & = \Tr[\sigma^2],
\end{align}
so that
\begin{equation}
    |\expected{B''}| \leq 2k(k-1) \Tr[\sigma^2] \leq 2k(k-1) \label{eq:Bprimeprime}.
\end{equation}
Combining the results \eqref{eq:Bprime}, \eqref{eq:Bbetalambda}, \eqref{eq:Bbetamu} and \eqref{eq:Bprimeprime} gives us
\begin{align}
    \expected{B'} & \leq 4\frac{k(k-1)(k-2)}{\|c\|_1^2}  \bra{\psi}   \sigma \ket{\psi} + 2k(k-1)  .
\end{align}
Writing 
\begin{equation}
\label{CliffordnessEquation}
    C = \onenorm{c}^2  \bra{\psi}   \sigma \ket{\psi} = \norm{c}_1 \sum_j |c_j| |\bk{\psi}{\phi_j}|^2
\end{equation}
and substituting the expression for $\mathbb{E}(B')$ into Eq.~\eqref{eq:nearlyVariance} we obtain
\begin{align} \label{varianceExact}
\VarOmega  \leq & 4\frac{k^3 - 3k^2 +2k}{k^4} C + 2\frac{\onenorm{c}^4}{k^2}\left(1 - \frac{1}{k} \right)  \\
                        & - \frac{4k^3 - 10 k^2 + 6k}{k^4} \nonumber,
\end{align}
which to leading order in $1/k$ is
\begin{align}
\VarOmega & \leq  \frac{4(C-1)}{k} + 2\left(\frac{\norm{c}_1^2}{k}\right)^2+ \mathcal{O}\left(\frac{C}{k^3}\right),
\end{align}
which gives us the general bound appearing in Lemma \ref{lem:variancelemma}. 

Clifford magic states were defined in Ref. \cite{bravyi2018simulation} as those pure states $\ket{\psi}$ that are stabilized by a group $\mathcal{Q}$ of Clifford unitary operators whose generators take the form $UX_jU^\dagger$, where $X_j$ is the Pauli $X$ operator that acts on the $j$-th qubit. For such states, there exists~\cite{bravyi2018simulation} an optimal decomposition 
\begin{equation}
    \ket{\psi} = \sum_{q \in \mathcal{Q}} c_q \ket{\phi_q} = \frac{1}{|\mathcal{Q}| \bk{\psi}{\phi_0}} \sum_{q \in \mathcal{Q}} q \ket{\phi_0},
\end{equation}
where $\ket{\phi_0}$ is some stabilizer state that achieves the maximize possible value for $|\bk{\psi}{\phi_0}|$. If we take this decomposition as the basis for our sparsification, then we have
\begin{equation}
\onenorm{c} = |\mathcal{Q}| \cdot (|\mathcal{Q}| |\bk{\psi}{\phi_0}|)^{-1} = |\bk{\psi}{\phi_0}|^{-1}
\end{equation}
and 
\begin{equation}
    \sigma = \sum_{q \in \mathcal{Q}} p_q q \kbself{\phi_0} q^\dagger,
\end{equation}
where $p_q = |\mathcal{Q}|^{-1}$. This yields 
\begin{align}
    \bra{\psi}   \sigma \ket{\psi} & = \sum_{q \in \mathcal{Q}} p_q \bra{\psi}q \kbself{\phi_0} q^\dagger \ket{\psi}  \\
                            & = \sum_{q \in \mathcal{Q}} p_q \bk{\psi}{\phi_0} \bk{\phi_0}{\psi}  \\
                            & = |\bk{\psi}{\phi_0}|^2 = \frac{1}{\onenorm{c}^2},
\end{align}
where in the second line we used the Hermiticity of $q$ and $q\ket{\psi} = \ket{\psi}$. This shows that for optimal decompositions of Clifford magic states, $C = 1$, and leads to the simplified bound
\begin{align}
\VarOmega & \leq  2 \left(\frac{\onenorm{c}^2}{k}\right)^2 + \frac{2}{k^3}.
\end{align}
\end{proof}
\jrs{Finally, we comment on the effect of the constant $C$ when $\ket{\psi}$ is not a Clifford magic state. Recall that $C$ can be written in terms of this expected overlap, $C = \onenorm{c}^2 \mathbb{E}\left[|\bk{\psi}{\omega}|^2\right]$, and enters into Thm. \ref{NewSparse3} via the critical precision $\delta_c = 8(C-1)/\onenorm{c}^2$.  Consider $\ket{\psi} = \ket{\psi'}^{\otimes N}$ where $\ket{\psi'}$ are pure states. 
When $\ket{\psi}$ is a product of $N$ pure states, we can write each randomly sampled stabilizer state as $\ket{\omega} = \bigotimes_{\alpha=1}^N \ket{\omega_\alpha}$, where $\ket{\omega_\alpha}$ are i.i.d. random vectors. It follows that $\mathbb{E}\left[|\bk{\psi}{\omega}|^2\right] = (\mathbb{E}\left[|\bk{\psi'}{\omega_\alpha}|^2\right])^N$.Since $\ket{\omega_\alpha}$ are always stabilizer states, when $\ket{\psi'}$ are non-stabilizer states, we have $|\bk{\psi'}{\omega_\alpha}|^2 < 1$. Therefore the threshold precision $\delta_c< 8C/\|c\|_1^2 = 8(\mathbb{E}\left[|\bk{\psi'}{\omega_\alpha}|^2\right])^N $ vanishes for large $N$ when $\ket{\psi}$ is a tensor product of $N$ pure states. Moreover, in Figure \ref{fig:Cliffordness} we plot values of $C$ for a class of single-qubit states, showing that $C-1$ is close to zero even when $N$ is not large.}

\begin{figure}[t]
    \centering
    \includegraphics{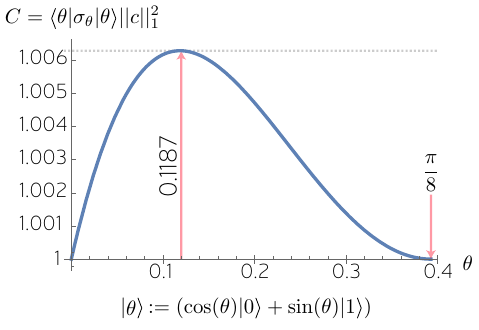}
    \caption{The variable $C$ as introduced in Eq.~\eqref{CliffordnessEquation} as a function of the angle $\theta$ for a class of single-qubit states.  This is the $C$ value for one copy of the state, for $n$ copies we must raise to the $n^{\mathrm{th}}$ power.  The prefactor $C-1$ appears in Eq.~\eqref{varianceExact} and is important because when $C=1$, the variance scales asymptotically as $\mathcal{O}(1/k^2)$.  We highlight two specific angles $\theta=\{ \pi / 8 , 0.1187 \}$ that correspond to angles used in Fig.~\ref{fig:DeltaScaling}.  For $\theta = \pi/ 8$, we have $C-1=0$ and so the $\mathcal{O}(1/k^2)$ is exact as can be seen in Fig.~\ref{fig:DeltaScaling}.  For $\theta = 0.1187$, we have the maximal possible value of $C$ and Fig.~\ref{fig:DeltaScaling} shows the maximal deviation from  $\mathcal{O}(1/k^2)$ scaling.}
    \label{fig:Cliffordness}
\end{figure}

\prx{%
\section{Bit-string sampling simulator technical details}
\label{app:bitsampling} 
\begin{figure}[t]
\begin{algorithm}[H]
\caption{Bit-string sampling for mixed states\label{algo:bitstring}}
\begin{algorithmic}[1]
  \algrenewcommand\algorithmicrequire{\textbf{Input:}}
    \algrenewcommand\algorithmicensure{\textbf{Output:}}
    \Require Decomposition $\rho = \sum_j p_j \op{\psi_j}$, where for each $\psi_j$ we have a known stabilizer state decomposition $\ket{\psi_j} = \sum_r c^{(j)}_r \ket{\phi^{(j)}_r}$. Real numbers $\delta_S$, $p_{\rm FN}$ and $\epsilon_{\rm FN}$. Number of bits $w$.
    \Ensure String $\vec{x}$ of length $w$, sampled from a distribution $P''(\vec{x}) = \sum_j p_j P_j(\vec{x})$, which approximates $P(\vec{x}) = \Tr(\Pi_{\vec{x}} \rho)$, where $\Pi_{\vec{x}} = \kb{\vec{x}}{\vec{x}} \otimes \id_{n-w} $.
    \State Select index $j$ with probability $p_j$. \label{step:sampleFromEnsemble}
    \State $k \gets \lceil 4 \norm{c^{(j)}}^2_1\delta_S^{-1}\rceil$ \label{step:kselect}
    \State $\ket{\Omega'} \gets$ \Call{Sparsify}{$\ket{\psi_j}$,$k$}
    \State $W \gets$ \Call{FastNorm}{$\ket{\Omega'}$,$p_{\rm FN}$,$\epsilon_{\rm FN}$} \label{step:FN1st}
    \State $\ket{\Omega} \gets \ket{\Omega'}\!/\sqrt{W}$
    \State $\vec{x} \gets ()$ (initialize empty string)
    \State $P_{\vec{x}} \gets 1 $
    \For{$b \gets 1$ \textbf{to} $w$}
        \State $P_{{(\vec{x},0)}}\gets$ \label{step:FN2nd} \Call{FastNorm}{$\Pi_{(\vec{x},0)}\ket{\Omega}$,$p_{\rm FN}$,$\epsilon_{\rm FN}$}
        \State $P(x_b = 0|\vec{x})  \gets P_{{(\vec{x},0)}}/P_{x}$ 
        \If{$P_{x_b = 0} < 1/2$} 
            \State $P(x_b = 1|\vec{x})  \gets 1 - P_{x_b = 0}$ 
            \Else
            \State $P_{{(\vec{x},1)}}\gets$ \Call{FastNorm}{$\Pi_{(\vec{x},1)}\ket{\Omega}$,$p_{\rm FN}$,$\epsilon_{\rm FN}$}\label{step:FN3rd}
            \State $P(x_b = 1|\vec{x})  \gets P_{{(\vec{x},1)}}/P_{\vec{x}}$
            \State $P(x_b = 0|\vec{x}) \gets 1 - P_{x_b = 1}$ 
        \EndIf
        \State Select $y \in \{0,1\}$ with probability $P(x_b = y|\vec{x})$, then $x_b \gets y$. \label{step:bitsampling}
        \State $P_{(\vec{x},x_b)} \gets P_{\vec{x}} \times P(x_b|\vec{x})$
        \State $\vec{x} \gets (\vec{x},x_b)$
    \EndFor
    \State \Return $\vec{x}$
\end{algorithmic}
\end{algorithm}
\vspace*{-\baselineskip}
\end{figure}

\jrs{In this appendix, we give full pseudocode for our bit-string sampling simulator (Algorithm \ref{algo:bitstring}), prove its validity as a method to classically emulate sampling from the quantum distribution $P(\vec{x})=\Tr[\Pi_{\vec{x}}\rho]$, and analyze its runtime. This constitutes a proof of Thm. \ref{thm:bit-string-sampling}.  As described in the main text, Algo.~\ref{algo:bitstring} draws bit strings $\vec{x}$ from a classical distribution $\Psim(\vec{x})$, using two subroutines from Ref. \cite{bravyi2018simulation}, \textsc{Sparsify} and \textsc{FastNorm}.  As sketched in the main text, our strategy is to define an idealized algorithm $\textsc{Exact}$ where calls to \textsc{FastNorm} are replaced by an oracle which can compute $\norm{\Pi_\vec{y} \ket{\Omega}}$ exactly for any un-normalized $\ket{\Omega}$ and bit string $\vec{y}$. The algorithm \textsc{Exact} draws from a distribution $\Pex(\vec{x})$. We first show that $\Pex$ is $\delta_S$-close to the quantum distribution $P$. We then argue that the distribution $\Psim$ that Algo. \ref{algo:bitstring} draws from is $\epsilon$-close to $\Pex$. Finally we optimize the choice of $\delta_S$ and $\epsilon$ and analyze the runtime.}

\jrs{\textsc{Exact} is identical to our Algo. \ref{algo:bitstring}, except where our algorithm estimates probabilities $\norm{\Pi_{\vec{y}}\ket{\Omega}}^2$ using \textsc{FastNorm}, \textsc{Exact} computes them exactly. Therefore \textsc{Exact} first samples a state $\ket{\psi_j}$ from the ensemble with probability $p_j$,
and chooses a sparsification $\ket{\Omega} = \ket{\Omega_{j,l}}$ with probability $q_{j,l} = \Pr(\Omega_{j,l}|\psi_j)$. Given the selected $\ket{\Omega}$, a bit string is sampled by choosing each bit in turn via a series of conditional probabilities:}
\begin{align}
    \Pr(\vec{x}|\Omega) & = \mathrm{Pr}(x_1) \mathrm{Pr}(x_2|\vec{x}_1) \ldots \mathrm{Pr}(x_w|\vec{x}_{w-1}) \label{eq:chainconditional} \\
            & = \frac{\norm{\Pi_\vec{x}\ket{\Omega}}^2}{\norm{\ket{\Omega}}^2}=\Tr\left[\Pi_\vec{x} \frac{\op{\Omega}}{\bk{\Omega}{\Omega}}\right]. 
\end{align}
\jrs{Here we use the notation $\vec{x}_m$ to denote the string comprised of the first $m$ bits of $\vec{x}$, so that $\Pi_{\vec{x}_m} = \bigotimes_{j=1}^m \kbself{x_j} \otimes \id_{n-m}$. We take $\vec{x}_0$ to be the empty string, so that $\Pi_{\vec{x}_0} = \id$. The probability of choosing $y\in{0,1}$ for the $m$-th bit, given $m-1$ bits already sampled, is computed as:}
\begin{equation}
    \Pr(y|\vec{x}_{m-1})=\norm{\Pi_{(x_1,\ldots,x_{m-1},y)}\ket{\Omega}}^2/\norm{\Pi_{\vec{x}_{m-1}}\ket{\Omega}}.
\end{equation} 
\jrs{Thus \textsc{Exact}
outputs bit strings \textsc{x} sampled from a distribution:}
 \begin{align}
    \Pex(\vec{x}) & = \sum_j \sum_l p_j q_{j,l}\frac{\norm{\Pi_{\vec{x}}\ket{\Omega_{j,l}}}^2}{\norm{\ket{\Omega_{j,l}}}^2} \label{eq:PprimeDecomp}\\ \nonumber
                & = \sum_j \sum_l p_j q_{j,l} \frac{\Tr[\Pi_{\vec{x}}\op{\Omega_{j,l}}]}{\braketself{\Omega_{j,l}}}  \\ \nonumber
                & = \Tr\left[ \Pi_{\vec{x}}\sum_j p_j \sum_l q_{j,l} \frac{\op{\Omega_{j,l}}}{\braketself{\Omega_{j,l}}} \right] \\ \nonumber
                & = \Tr\left[\Pi_{\vec{x}} \sum_j p_j \mathbb{E}\left(\frac{\op{\Omega_j}}{\braketself{\Omega_j}}\right)\right] = \Tr[\Pi_{\vec{x}} \rho'],
\end{align}
where $\rho' = \sum_j p_j \rho_1^{(j)}$, and each $\rho_1^{(j)}$ given by:
\begin{equation}
		\rho_1^{(j)} :=  \sum_{\Omega} \Pr(\Omega|\psi_j) \frac{\kb{\Omega}{\Omega}}{\bk{\Omega}{\Omega}}.
\end{equation}
\jrs{In other words $\rho_1^{(j)}$ is the expected sparsification given target pure state $\ket{\psi_j}$, as defined in Eq.~\eqref{eq:expectedSparse}. In step \ref{step:kselect}, $k$ is chosen so that by Thm. \ref{NewSparse3}, we 
have $\norm{\rho_1^{(j)}- \op{\psi_j}}_1 \leq \delta_S + \mathcal{O}(\delta_S^2)$, provided $\delta_S \geq \delta_c$, where $\delta_c$ is the critical precision. We will return to the $\delta_S < \delta_c $ case at the end of this appendix.} By the triangle inequality we have
\begin{align}
    \norm{\rho' - \rho }_1 & = \norm{\sum_j p_j \rho_1^{(j)} - \sum_j p_j \op{\psi_j}}_1 \\
                        & \leq \sum_j p_j \norm{\rho_1^{(j)}- \op{\psi_j}}_1 \\
                        & \leq \sum_j p_j [\delta_S +  \mathcal{O}(\delta_S^2)] = \delta_S  + \mathcal{O}(\delta_S^2).
\end{align}
\jrs{Since $\Pex(\vec{x}) = \Tr[\Pi_{\vec{x}} \rho']$ and for the quantum distribution we have $P(\vec{x}) = \Tr[\Pi_{\vec{x}} \rho']$, It follows that $\onenorm{\Pex - P} \leq  \delta_S  + \mathcal{O}(\delta_S^2)$.}

It remains to show that \jrs{ using a sequence of calls to \textsc{FastNorm}, Algo. \ref{algo:bitstring} generates probability distributions $\Psim(x)$ that well approximate $\Pex(\vec{x})$, where}
\begin{equation}
    \Psim(\vec{x}) = \sum_j p_j q_{j,l} Q_{j,l}(\vec{x}) . \label{eq:Pdoubleprime}
\end{equation}
Here each $Q_{j,l}(\vec{x})$ is the probability of Algo.~\ref{algo:bitstring} returning $\vec{x}$ given the sparsification $\ket{\Omega_{j,l}}$. We now drop the subscript as we consider a single sparsification $\ket{\Omega}$. \jrs{Recall that \textsc{FastNorm} takes as input error parameters $p_{\rm FN}$ and $\epsilon_{\rm FN}$, and  un-normalized vectors $\Pi_ \vec{y}\ket{\Omega}$ with known $k$-term stabilizer decomposition. Then with probability $(1-p_{\rm FN})$ it outputs a random variable $\eta$ that approximates $\norm{\Pi_{\vec{y}}\ket{\Omega}}^2$ to within a multiplicative error of $\epsilon_{\rm FN}$:}
\begin{equation}
        (1 - \epsilon_{\rm FN}) \norm{\Pi_{\vec{y}}\ket{\Omega}}^2 \leq \eta \leq(1 + \epsilon_{\rm FN}) \norm{\Pi_{\vec{y}}\ket{\Omega}}^2.
\end{equation}
\jrs{Algo. \ref{algo:bitstring} approximates the chain of conditional probabilities \ref{eq:chainconditional} by calls to \textsc{FastNorm}. The probability of choosing $y\in\{0,1\}$ for the $m$-th bit of \vec{x}, conditioned on the first $m-1$ bits being $\vec{x}_{m-1}$ is therefore bounded as:}
\begin{equation}
     \epsilon_-\frac{\norm{\Pi_{(\vec{x}_{m-1},y)} \ket{\Omega}}^2}{\norm{\Pi_{\vec{x}_{m-1}}\ket{\Omega}}^2} \leq  \Pr(y|\vec{x}_{m-1}) \leq \epsilon_+\frac{\norm{\Pi_{(\vec{x}_{m-1},y)} \ket{\Omega}}^2}{\norm{\Pi_{\vec{x}_{m-1}}\ket{\Omega}}^2},
     \nonumber
\end{equation}  
with probability $(1 - p_{\rm FN})^2$, where
\begin{equation}
\epsilon_{\pm} = \frac{1\pm\epsilon_{\rm FN}}{1\mp \epsilon_{\rm FN}}.
\end{equation}
So, given a particular sparsification $\ket{\Omega}$, the $w$-bit string $\vec{x}$ is sampled from a distribution $Q(\vec{x})$ which satisfies
\begin{equation}
    \prod_{m=1}^w \epsilon_-\frac{\norm{\Pi_{\vec{x}_{m}} \ket{\Omega}}^2}{\norm{\Pi_{\vec{x}_{m-1}}\ket{\Omega}}^2} \leq Q(\vec{x}) \leq \prod_{m=1}^w \epsilon_+\frac{\norm{\Pi_{\vec{x}_{m}} \ket{\Omega}}^2}{\norm{\Pi_{\vec{x}_{m-1}}\ket{\Omega}}^2}
    \nonumber
\end{equation}
with probability at least $(1 - p_{\rm FN})^{2w}$. This simplifies to
\begin{equation}
   \frac{(1-\epsilon_{\rm FN})^w\norm{\Pi_{\vec{x}} \ket{\Omega}}^2}{(1+\epsilon_{\rm FN})^w \norm{ \ket{\Omega}}^2} \leq Q(\vec{x}) \leq \frac{(1+\epsilon_{\rm FN})^w\norm{\Pi_{\vec{x}} \ket{\Omega}}^2}{(1-\epsilon_{\rm FN})^w \norm{ \ket{\Omega}}^2}.\label{eq:algoProbBounds}
\end{equation}
One can check that $(1+\epsilon_{\rm FN})^w/(1-\epsilon_{\rm FN})^w \leq 1 + 3w\epsilon_{\rm FN}$, whenever $\epsilon_{\rm FN} \leq 1/5$, and the analogous result holds for the lower bound. Therefore $Q_{j,l}(\vec{x})$ approximates ${\norm{\Pi_{\vec{x}}\ket{\Omega_{j,l}}}^2}/{\norm{\ket{\Omega_{j,l}}}^2}$ up to multiplicative error $3w\epsilon_{\rm FN}$. Comparing \eqref{eq:PprimeDecomp} with \eqref{eq:Pdoubleprime}, we therefore obtain:
\begin{equation}
    (1 - 3 w\epsilon_{\rm FN})\Pex(\vec{x}) \leq \Psim(\vec{x}) \leq (1 + 3 w\epsilon_{\rm FN}) \Pex(\vec{x})
\end{equation}

If we want to bound the total multiplicative error due to the sequence of calls to \textsc{FastNorm} to $\epsilon$, then we must set $\epsilon_{\rm FN} = \epsilon/(3w)$. It then follows that 
\begin{equation}
    \onenorm{\Psim - \Pex} \leq \epsilon.
    \label{eq:L1normcloseFN}
\end{equation}
In the first part of the proof we showed that $\onenorm{\Pex - P}\leq \delta_S + O(\delta_S^2)$ (provided we are above the critical precision threshold $\delta_c$). Combined with Eq. \eqref{eq:L1normcloseFN}, we obtain 
\begin{equation}
    \onenorm{\Psim - P} \leq \epsilon + \delta_S +  \mathcal{O}(\delta_S^2),
\end{equation}
where $P(\vec{x}) = \Tr[\Pi_{\vec{x}}\rho]$.

Similarly the error bound given above is only obtained with probability $(1-p_{\rm FN})^{2w} \approx 1 - 2wp_{\rm FN}$, so to obtain the above closeness in $\ell_1$-norm, with failure probability at most $p_\mathrm{fail}$, we must set $p_{\rm FN} = p_\mathrm{fail} / (2w)$. 
If we select the state $\ket{\psi_j}$ in step \ref{step:sampleFromEnsemble}, then $k\leq 4 \norm{c^{(j)}}_1^2\delta_S^{-1} + 1 $. To return a single bit-string $\vec{x}$ there are at most $2w$ calls to \textsc{FastNorm}, so the runtime is $\mathcal{O}(wkn^3\epsilon_{\rm FN}^{-2} \log p_{\rm FN}^{-1})= \mathcal{O}(w^3n^3\norm{c^{(j)}}_1^2 \delta_S^{-1}\epsilon^2\log(w/p_{\mathrm{fail}})  )$. 
Recall that the statement of the theorem defined the quantity ${\widetilde{\Xi}= \sum_j p_j \onenorm{c^{(j)}}^2}$\jrs{, so that the time $T$ to obtain a single bit string is non-deterministic. The \emph{expected} (average-case) runtime is $\mathcal{O}(w^3n^3\widetilde{\Xi} \delta_S^{-1}\epsilon^2\log(w/p_{\mathrm{fail}})  )$. }
If the decomposition is optimal with respect to the monotone $\Xi$, then we have $\widetilde{\Xi} = \Xi(\rho)$ and the \jrs{average-case} runtime is $\mathcal{O}(\Xi(\rho))$. For equimagical states, $\Xi(\rho) = \xi(\psi_j)$ for all $j$, and this expression becomes the \jrs{worst-case} runtime.

We now optimize the choice of $\delta_S$ and $\epsilon$. Setting the total error budget $\delta = \delta_S + \epsilon$, by inspecting the runtime we find that the best constant is obtained by setting $\delta_S = \delta/3$ and $\epsilon = 2\delta/3$. The constraint $\delta_S \geq 8D$ therefore becomes $\delta \geq 24D$. Substituting the optimal choice of $\delta_S $ and $\epsilon$ into the expected runtime, we obtain
    \begin{equation}
        \mathbb{E}(T) = \mathcal{O}(w^3 n^3\widetilde{\Xi} \delta^{-3}\log(w/p_{\mathrm{fail}})).
    \end{equation}

\jrs{The above holds for the case where the sparsification error $\delta_S $ is no smaller than a critical value $\delta_c = 8(C_j-1)/ \|c^{(j)}\|_1^2$, where $C_j =\norm{c^{(j)}}_1 \sum_r |c_r| |\bk{\psi_j}{\phi_r}|^2$ is defined for the randomly chosen pure state $\ket{\psi_j}$. Therefore, to ensure we are above the critical error regime for any $\ket{\psi_j}$, we can require that $\delta_S \geq 8D$, where $D = \mathrm{max}\{(C_j - 1)/\onenorm{c^{(j)}}^2\}$. This entails $\delta \geq 24D$ for the overall precision. }

Now suppose that we want to achieve arbitrary precision, \jrs{$\delta < 24D$}. In this regime, one can amend the expression for $k$ in step \ref{step:kselect} to achieve any desired precision, at the cost of slightly poorer scaling in the runtime.  We first use lemmata \ref{tracenormlemma} and \ref{lem:variancelemma} to obtain a sharpened bound on the sparsification error:
\begin{equation}
    \delta_S \leq 2 \frac{\onenorm{c^{(j)}}^2}{k} + \sqrt{\frac{\|c^{(j)}\|_1^2}{k}}\sqrt{4D + 2 \frac{\|c^{(j)}\|_1^2}{k} + \mathcal{O}\left(\frac{1}{k^2}\right)}.
\end{equation}
When $\delta_S \ll 8D$, we can achieve a precision of $\delta_S$ by choosing
\begin{equation}
    k \approx 4 \|c^{(j)}\|_1^2 \left(\frac{D}{\delta_S^2} + \frac{1}{\delta_S} \right)+ \mathcal{O}(1).
\end{equation}
Substituting the revised expression for $k$ into the expected runtime, with $\delta_S = \delta/3$ and $\epsilon = 2\delta/3$, we obtain:
    \begin{equation}
        \mathbb{E}(T) = \mathcal{O}(w^3 n^3\widetilde{\Xi} (\delta^{-3} + 3D\delta^{-4})\log(w/p_{\mathrm{fail}})).
    \end{equation}
    
Here we recover the same asymptotic $\delta^{-3}$ scaling as derived from the original BBCCGH sparsification lemma \cite{bravyi2018simulation}.
However, the prefactor from this prior work was two, whereas our prefactor $D$ is typically exponentially small in the number of qubits (see Appendix \ref{AppVarianceBound}). Therefore, at intermediate precision, the $\delta^{-4}$ term may still dominate.
When the target precision $\delta$ is too small, our bound on the required $k$ exceeds the number of terms in the exact decomposition of $\ket{\psi}$ (i.e. the decomposition achieving the stabilizer rank $\chi(\psi)$). In this scenario, using a sparsified approximation in both our approach and in \cite{bravyi2018simulation} has no benefit, 
and one should instead use an exact decomposition without any sparsification.

}%end \prx

\end{document}